\newif\iflncs
\def\proofiness/{{\noindent\color{darkgray}\sffamily\bfseries Proof.}}
    \newcommand{\href}[2]{#2}
\newif\ifabstract
\newif\iffull
\newtoks\magicAppendix
\newtoks\magictoks
\newif\iflater
\long\def\later#1{\magictoks={#1}%
  \edef\magictodo{\noexpand\magicAppendix={\the\magicAppendix \par
    \the\magictoks%
  }}
  \magictodo}
\long\def\both#1{\magictoks={#1}%
  \edef\magictodo{\noexpand\magicAppendix={\the\magicAppendix \par
    \noexpand\setcounter{theorem-preserve}{\noexpand\arabic{theorem}}%
    \noexpand\setcounter{theorem}{\arabic{theorem}}%
    \noexpand\setcounter{section-preserve}{\noexpand\arabic{section}}%
    \noexpand\setcounter{section}{\arabic{section}}%
	\noexpand\let\noexpand\oldsection=\noexpand\thesection
	\noexpand\def\noexpand\thesection{\thesection}
	\noexpand\let\noexpand\oldlabel=\noexpand\label
	\noexpand\let\noexpand\label=\noexpand\blank
    \the\magictoks%
    \noexpand\setcounter{theorem}{\noexpand\arabic{theorem-preserve}}%
    \noexpand\setcounter{section}{\noexpand\arabic{section-preserve}}%
	\noexpand\let\noexpand\thesection=\noexpand\oldsection
	\noexpand\let\noexpand\label=\noexpand\oldlabel
  }}
  \magictodo
  \the\magictoks}
\long\def\later#1{#1}
\long\def\both#1{#1}
\long\def\magicappendix{
	\latertrue%
	\the\magicAppendix%
}
\def\rtam/{{RTAM}}
\begin{document}

\title{Reflections on Tiles (in Self-Assembly)}

\author{
  Jacob Hendricks%
    \thanks{{Department of Computer Science and Computer Engineering, University of Arkansas,
        \protect\url{jhendric@uark.edu}.
        Supported in part by National Science Foundation Grant CCF-1117672 and CCF-1422152.}}
\and
  Matthew J. Patitz%
    \thanks{Department of Computer Science and Computer Engineering, University of Arkansas,
      \protect\url{patitz@uark.edu}.
      Supported in part by National Science Foundation Grant CCF-1117672 and CCF-1422152.}
\and
  Trent A. Rogers%
    \thanks{Department of Mathematical Sciences, University of Arkansas,
        \protect\url{tar003@email.uark.edu}.
        This author's research was supported by the National Science Foundation Graduate Research Fellowship Program under Grant No. DGE-1450079, and National Science Foundation grants CCF-1117672 and CCF-1422152.}
}
\date{}
\institute{}

\maketitle

\begin{abstract}%
We define the Reflexive Tile Assembly Model (\rtam/), which is obtained from the abstract Tile Assembly Model (aTAM) by allowing tiles to reflect across their horizontal and/or vertical axes.  We show that the class of directed temperature-$1$ \rtam/ systems is not computationally universal, which is conjectured but unproven for the aTAM, and like the aTAM, the \rtam/ is computationally universal at temperature $2$.  We then show that at temperature $1$, when starting from a single tile seed, the \rtam/ is capable of assembling $n \times n$ squares for $n$ odd using only $n$ tile types, but incapable of assembling $n \times n$ squares for $n$ even. Moreover, we show that $n$ is a lower bound on the number of tile types needed to assemble $n \times n$ squares for $n$ odd in the temperature-$1$ \rtam/. The conjectured lower bound for temperature-$1$ aTAM systems is $2n-1$. Finally, we give preliminary results toward the classification of which finite connected shapes in $\Z^2$ can be assembled (strictly or weakly) by a singly seeded (i.e. seed of size $1$) \rtam/ system, including a complete classification of which finite connected shapes be strictly assembled by a \emph{mismatch-free} singly seeded \rtam/ system.
\end{abstract}

\section{Introduction}\label{sec:intro}

Self-assembly is the process by which disorganized components autonomously combine to form organized structures. In DNA-based self-assembly, the combining ability of the components is implemented using complementary strands of DNA as the ``glue''. In \cite{Winf98}, Winfree introduced a useful mathematical model of self-assembling systems called the abstract Tile Assembly Model (aTAM) where the autonomous components are described as square tiles with specifiable glues on their edges and the attachment of these components occurs spontaneously when glues match. The aTAM provides a convenient way of describing self-assembling systems and their resulting assemblies, and serves as the underpinning of many studies of the properties of self-assembling systems.
For a comprehensive survey of tile-based self-assembly including models other than the aTAM, see \cite{PatitzSurvey,DotCACM}.

From the broad collection of results in the aTAM, one property of systems that has been shown to yield enormous power is \emph{cooperation}.  The notion of cooperation captures the phenomenon where the attachment of a new tile to a growing assembly requires it to bind to more than one tile (usually 2) already in the assembly. The requirement for cooperation is determined by a system parameter known as the \emph{temperature}, and when the temperature is equal to $1$ (a.k.a. temperature-$1$ systems), there is no requirement for cooperation.  A long-standing conjecture is that temperature-$1$ aTAM systems are in fact not capable of universal computation or efficient shape building, although it is well-known that temperature $\ge 2$ systems are.
However, in actual laboratory implementations of DNA-based tiles \cite{RothTriangles,OrigamiSeed,SchWin07,MaoLabReiSee00,WinLiuWenSee98}, the self-assembly performed by temperature-$2$ systems does not match the error-free behavior dictated by the aTAM, but instead, a frequent source of errors is the binding of tiles using only a single bond. Thus, temperature-$1$ behavior erroneously occurs and cannot be completely prevented.

Many models of self-assembly can be thought of as extensions of the aTAM (e.g. ~\cite{GeoTiles,DotKarMasNegativeJournal,OneTile,Polyominoes,SingleNegative,CookFuSch11}), and for these models it is common to study the added power that an extra property or constraint gives the extended model. For example, in~\cite{CookFuSch11,DotKarMasNegativeJournal,Polyominoes,SingleNegative}, it is shown that at temperature $1$, when the aTAM is appropriately extended, the resulting models are computationally universal and capable of efficiently assembling shapes.
In this paper, we take the opposite approach and remove a constraint that the aTAM imposes with the goal of modelling physical systems that may be incapable of enforcing these constraints. Tiles in the aTAM are not allowed to flip or rotate prior to attachment to an existing assembly. While this  assumption is a realistic one for many implementations
of DNA-based tiles (e.g. \cite{Winf98}), for certain implementations (e.g. ~\cite{NBlocks}), it is unknown whether or not both of the conditions of this assumption can be physically enforced. (See~\cite{han2013dna,ke2012three,pinheiro2011challenges,C2CC37227D} for more experimentally produced building blocks and systems.) When DNA is used as the binding agent, single stranded DNA can be used to prevent relative tile rotation by encoding a direction (north/south or east/west) in the DNA sequence so that only strands with appropriately matching directions are complementary; on the other hand, preventing tiles from flipping may not always be possible, especially if the glues of different sides of a tile are encoded by disjoint DNA complexes.
Therefore, we consider a model based on the aTAM where tiles may nondeterministically flip horizontally and/or vertically prior to attachment.

We introduce the \emph{Reflexive Tile Assembly Model} (\rtam/), which can be thought of as the aTAM with the relaxed constraint that tiles in the \rtam/ \emph{are} allowed to flip horizontally and/or vertically. Also, unlike most formulations of the aTAM where complementary strands of DNA are represented with the same glue label, the \rtam/ explicitly uses complementary glues.  This importantly prevents copies of tiles of the same type from being able to flip and bind to each other, and is the actual reality with DNA-based tiles. We then show a series of results within the \rtam/.  First we show that at temperature $1$, the class of directed \rtam/ systems -- systems which yield a single pattern up to reflection and ignoring tile orientation -- are only capable of assembling patterns that are essentially periodic. Then, following the thesis set forth in~\cite{jLSAT1}, we conclude that the temperature-1 \rtam/ is not computationally universal.  While the inability of temperature-$1$ aTAM systems to compute is still only conjectured, we are able to conclusively prove it for \rtam/ systems, specifically by using techniques developed in \cite{jLSAT1} to study temperature-$1$ aTAM systems.  We also show that like the aTAM at temperature $2$, the class of directed temperature-$2$ \rtam/ systems is computationally universal. This shows a fundamental dividing line between the powers of \rtam/ temperature-$1$ and temperature-$2$ directed systems.  We then turn our attention to the self-assembly of squares by singly seeded temperature-$1$ \rtam/ systems where we show that for even values of $n \in \mathbb{N}$ it is impossible to self-assemble any $n \times n$ square.  This is exceptional due to the fact that it is the first demonstration of a model of tile assembly in which a finite shape is proven the be impossible to self-assemble in a directed system.  Typically, any finite shape can be self-assembled by a trivial system in which a unique tile type is created for each point of the shape.  However, due to the ability of tiles in the \rtam/ to flip, it is not possible for the \rtam/ systems to effectively constrain the reflections of tiles to produce such even squares without the possibility of tiles growing beyond the boundaries of the squares.  However, for odd values of $n$ and $m$ any $n \times m$ rectangle can be self-assembled using only $\frac{n+m}{2}$ tile types, thus implying that for $n$ odd, an $n\times n$ square can be self-assembled using only $n$ tile types. In addition, we also show that for $n$ odd, an $n\times n$ square cannot be self-assembled using less than $n$ tile types (thus, $n$ is the upper and lower bound for square assembly). This is in contrast to the aTAM at temperature $1$, where the conjectured lower bound for assembling an $n \times n$ square is $2n-1$, and hints that in certain situations the ability of \rtam/ tiles to attach in flipped orientations can be effectively harnessed to more efficiently build shapes than systems in the aTAM.  Finally, we give preliminary results toward the classification of the finite connected shapes in $\Z^2$ that can be assembled (strictly or weakly) by a singly seeded \rtam/ system, including a complete classification of which finite connected shapes be strictly assembled by a \emph{mismatch-free} singly seeded temperature-1 \rtam/ system. We also show that arbitrary shapes with scale factor $2$ can be assembled in the singly seeded temperature-1 \rtam/.  These combined results show that the ability of tiles to bind in flipped orientations is sometimes provably limiting, while at other times can provide advantages, and they provide a solid framework for the study of self-assembling systems composed of molecular building blocks unable to enforce the constraints of the aTAM.

The layout of this paper is as follows.  In Section~\ref{sec-rtam-informal} we present a high-level definition of the \rtam/.  (Due to space constraints, a more technically detailed definition can be found in Section~\ref{sec:RTAM-def-formal} of the appendix, and so can the full proofs of each result.)  Section~\ref{sec:computation} contains the proof that temperature-$1$ \rtam/ systems cannot perform universal computation and that temperature-$2$ systems can.  In Section~\ref{sec:shapes} we present our results related to the self-assembly of shapes in the \rtam/, including our results about assembling squares and classifying the finite connected shapes that self-assemble in the \rtam/.

\iffull
\section{Preliminaries}
\fi
\ifabstract
\section{Definition of the Reflexive Tile Assembly Model}
\fi

\label{sec-rtam-informal}

\iffull
\subsection{Definition of the Reflexive Tile Assembly Model}\label{sec:RTAM-def-informal}
\fi

The Reflexive Tile Assembly Model (RTAM) is essentially equivalent to the abstract Tile Assembly Model (aTAM) \cite{Winf98,RotWin00,Roth01,jSSADST} but with the modification that tiles are allowed to possibly ``flip'' across their horizontal and/or vertical axes before attaching to an assembly.  Also, as in some formulations of the aTAM, it is assumed that glues bind to complementary versions of themselves (so that two tiles of the same type which are flipped relative to each other can't simply bind to each other along the same but reflected side).  We now give a brief definition of the RTAM. See Section~\ref{sec:RTAM-def-formal} for more detailed definitions. Our notation is similar (and where appropriate, identical) to that of \cite{jSSADST}.

\ifabstract
We work in the $2$-dimensional discrete space $\Z^2$. Define the set
$U_2 = \{(0,1),\allowbreak (1,0),\allowbreak (0,-1),\allowbreak (-1,0)\}$ to be the set of all
\emph{unit vectors} in $\mathbb{Z}^2$.
We also sometimes refer to these vectors by their
cardinal directions $N$, $E$, $S$, $W$, respectively.
All \emph{graphs} in this paper are undirected.
A \emph{grid graph} is a graph $G =
(V,E)$ in which $V \subseteq \Z^2$ and every edge
$\{\vec{a},\vec{b}\} \in E$ has the property that $\vec{a} - \vec{b} \in U_2$.
Intuitively, a tile type $t$ is a unit square that can be translated and flipped across its vertical and/or horizontal axes, but not rotated.  This provides each tile type with a pair of North-South ($NS$) sides and a pair of East-West ($EW$) sides, such that either side $s \in NS$ may be facing north while the other is facing south (and vice versa for the $EW$ glues).  For ease of discussion, however, we will talk about tile types as being defined in fixed orientations, but then allow them to attach to assemblies in possibly flipped orientations. 

Fix a finite set $T$ of tile types.
Each side of a tile $t$ in $T$
has a ``glue'' with ``label'' --a string
over some fixed alphabet--and ``strength'' --a nonnegative integer--specified by its type
$t$. Let $R = \{D,V,H,B\}$ be the set of permissible reflections for a tile which is assumed to begin in the default orientation, where $D$ corresponds to no change from the default, $V$ a single vertical flip (i.e. a reflection across the $x$-axis), $H$ a single horizontal flip (i.e. a reflection across the $y$-axis), and $B$ a single horizontal flip and a single vertical flip.  
See Figure~\ref{fig:tile-flips-informal} for an example of each. It is important to note that a glue does not have any particular orientation along the edge on which it resides, and so remains unchanged throughout reflections.

\begin{figure}[htp]
\begin{center}
   \includegraphics[width=1.6in]{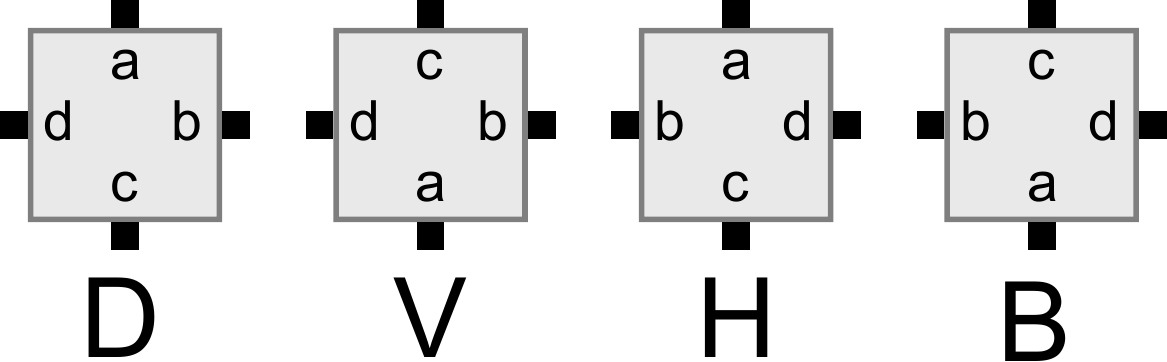}
\caption{Left to right:  (1) Default orientation of an example tile type $t$, (2) $t$ flipped vertically, (3) $t$ flipped horizontally, (4) $t$ flipped across both axes}
\label{fig:tile-flips-informal}
\end{center}
\end{figure}

Glues on adjacent edges of two tiles may bind iff they have complementary labels and the same strength.
An \emph{assembly} is a partial
function $\pfunc{\alpha}{\Z^2}{T \times R}$ defined on at least one input, with points $\vec{x}\in\Z^2$ at
which $\alpha(\vec{x})$ is undefined interpreted to be empty space,
so that $\alpha$ is the set of points with \emph{oriented} tiles. Two assemblies
$\alpha$ and $\beta$ are equivalent iff one of them can be flipped and translated so that they perfectly match at all locations. Each assembly $\alpha$ induces a \emph{binding graph}, a grid graph whose vertices are positions occupied by tiles, according to $\alpha$, with an edge between two vertices if the tiles at those vertices have complementary glues of equal strength.  Then, for some $\tau\in \N$, an assembly $\alpha$ is {\it $\tau$-stable} if every cut of the binding graph of $\alpha$ has weight at least $\tau$, where the weight of an edge is the strength of the glue it represents. When $\tau$ is clear from context, we say $\alpha$ is \emph{stable}.
For a tile set $T$, we let $p_T : T \times R \to T$ be the projection map onto $T$ (i.e. $p_T((t,r)) = t$). A \emph{configuration} given by an assembly $\alpha$ is defined to be the map from $\Z^2$ to $T$ given by $p_T\circ \alpha$.

Self-assembly begins with a {\it seed assembly} $\sigma$, in which each tile has a specified and fixed orientation, and
proceeds asynchronously and nondeterministically, with tiles in any valid reflection in $R$
adsorbing one at a time to the existing assembly in any manner that
preserves $\tau$-stability at all times.  A {\it reflexive tile assembly system}
({\it RTAS} or just {\it TAS} when the context is clear) is an ordered triple $\mathcal{T} = (T, \sigma, \tau)$,
where $T$ is a finite set of tile types, $\sigma$ is a seed assembly
with finite domain in which each tile is given a fixed orientation, and $\tau \in \N$ is the \emph{temperature} of the system which intends to model physical temperature. We use ``temperature-$\tau$ system'' to refer to any TAS with temperature $\tau$.
We write $\prodasm{\mathcal{T}}$ for the set of all $\tau$-stable assemblies that can arise
(in finitely many steps or in the limit) in $\mathcal{T}$.  An
assembly $\alpha \in \prodasm{\mathcal{T}}$ is {\it terminal}, and we write $\alpha \in
\termasm{\mathcal{T}}$, if no tile can be $\tau$-stably added to it. It is clear that $\termasm{\mathcal{T}} \subseteq \prodasm{\mathcal{T}}$. We say that $\mathcal{T}$ is \emph{directed} if and only if all of the assemblies (upto reflection and translation) of a directed system give the same configuration.

A set $X \subseteq \Z^2$ {\it weakly self-assembles} if there exists
a TAS ${\mathcal T} = (T, \sigma, \tau)$ and a set $B \subseteq T$
such that for each $\alpha \in \termasm{T}$ there exists a reflection $r \in R$ and a translation $\vec{v} \in \Z^2$ such that for the assembly, $\alpha^r$ say, corresponding to $\alpha$ reflected according to $r$ then translated by $\vec{v}$ and $\alpha_r^{-1}(B) = X$ holds.  Essentially, weak self-assembly can be thought of
as the creation (or ``painting'') of a pattern of tiles from $B$ (usually taken to be a
unique ``color'' such as black) on a possibly larger ``canvas'' of un-colored tiles.
Also, a set $X$ \emph{strictly self-assembles} if there is a TAS $\mathcal{T}$ such that
for each assembly $\alpha\in\termasm{T}$ there exists a reflection $r \in R$ and a translation $\vec{v} \in \Z^2$ such that $\alpha_r = F(\alpha,r,\vec{v})$ and $\dom \alpha_r =
X$. Essentially, strict self-assembly means that tiles are only placed
in positions defined by the shape.  Note that if $X$ strictly self-assembles, then $X$ weakly
self-assembles.

In this paper, we also consider scaled-up versions of shapes. Formally, if $X$ is a shape and $c \in \mathbb{N}$, then a $c$-\emph{scaling} of $X$ is defined as the set $X^c = \left\{ (x,y) \in \mathbb{Z}^2 \; \left| \; \left( \left\lfloor \frac{x}{c} \right\rfloor, \left\lfloor \frac{y}{c} \right\rfloor \right) \in X \right.\right\}$. Intuitively, $X^c$ is the shape obtained by replacing each point in $X$ with a $c \times c$ block of points. We refer to the natural number $c$ as the \emph{scale factor}. 
\fi

\later{
\ifabstract
\section{Formal Definition of the Reflexive Tile Assembly Model}\label{sec:RTAM-def-formal}
\fi

We work in the $2$-dimensional discrete space $\Z^2$. Define the set
$U_2 = \{(0,1),\allowbreak (1,0),\allowbreak (0,-1),\allowbreak (-1,0)\}$ to be the set of all
\emph{unit vectors} in $\mathbb{Z}^2$.
We also sometimes refer to these vectors by their
cardinal directions $N$, $E$, $S$, $W$, respectively.
All \emph{graphs} in this paper are undirected.
A \emph{grid graph} is a graph $G =
(V,E)$ in which $V \subseteq \Z^2$ and every edge
$\{\vec{a},\vec{b}\} \in E$ has the property that $\vec{a} - \vec{b} \in U_2$.

Intuitively, a tile type $t$ is a unit square that can be translated and flipped across its vertical and/or horizontal axes, but not rotated.  This provides each tile type with a pair of North-South ($NS$) sides and a pair of East-West ($EW$) sides, such that either side $s \in NS$ may be facing north while the other is facing south (and vice versa for the $EW$ glues).  For ease of discussion, however, we will talk about tile types as being defined in fixed orientations, but then allow them to attach to assemblies in possibly flipped orientations.  Therefore, we define each $t$ as having a well-defined ``side
$\vec{u}$'' for each $\vec{u} \in U_2$. Each side $\vec{u}$ of $t$
has a ``glue'' with ``label'' $\textmd{label}_t(\vec{u})$--a string
over some fixed alphabet--and ``strength''
$\textmd{str}_t(\vec{u})$--a nonnegative integer--specified by its type
$t$. Let $R = \{D,V,H,B\}$ be the set of permissible reflections for a tile which is assumed to begin in the default orientation, where $D$ corresponds to no change from the default, $V$ a single vertical flip (i.e. a reflection across the $x$-axis), $H$ a single horizontal flip (i.e. a reflection across the $y$-axis), and $B$ a single horizontal flip and a single vertical flip.  (Note that the ordering of flips for $B$ does not matter as either ordering results in the same orientation, and also that all combinations of possibly many flips across each axis result only in tiles of the orientations provided by $R$.)  See Figure~\ref{fig:tile-flips} for an example of each. Let $S: R \times U_2 \rightarrow U_2$ be a function which takes a type of reflection $r \in R$ and a side $s \in U_2$, and which returns the side of a tile in its default orientation which would appear on side $s$ of the tile when it has been reflected according to $r$.  (E.g. for the tile type shown in Figure~\ref{fig:tile-flips}, $S(H,W)=E$, and $S(H,N)=N$.)  It is important to note that a glue does not have any particular orientation along the edge on which it resides, and so remains unchanged throughout reflections.

\begin{figure}[htp]
\begin{center}
   \includegraphics[width=2.0in]{images/tile-flips}
\caption{Left to right:  (1) Default orientation of an example tile type $t$, (2) $t$ flipped vertically, (3) $t$ flipped horizontally, (4) $t$ flipped across both axes}
\label{fig:tile-flips}
\end{center}
\end{figure}

Two tiles $t$ and $t'$ that are placed at the points $\vec{a}$
and $\vec{a}+\vec{u}$ and reflected by $r \in R$ and $r' \in R$, respectively, \emph{bind} with \emph{strength}
$\textmd{str}_t\left(S(r,\vec{u})\right)$ if and only if
$\left(\textmd{label}_t\left(S(r,\vec{u})\right),\textmd{str}_t\left(S(r,\vec{u})\right)\right)
=
\left(\overline{\textmd{label}_{t'}\left(S(r',-\vec{u})\right)},\textmd{str}_{t'}\left(S(r',-\vec{u})\right)\right)$.  That is, the glues on adjacent edges of two tiles bind iff they have complementary labels and the same strength.

In the subsequent definitions, given two partial functions $f,g$, we write $f(x) = g(x)$ if~$f$ and~$g$ are both defined and equal on~$x$, or if~$f$ and~$g$ are both undefined on $x$.

Fix a finite set $T$ of tile types.
A $T$-\emph{assembly}, sometimes denoted simply as an \emph{assembly} when $T$ is clear from the context, is a partial
function $\pfunc{\alpha}{\Z^2}{T \times R}$ defined on at least one input, with points $\vec{x}\in\Z^2$ at
which $\alpha(\vec{x})$ is undefined interpreted to be empty space,
so that $\dom \alpha$ is the set of points with \emph{oriented} tiles.
We write $|\alpha|$ to denote $|\dom \alpha|$, and we say $\alpha$ is
\emph{finite} if $|\alpha|$ is finite. For a given location $\vec{v} \in \Z^2$, we denote the tile in $\alpha$ at location $\vec{v}$ by $\alpha(\vec{v})$ (if no tile exists there, $\alpha(\vec{v})$ is undefined).  Let $F$ be a function which takes as input an assembly $\alpha$, a reflection $r \in R$, and a translation vector $\vec{v} \in \Z^2$, and which returns the assembly $\alpha^r$, corresponding to $\alpha$ reflected according to $r$ then translated by $\vec{v}$.  We say that two assemblies, $\alpha$ and $\beta$, are equivalent iff there exists some reflection $r \in R$ and translation vector $\vec{v} \in \Z^2$ such that for $\beta' = F(\beta,r,\vec{v})$, $|\alpha| = |\beta'|$ and for all $\vec{v} \in |\alpha|$, $\alpha(\vec{v}) = \beta'(\vec{v})$.  That is, $\alpha$ and $\beta$ are equivalent iff one of them can be flipped and translated so that they perfectly match at all locations.  For assemblies $\alpha$
and $\alpha'$, we say that $\alpha$ is a \emph{subassembly} of
$\alpha'$, and write $\alpha \sqsubseteq \alpha'$, if $\dom \alpha
\subseteq \dom \alpha'$ and $\alpha(\vec{x}) = \alpha'(\vec{x})$ for
all $x \in \dom \alpha$. An assembly $\alpha$ is {\it $\tau$-stable}
For some $\tau\in \N$, an assembly $\alpha$ is {\it $\tau$-stable} if every cut of the binding graph of $\alpha$ has weight at least $\tau$, where the weight of an edge is the strength of the glue it represents. When $\tau$ is clear from context, we say $\alpha$ is \emph{stable}.

For a tile set $T$, we let $p_T : T \times R \to T$ be the projection map onto $T$ (i.e. $p_T((t,r)) = t$). A \emph{configuration} given by an assembly $\alpha$ is defined to  be the map from $\Z^2$ to $T$ given by $p_T\circ \alpha$.

Self-assembly begins with a {\it seed assembly} $\sigma$, in which each tile has a specified and fixed orientation, and
proceeds asynchronously and nondeterministically, with tiles in any valid reflection in $R$
adsorbing one at a time to the existing assembly in any manner that
preserves $\tau$-stability at all times.  A {\it tile assembly system}
({\it TAS}) is an ordered triple $\mathcal{T} = (T, \sigma, \tau)$,
where $T$ is a finite set of tile types, $\sigma$ is a seed assembly
with finite domain in which each tile is given a fixed orientation, and $\tau \in \N$ is the \emph{temperature}.  A {\it generalized tile
assembly system} ({\it GTAS})
is defined similarly, but without the finiteness requirements.  We
write $\prodasm{\mathcal{T}}$ for the set of all assemblies that can arise
(in finitely many steps or in the limit) from $\mathcal{T}$.  An
assembly $\alpha \in \prodasm{\mathcal{T}}$ is {\it terminal}, and we write $\alpha \in
\termasm{\mathcal{T}}$, if no tile can be $\tau$-stably added to it. It is clear that $\termasm{\mathcal{T}} \subseteq \prodasm{\mathcal{T}}$.

An assembly sequence in a TAS $\mathcal{T}$ is a (finite or infinite) sequence $\vec{\alpha} = (\alpha_0,\alpha_1,\ldots)$ of assemblies in which each $\alpha_{i+1}$ is obtained from $\alpha_i$ by the addition of a single tile. The \emph{result} $\res{\vec{\alpha}}$ of such an assembly sequence is its unique limiting assembly. (This is the last assembly in the sequence if the sequence is finite.) The set $\prodasm{T}$ is partially ordered by the relation $\longrightarrow$ defined by
\begin{eqnarray*}
\alpha \longrightarrow \alpha' & \textmd{iff} & \textmd{there is an assembly sequence } \vec{\alpha} = (\alpha_0,\alpha_1,\ldots) \\
                               &              & \textmd{such that } \alpha_0 = \alpha \textmd{ and } \alpha' = \res{\vec{\alpha}}. \\
\end{eqnarray*}
We say that $\mathcal{T}$ is \emph{strongly directed} if and only if either $|\termasm{\mathcal{T}}| = 1$ or if for every pair of terminal assemblies $\alpha,\beta \in \termasm{\mathcal{T}}$, there exists a reflection $r \in R$ and a translation vector $\vec{v} \in \Z^2$ such that $\alpha = F(\beta,r,\vec{v})$. Furthermore, we say that $\mathcal{T}$ is \emph{directed} if and only if for all $\alpha, \beta \in \mathcal{T}$, there exists a reflection $r\in R$ and a translation vector $\vec{v} \in \Z^2$ such that $p_T\circ \alpha = p_T\circ F(\beta, r, \vec{v})$. In other words, all of the assemblies of a directed systems give the same configuration.

A set $X \subseteq \Z^2$ {\it weakly self-assembles} if there exists
a TAS ${\mathcal T} = (T, \sigma, \tau)$ and a set $B \subseteq T$
such that for each $\alpha \in \termasm{T}$ there exists a reflection $r \in R$ and a translation $\vec{v} \in \Z^2$ such that $\alpha_r = F(\alpha,r,\vec{v})$ and $\alpha_r^{-1}(B) = X$ holds.  Essentially, weak self-assembly can be thought of
as the creation (or ``painting'') of a pattern of tiles from $B$ (usually taken to be a
unique ``color'' such as black) on a possibly larger ``canvas'' of un-colored tiles.

A set $X$ \emph{strictly self-assembles} if there is a TAS $\mathcal{T}$ such that
for each assembly $\alpha\in\termasm{T}$ there exists a reflection $r \in R$ and a translation $\vec{v} \in \Z^2$ such that $\alpha_r = F(\alpha,r,\vec{v})$ and $\dom \alpha_r =
X$. Essentially, strict self-assembly means that tiles are only placed
in positions defined by $X$.  Note that if $X$ strictly self-assembles, then $X$ weakly
self-assembles. $X$ in the definition of strict or weak self-assembly is called a \emph{shape} in $\Z^2$. 

In this paper, we also consider scaled-up versions of shapes. Formally, if $X$ is a shape and $c \in \mathbb{N}$, then a $c$-\emph{scaling} of $X$ is defined as the set $X^c = \left\{ (x,y) \in \mathbb{Z}^2 \; \left| \; \left( \left\lfloor \frac{x}{c} \right\rfloor, \left\lfloor \frac{y}{c} \right\rfloor \right) \in X \right.\right\}$. Intuitively, $X^c$ is the shape obtained by replacing each point in $X$ with a $c \times c$ block of points. We refer to the natural number $c$ as the \emph{scale factor}.

\subsection{Paths in the Binding Graph and as Assemblies}\label{sec:def-paths}

Given an assembly $\alpha$ and locations $\vec{x}$ and $\vec{y}$ such that $\vec{x}$,$\vec{y}\in \dom \alpha $, we define a \emph{path in $\alpha$ from $\vec{x}$ to $\vec{y}$} (or simply a \emph{path from $\vec{x}$ to $\vec{y}$}) as a simple directed path in the binding graph of $\alpha$ with the first location being $\vec{x}$ and the last $\vec{y}$.  We refer to such a path as $\pi_x^y$, and for $k = |\pi_x^y|$ (i.e. $k$ is the length of, or number of tiles on, $\pi_x^y$) and $0 \leq i < k$, let $\pi_x^y(i)$ be the $i$th location of $\pi_x^y$.  Thus, $\pi_x^y(0) = \vec{x}$, and $\pi_x^y(k-1) = \vec{y}$.  We can thus refer to the $i$th tile on $\pi_x^y$ and its reflection as $\alpha(\pi_x^y(i))$, and as shorthand will often refer to locations and/or tiles along a path.
Regardless of the order in which the tiles of $\pi_x^y$ were placed in $\alpha$, we define \emph{input} and \emph{output} sides for each tile in $\pi_x^y$ (except for the first and last, respectively) in relation to their position on $\pi_x^y$.  The input side of the $i$th tile of $\pi_x^y$, $\alpha(\pi_x^y(i))$, is that which binds to $\alpha(\pi_x^y(i-1))$, and the output side is that which binds to $\alpha(\pi_x^y(i+1))$.  We denote these sides as $IN(\pi_x^y(i))$ and $OUT(\pi_x^y(i))$, respectively.  (Thus, $\alpha(\pi_x^y(0))$ has no input side, and $\alpha(\pi_x^y(k-1))$ has no output side.)  Note that in a temperature-$1$ system, an assembly $\alpha'$ exactly representing $\pi_x^y$ would be able to grow solely from $\alpha(\pi_x^y(0))$, in the order of $\pi_x^y$, with each tile having input and output sides as defined for $\pi_x^y$.

} %

\section{The \rtam/ is Not Computationally Universal at $\tau=1$}\label{sec:computation}

In this section, we show that directed \rtam/ systems are not computationally universal by showing that any shape weakly assembled by a directed \rtam/ system is ``simple''. We will first define our notion of simple. Many of the following definitions can also be found in~\cite{jLSAT1}.

\begin{definition}
\label{def-doubly-periodic-set} A set $X \subseteq \mathbb{Z}^2$ is
\emph{semi-doubly periodic} if there exist three vectors $\vec{b}$,
$\vec{u}$, and $\vec{v}$ in $\mathbb{Z}^2$ such that
$$\label{eq-doubly-periodic-def}
X = { \setl{\vec{b} + n\cdot \vec{u} + m\cdot \vec{v}}{n,m \in \mathbb{N} }}.$$
\end{definition}

Less formally, a semi-doubly periodic set is a set that repeats infinitely along two vectors (linearly independent vectors in the non-degenerate case), starting at some base point $\vec{b}$.
Now, let $\mathcal{T} = (T,\sigma, 1)$ refer to a directed, temperature-1 \rtam/ system. We show that any such $\mathcal{T}$ weakly self-assembles a set $X \subseteq \Z^2$ that is a finite union of semi-doubly periodic sets.

\begin{theorem}
\label{thm-no-computation} Let $\mathcal{T} = (T,\sigma,1)$ be a
directed \rtam/ system. If a set $X \subseteq \mathbb{Z}^2$ weakly self-assembles in $\mathcal{T}$, then $X$ is a finite union of semi-doubly periodic sets.
\end{theorem}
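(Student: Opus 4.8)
The plan is to reduce the statement to a pumping property of long paths, exactly as in the analysis of the temperature-$1$ aTAM in~\cite{jLSAT1}, and then to argue that the reflective freedom of the \rtam/ forces every directed temperature-$1$ system to satisfy the pumpability hypothesis that is only \emph{conjectured} in the aTAM. First I would dispose of the case in which some terminal assembly $\alpha \in \termasm{\mathcal{T}}$ is finite (by directedness, all of them are then finite): in that case $X = \alpha_r^{-1}(B)$ is finite and hence a finite union of singletons, each of which is semi-doubly periodic (take $\vec{u} = \vec{v} = \vec{0}$). Since directedness guarantees that every producible assembly has the same configuration up to a single reflection and translation, and since ``finite union of semi-doubly periodic sets'' is invariant under reflection and translation, it then suffices to analyze one fixed \emph{infinite} terminal assembly $\alpha$.

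Second, I would set up the pumping apparatus. Because $\sigma$ has finite domain and attachments at $\tau=1$ bond to the existing assembly, every tile of $\alpha$ is reachable from the seed by a path in the binding graph (Section~\ref{sec:def-paths}), and an infinite $\alpha$ contains arbitrarily long such paths. To each tile on a path $\pi$ I attach a \emph{signature} consisting of its tile type, its reflection $r \in R$, and its sides $IN$ and $OUT$. There are at most $|T|\cdot|R|\cdot|U_2|^2$ signatures, so any path longer than this constant repeats a signature at two indices $i < j$, yielding a candidate \emph{pumpable segment} with translation vector $\vec{v} = \pi(j)-\pi(i)$: at temperature $1$ the portion of $\pi$ beyond index $j$ can be regrown as an identical shifted copy attached after index $i$, and iterating would produce an infinite path periodic with period $\vec{v}$.

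Third, and this is the heart of the argument and the main obstacle, I would show that in a \emph{directed} \rtam/ system every such candidate is genuinely pumpable, i.e.\ the translated copies never produce a conflicting collision with $\alpha$. In the aTAM this is precisely the open pumpability conjecture, the essential difficulty being a segment whose pumped continuation turns back and overlaps the assembly at a position carrying an incompatible tile type. Here I would exploit reflection: the repeated-signature segment can be regrown in the reflection that fixes the glue on its \emph{input} side (a vertical flip when the input side lies in $\{E,W\}$, a horizontal flip when it lies in $\{N,S\}$), so that the regrown copy re-attaches at the same branch point; the map $S$ shows that this reflection leaves the input glue untouched while altering the transverse glues and, crucially, the \emph{output} glue of any turning tile. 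Because directedness only identifies configurations up to one global reflection and translation and records tile types rather than reflections (via $p_T$), the original and reflected growths yield two producible assemblies, and an obstruction to pumping would force these to realize tile-type patterns near the collision that cannot be carried onto one another by any single global reflection and translation, contradicting directedness. The careful bookkeeping of how the four reflections in $R$ permute the exposed glues, and the verification that the reflected growth escapes the collision, is where the real work lies.

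Finally, with pumpability established I would invoke the geometric machinery of~\cite{jLSAT1}, adapted to the \rtam/: a single pumped spine is periodic along $\vec{v}$, applying the same pigeonhole and pumping analysis to the branches sprouting off the periodic copies of the spine yields a transverse period, and the finiteness of the signature set bounds the number of distinct periodic pieces. Assembling these pieces shows that the configuration of $\alpha$, and therefore the set $X = \alpha_r^{-1}(B)$ of positions whose tile type lies in $B$, is a finite union of semi-doubly periodic sets, which is the desired conclusion.
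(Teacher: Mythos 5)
Your skeleton (finite case, seed-rooted paths, pigeonhole on type-plus-orientation, pump, then decompose) matches the paper's, and you correctly identify that reflection is what turns the aTAM's conjecture into a theorem here. But the step you yourself call the heart of the argument has a genuine gap, and it is exactly the gap the paper is engineered to avoid. You propose to prove that segments of paths \emph{in the original terminal assembly} are pumpable, deriving a contradiction with directedness whenever a pumped copy collides with $\alpha$. At temperature $1$ a collision is not an error: the offending tile simply cannot attach, the pumped path is truncated, and the resulting assembly is still producible and still extends to a terminal assembly of the correct configuration. No pair of producible assemblies with irreconcilable configurations is exhibited, so directedness is never actually violated; your claim that an obstruction ``would force these to realize tile-type patterns near the collision that cannot be carried onto one another by any single global reflection and translation'' is asserted, not proved. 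This is precisely the blocking difficulty that keeps the temperature-$1$ aTAM pumpability conjecture open (directedness is part of the hypothesis there too, and is not known to help), and swapping in a locally reflected regrowth of the segment does not obviously escape the obstruction either.

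The paper's proof never pumps a path inside $\alpha$ at all. Starting from the first tile $t$ of the path that leaves a box $B_c(\vec{0})$ containing the seed (with $t$, say, north of that box), it uses the reflection freedom \emph{at every single attachment} to regrow the same sequence of tile types as a new producible assembly in which each tile attaches to the north or west of its predecessor. Such a path is monotone toward the north-west, hence self-avoiding, and it stays strictly north of the seed box, so no collision with anything is possible; pumping the repeated (type, orientation) pair between $\vec{v}_0$ and $\vec{v}_1$ is then unconditionally valid, with no blocking lemma needed. Directedness enters only afterwards and in the opposite direction from your use of it: it forces all of the infinitely many pumped and reflected producible paths (the ``combs'' built from the turning segment, with periodicity vectors $\vec{u}=(a,b)$ and $\vec{v}=(-a,b)$) to be pairwise type-consistent, which transfers their semi-doubly periodic structure onto the terminal configuration, and the same argument fills each bounded parallelogram region between the combs periodically. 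So the fix for your proposal is not better bookkeeping at the collision site, but applying the reflection trick globally from the outset so that the path you pump is monotone and collisions never arise; without that (or some other resolution of blocking), your step three does not go through.
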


\ifabstract

\begin{proof} (sketch)
Here we give a high-level sketch of the proof of Theorem~\ref{thm-no-computation}. See Section~\ref{sec:thm-no-computation-proof} for a rigorous proof. The basic idea of the proof is as follows. For an \rtam/ system $\mathcal{T} = (T,\sigma, 1)$ we consider all of the paths of $n$ tiles (for $n$ to be defined) that can assemble from each exposed glue of $\sigma$ such that each consecutive tile that binds forming the path attach via a north or west glue (and we say that such a path ``extends to the north-west'').

\begin{figure}[htp]

\centering
\includegraphics[scale=.17]{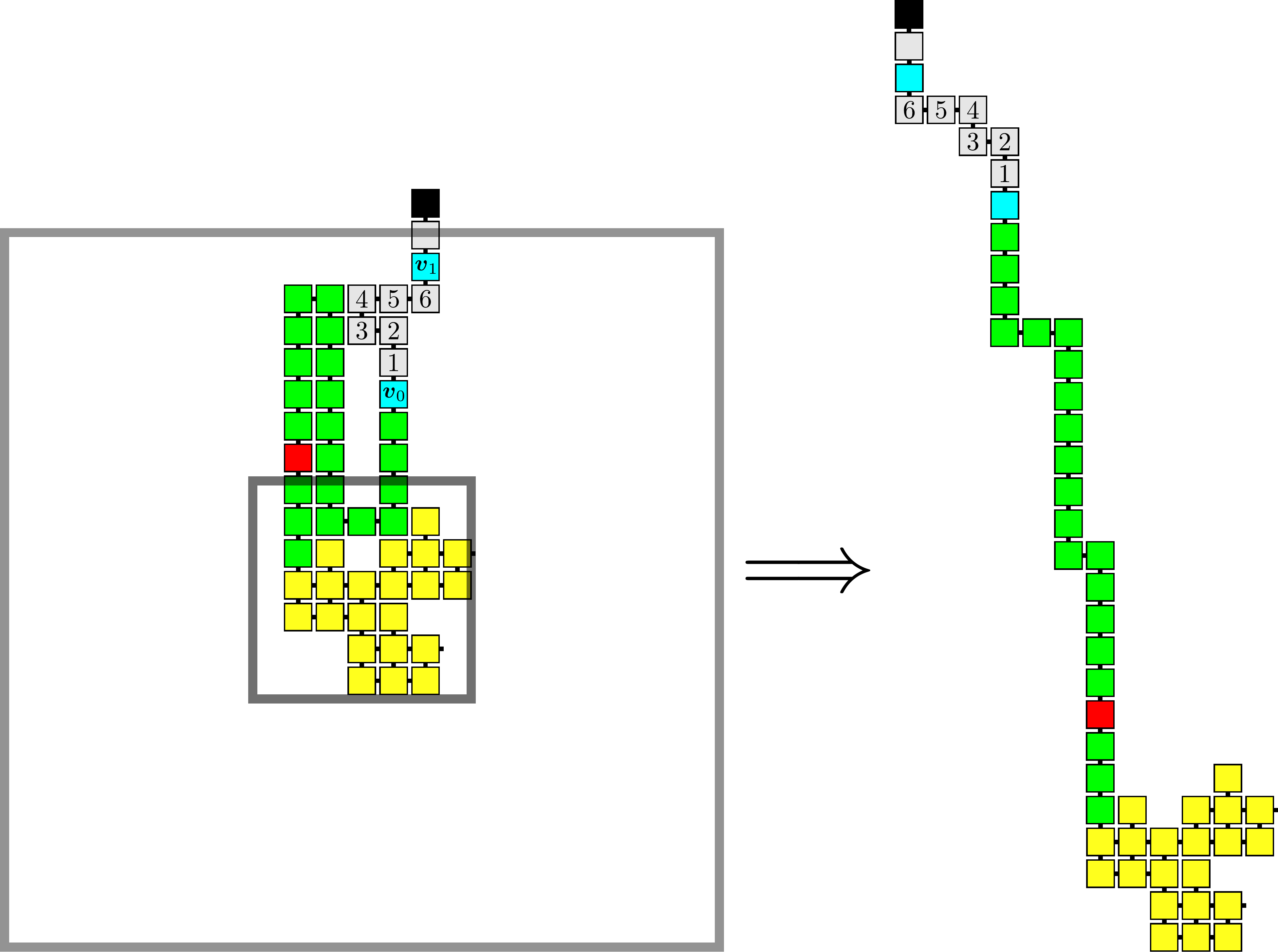}
\caption{A depiction of tiles for a path that can assemble in $\mathcal{T}$. The original path is on the left. The path on the right is a modification to the path on the left that must also be able to assemble in $\mathcal{T}$. The blue tiles labeled $\vec{v}_0$ and $\vec{v}_1$ are of the same tile type and orientation. The tiles labeled $1$ through $6$ can be repeated indefinitely as depicted in Table~\ref{tbl:paths-informal}(a).}
\label{fig:pumpable-path-informal}

\end{figure}

Any finite path is trivially the union of semi-doubly periodic sets. Then, for $n$ sufficiently large, a path of $n$ tiles that extends to the north-west must contain two distinct tiles $t_1$ and $t_2$ of the same tile type in the same orientation. If for every such path, every two distinct tiles $t_1$ and $t_2$ of the same tile type in the same orientation lie on a horizontal or vertical line, then it can be argued that the terminal configuration of $\mathcal{T}$ must consist of finitely many infinitely long horizontal or vertical paths connected to $\sigma$, and is therefore the finite union of semi-doubly periodic sets. On the other hand, if there is a path such that the two distinct tiles $t_1$ and $t_2$ of the same tile type in the same orientation do not lie on a horizontal or vertical path, then we argue that the terminal assembly of $\mathcal{T}$ is the finite union of semi-doubly periodic sets as follows. First, we note that for such a path, $\pi$ say, the tiles between $t_1$ and $t_2$ can be repeated indefinitely. This is shown in Table~\ref{tbl:paths-informal}(a).
\newcolumntype{M}{>{\centering\arraybackslash}m{\dimexpr.25\linewidth-2\tabcolsep}}
\begin{table}[htp]
\centering
\begin{tabular}{| M | M | M | M |}
	\hline
	\includegraphics[scale=.07]{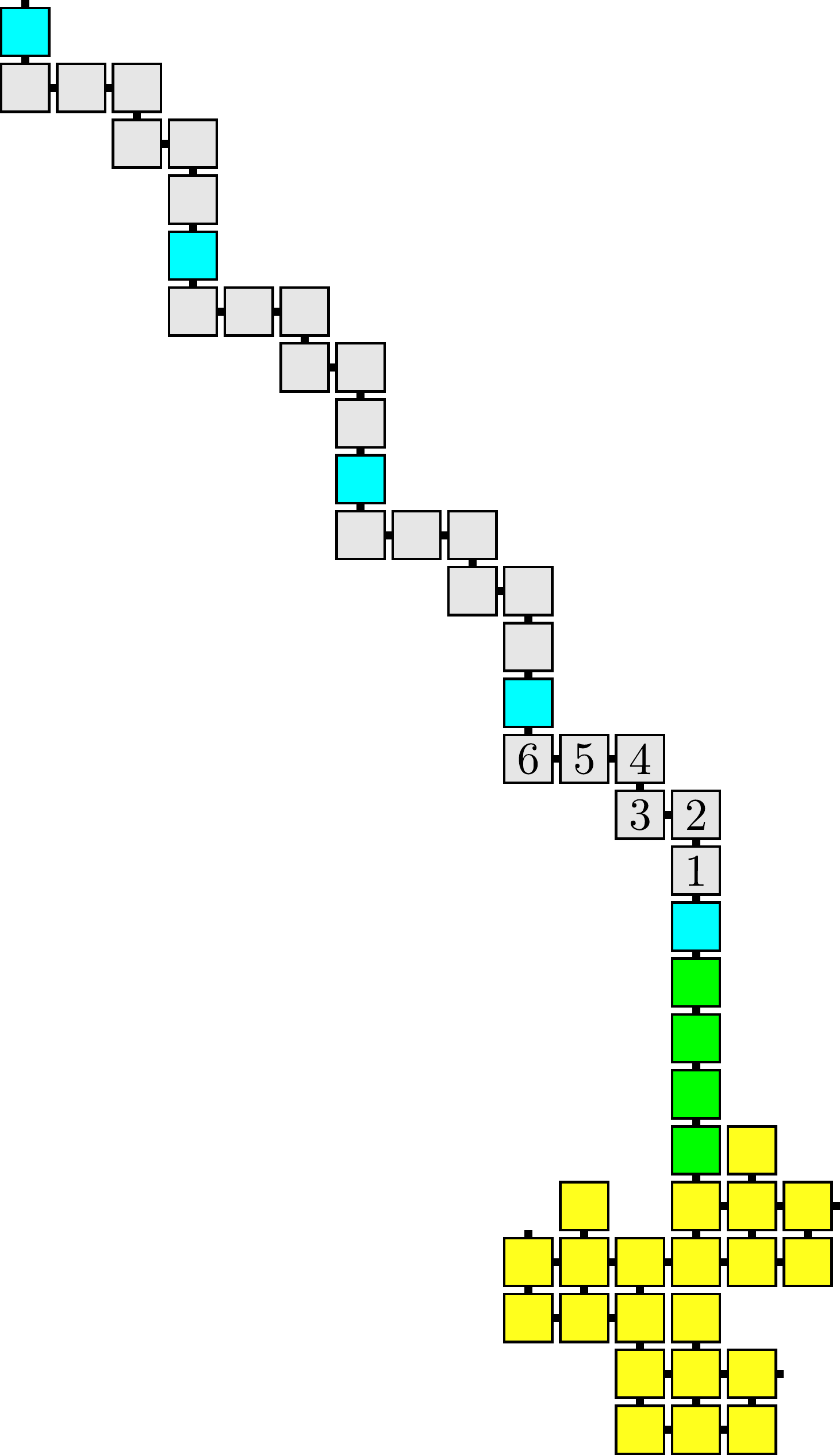}  & \includegraphics[scale=.07]{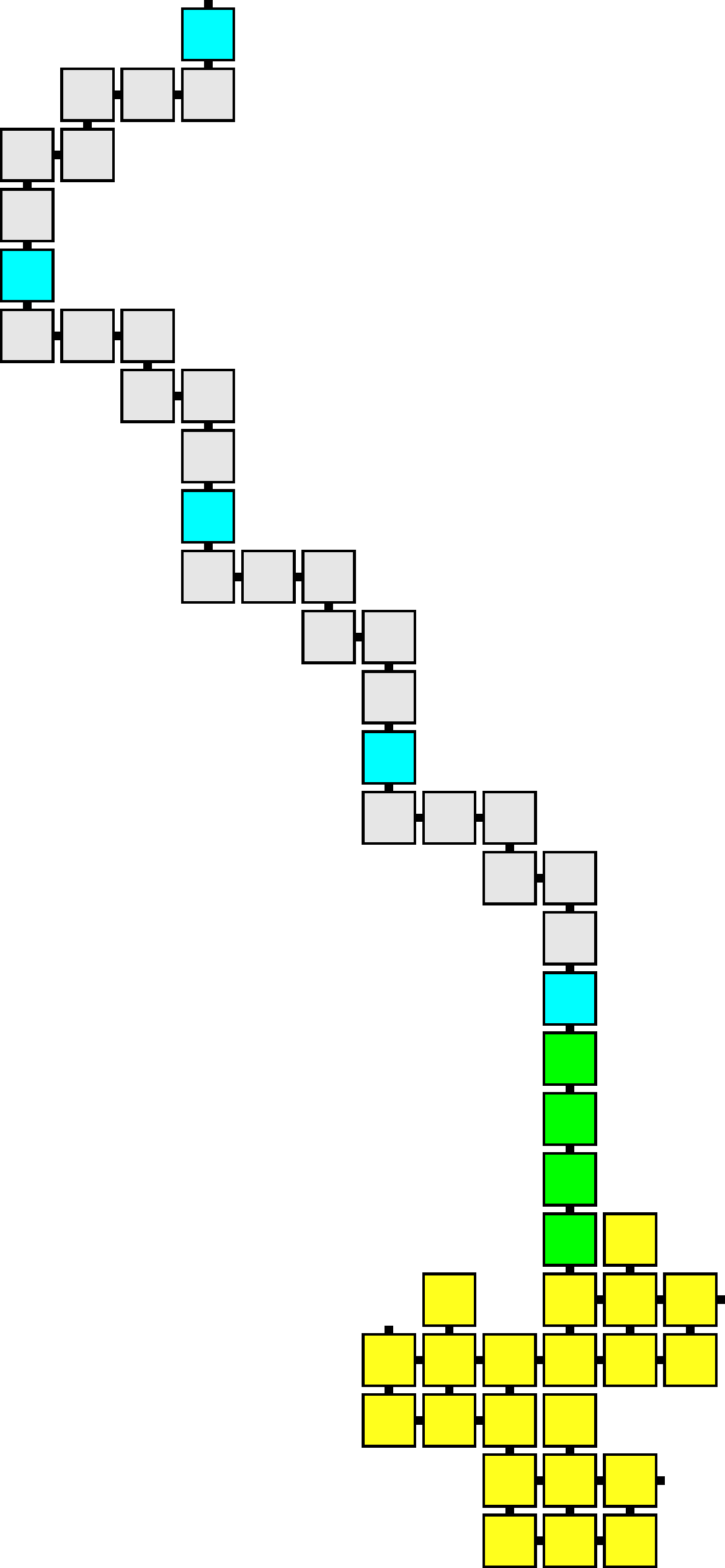} &
	\includegraphics[scale=.07]{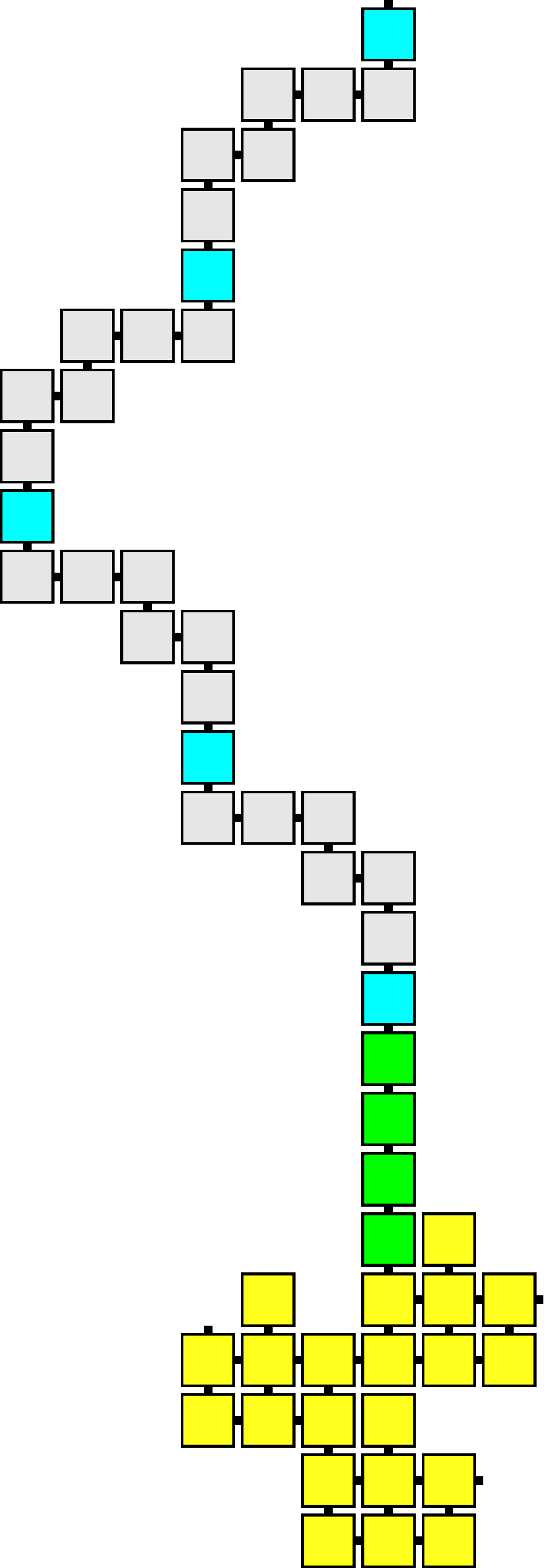}  & \includegraphics[scale=.07]{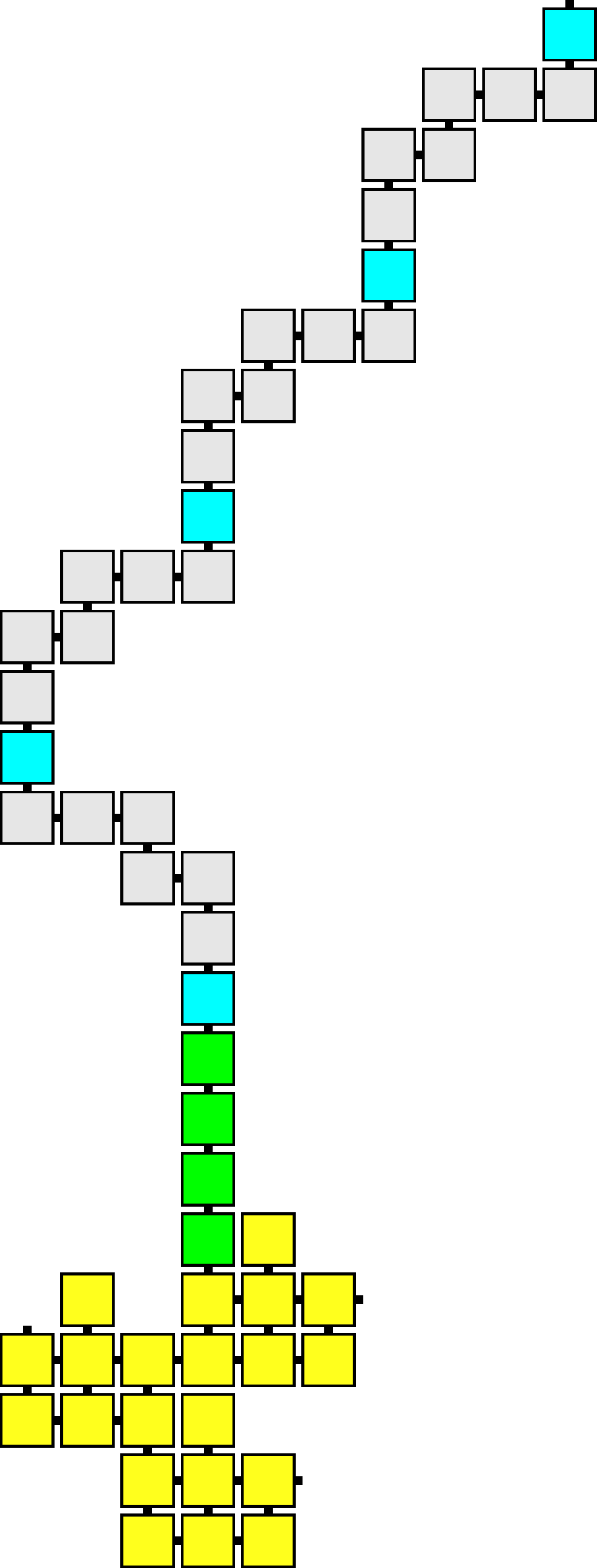} \\
	(a) & (b) & (c) & (d) \\\hline
	\includegraphics[scale=.07]{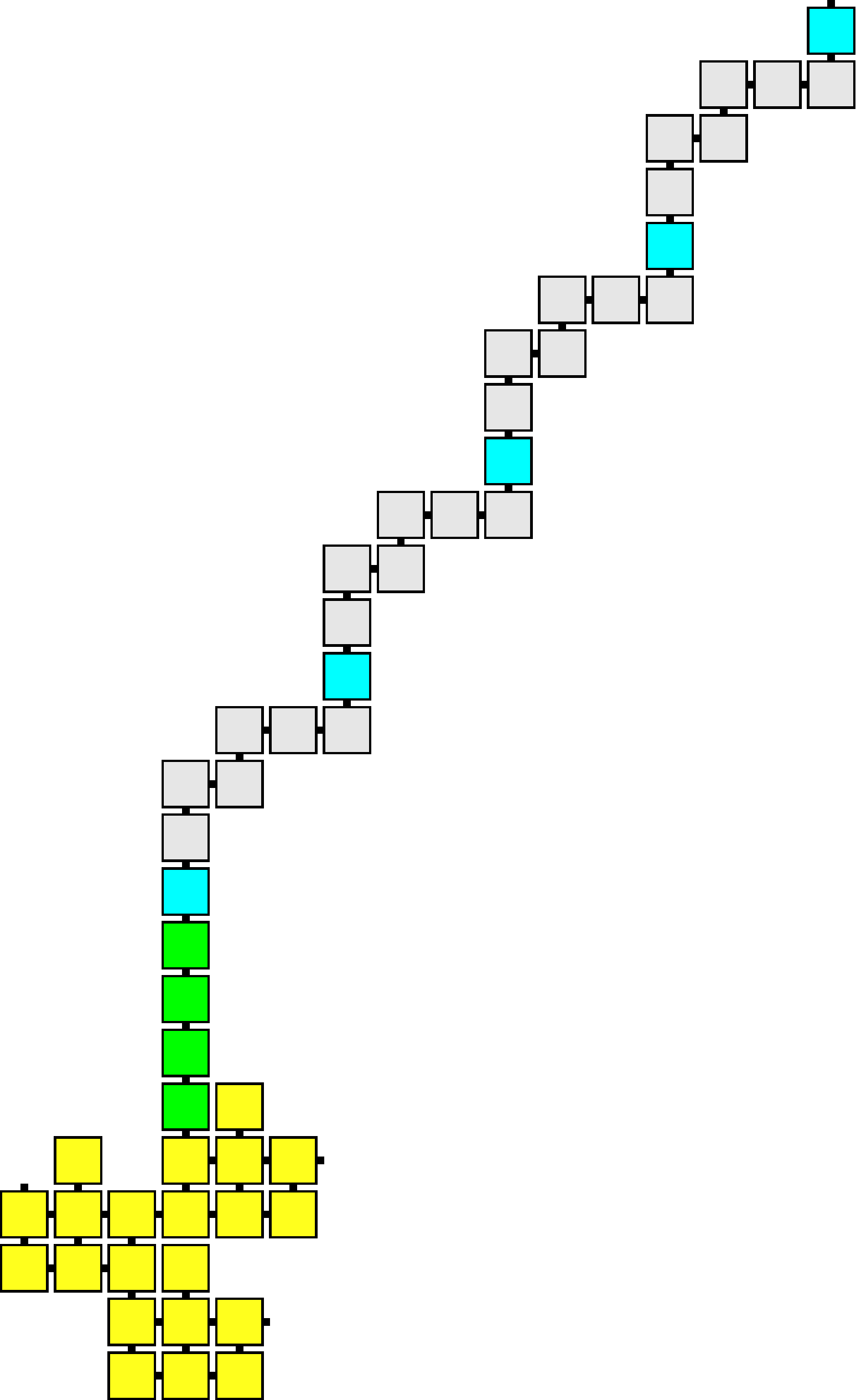}  & \includegraphics[scale=.07]{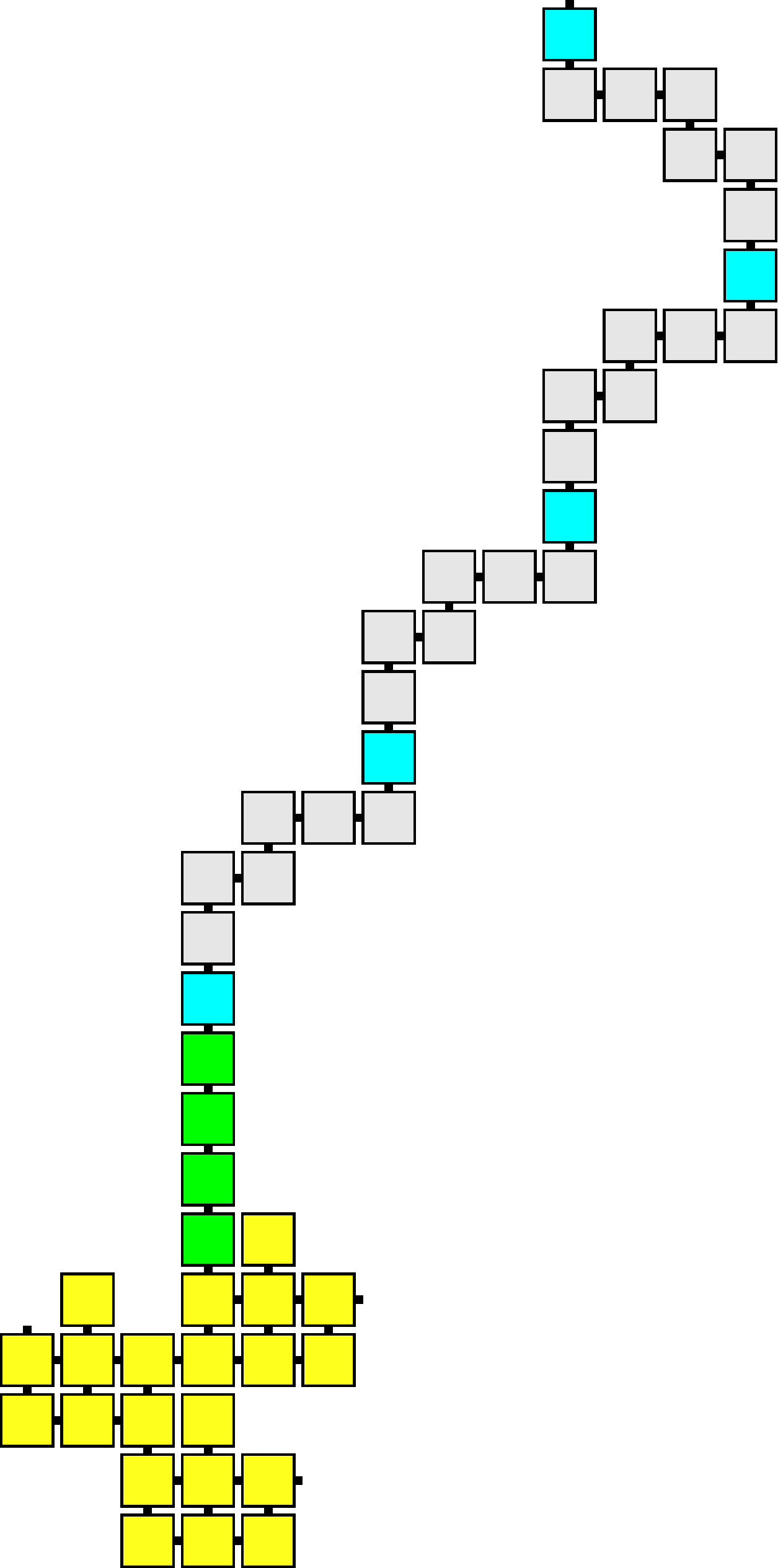} &
	\includegraphics[scale=.07]{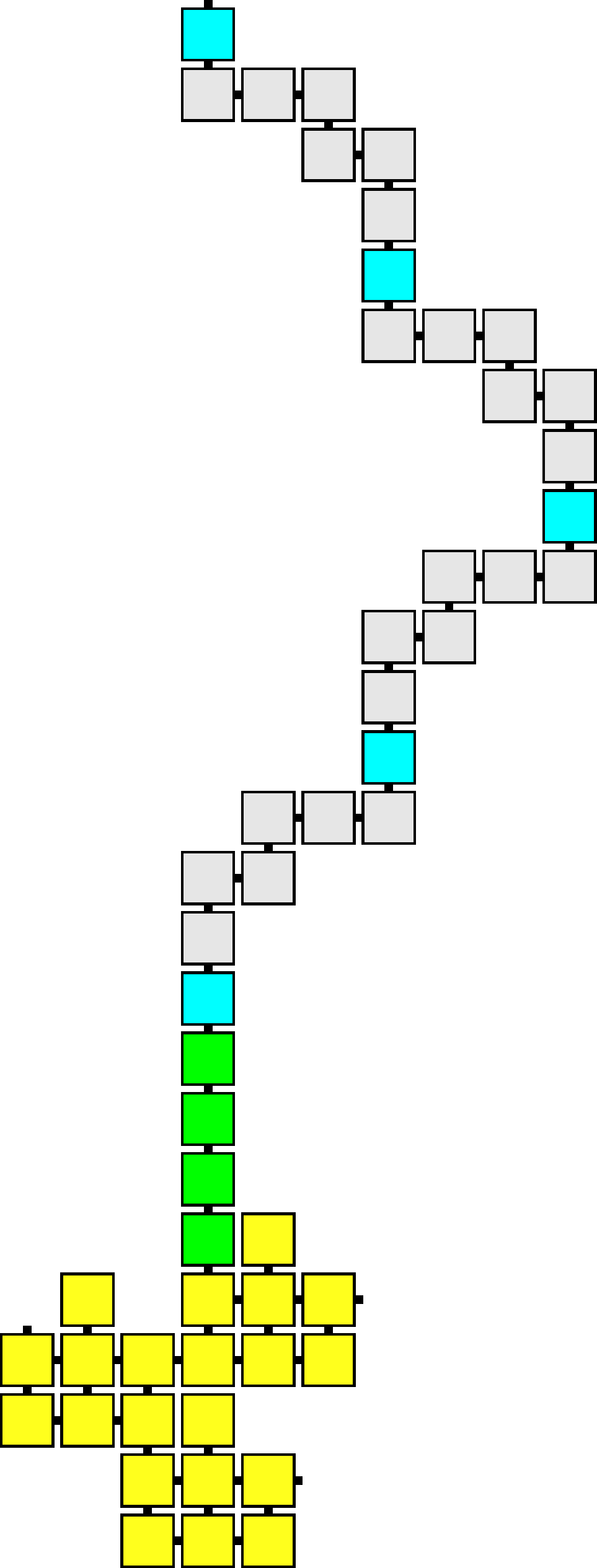}  & \includegraphics[scale=.07]{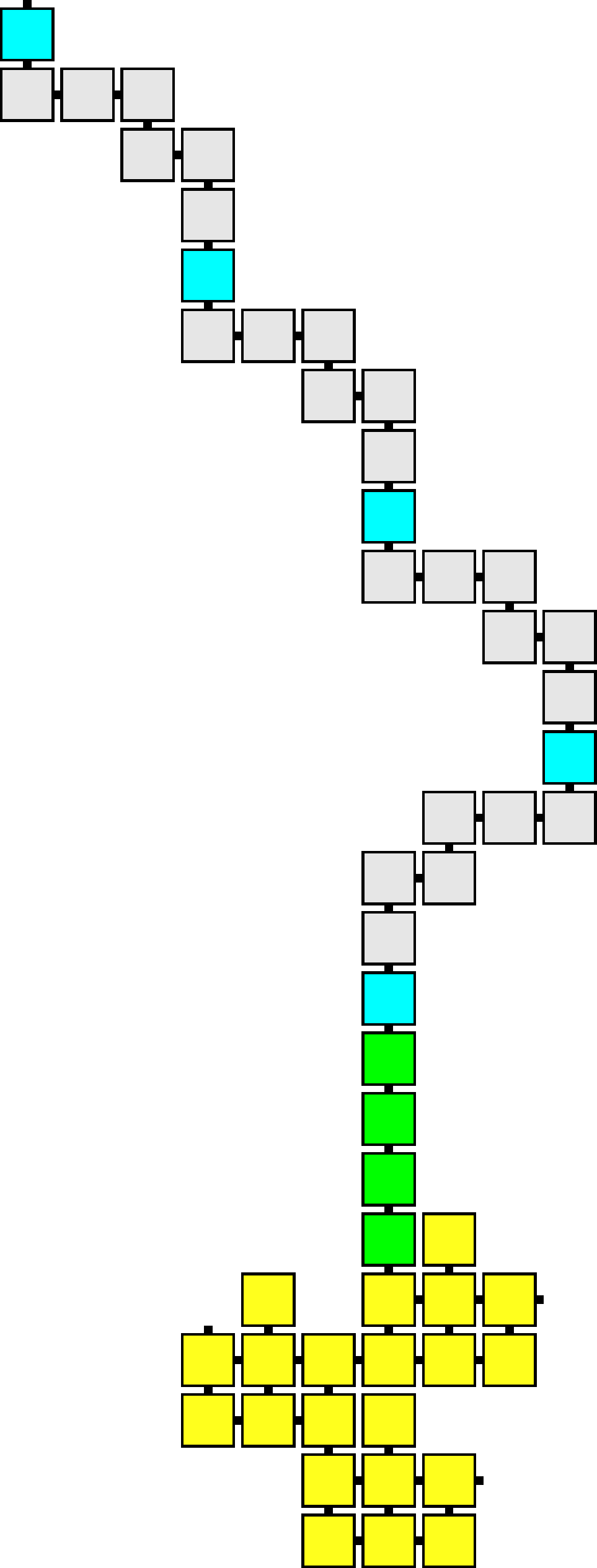} \\
	(e) & (f) & (g) & (h) \\\hline
\end{tabular}\caption{Each figure in this table depicts a possible path that can assemble in $\mathcal{T}$. The yellow tiles make up the seed and the green tiles are a path leading to the repeated tile that allows the path to repeat.}\label{tbl:paths-informal}
\end{table}
Then we show how to modify $\pi$ by reflecting tiles to obtain an infinite family of paths. Examples of such modified paths are shown in Table~\ref{tbl:paths-informal}(b)-(h). Now, if a tile $t$ belongs to one of these paths and has location $l$ say, then, since $\mathcal{T}$ is directed the terminal assembly of $\mathcal{T}$ must contain a tile of the same type as $t$ at each such location $l$.
\begin{figure}[htp]
\centering
  \subfloat[][]{%
        \label{fig:combs-informal}%
        \makebox[2in][c]{ \includegraphics[scale=.08]{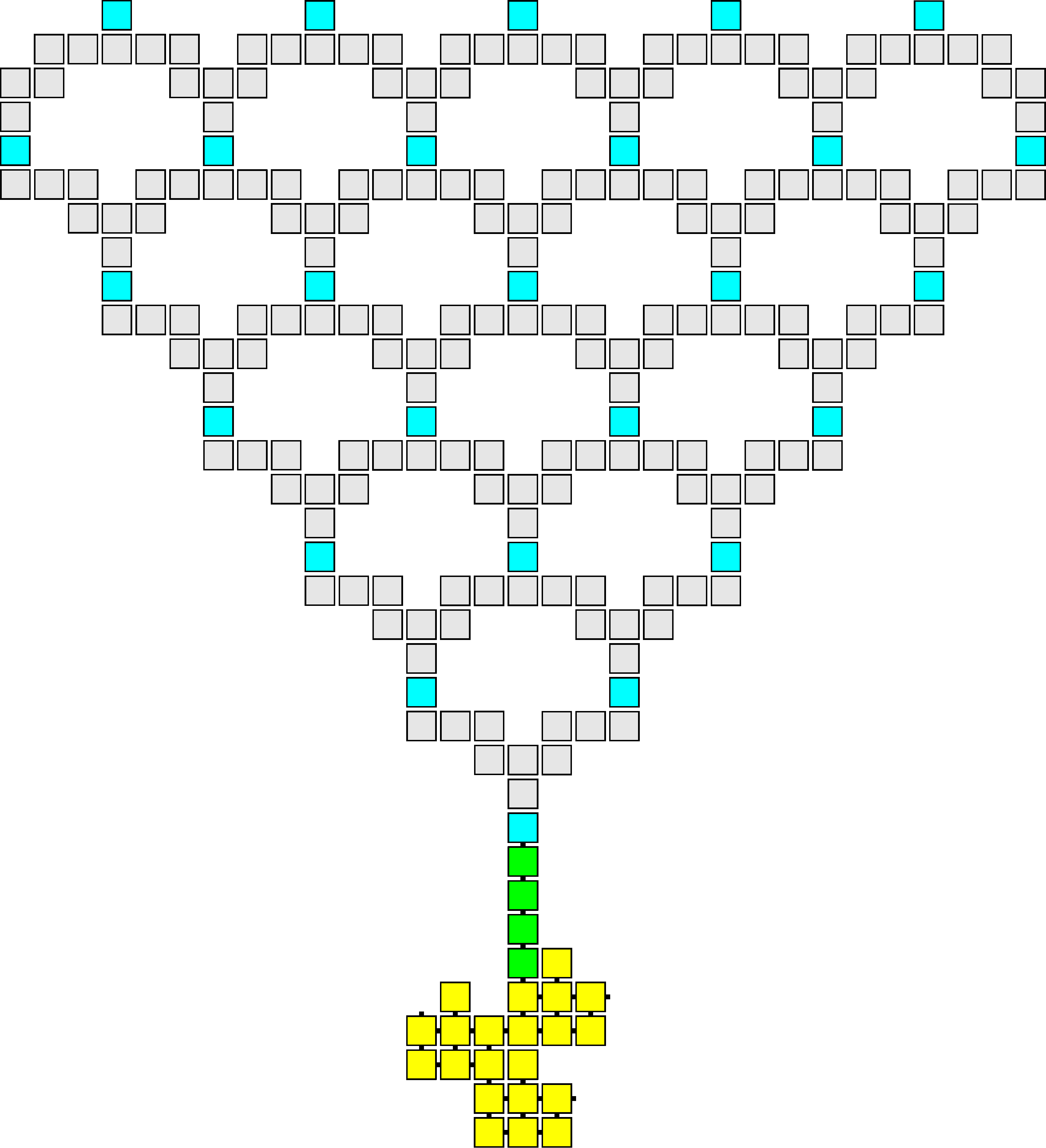}}
        }%
        \quad
  \subfloat[][]{%
        \label{fig:pump-with-seed-informal}%
        \makebox[2in][c]{ \includegraphics[width=1.7in]{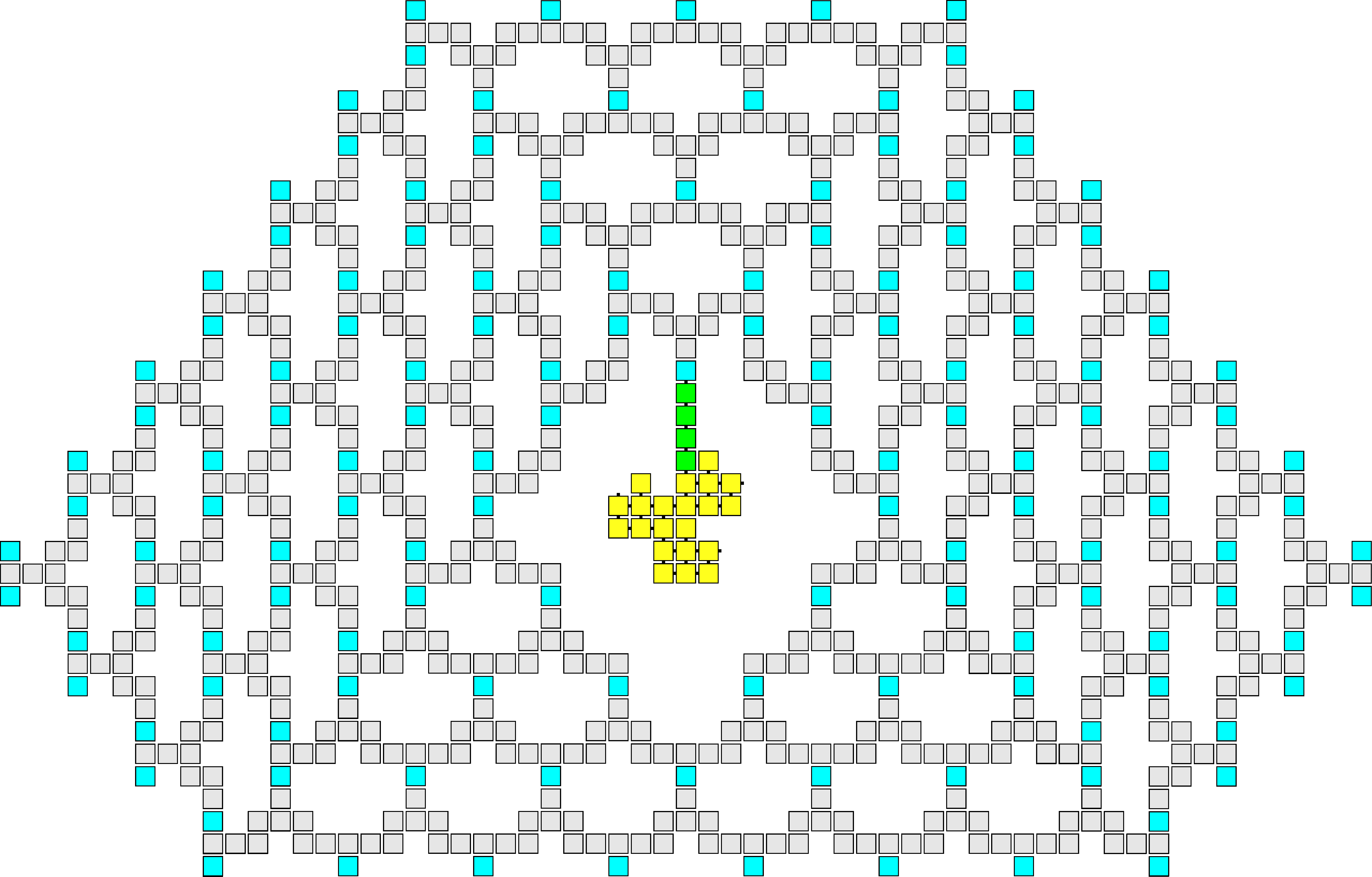}}
        }%
  \caption{(a) A configuration that can be thought of as the ``union'' of the type-consistent assemblies depicted in Table~\ref{tbl:paths-informal}. (b) A configuration of tiles that must weakly self-assemble in $\mathcal{T}$.}
  \label{fig:periodic}
\end{figure}
Finally, we note that all of these paths taken together form a semi-doubly periodic set. This is depicted in Figure~\ref{fig:combs-informal}. Continuing this line of reasoning, we show that the terminal assembly of $\mathcal{T}$ is the finite union of semi-doubly periodic sets. A portion of such an assembly is shown in Figure~\ref{fig:pump-with-seed-informal}.
\qed
\end{proof}
\fi

\later{
\ifabstract
\section{Proof of Theorem~\ref{thm-no-computation} and Additional Corollaries}\label{sec:thm-no-computation-proof}
\fi

In the proof of Theorem~\ref{thm-no-computation}, we will be considering many different assemblies and will use the following definition to form a ``union'' of the configurations given by these assemblies.

\begin{definition}
Two assemblies $\alpha$ and $\beta$ are \emph{type-consistent} iff for each $\vec{l}\in \dom \alpha \cap \dom \beta$, $p_T(\alpha(\vec{l})) = p_T(\beta(\vec{l}))$.
\end{definition}

Recall that $p_T$ is the projection from $T \times R$ onto $T$. Notice that if two assemblies $\alpha$ and $\beta$ are type-consistent, then we can give a well-defined partial function $f$ from $\Z^2$ to $T$ by $x \mapsto t$ iff either $x\in \dom \alpha$ and $\alpha(x)=t$ or $x\in \dom \beta$ and $\beta(x)=t$, and otherwise $f$ is undefined. The following definitions will also be useful in the proof of Theorem~\ref{thm-no-computation}.

Let $c \in \mathbb{N}$ and $\vec{v} \in \mathbb{Z}^2$. The \emph{box of radius} $c$ \emph{centered about the point} $\vec{v}$ is the set of points defined as $B_c\left(\vec{v}\right) = \setl{(x,y) + \vec{v}}{|x| \leq c \text{ and } |y| \leq c}$.
Finally, we say that a path in the binding graph \emph{turns} if there are two nodes $\vec{n}_0 = (x_0,y_0)$ and $\vec{n}_1 = (x_1,y_1)$ in the path such that $x_0 \neq x_1$ and $y_0\neq y_1$. In other words, a path turns if its vertices do not lie on a horizontal or vertical line. A path of tiles in an assembly is said to turn if the corresponding path in the binding graph turns.

Without loss of generality, assume that $\sigma$ contains the point $\vec{0} \in \Z^2$ where $\vec{0} = (0,0)$. Let $c\in \N$ be the minimal number such that $\sigma \subseteq B_c(\vec{0})$ and let $c'\in \N$ be such that $c'= 4|T| + c + 1$. Now let $\alpha \in \termasm{T}$ be a terminal assembly such that $\alpha$ and $\sigma$ are type-consistent. Note that a finite assembly trivially weakly assembles a finite union of semi-doubly periodic sets. Therefore, we only consider the case where $\overline{B_{c'}(\vec{0})} \cap \dom \alpha \neq \emptyset$. Let $\vec{x} \in \Z^2$ be a tile location for some tile in $\overline{B_{c'}(\vec{0})} \cap \dom \alpha$, and let $\pi_{\sigma}^{\vec{x}}$ be a path in $G_\alpha$ from a node corresponding to a tile in $\sigma$ to $\vec{x}$. (Notice the abuse of notation here. We are using $\sigma$ in the notation $\pi_{\sigma}^{\vec{x}}$ where we should be using the location of the tile in $\sigma$.) Notice that $\pi_{\sigma}^{\vec{x}}$ must contain at least $4|T|+1$ nodes. Now, at temperature $1$, $\sigma\cup \pi_{\sigma}^{\vec{x}}$ is a producible assembly of $\mathcal{T}$; denote this assembly by $\beta$. We will modify $\beta$ by modifying the path $\pi_{\sigma}^{\vec{x}}$.
\begin{figure}[htp]
\begin{center}
	\includegraphics[scale=.18]{images/pumpable-path}
\caption{A depiction of tiles for a path that can assemble in $\mathcal{T}$. The original path is on the left. The path on the right is a modification to the path on the left that must also be able to assemble in $\mathcal{T}$. On the left, the light grey inner box delineates $B_c(\vec{0})$ and the outer one delineates $B_{c'}(\vec{0})$. The blue tiles labeled $\vec{v}_0$ and $\vec{v}_1$ are at the respective locations $\vec{v}_0$ and $\vec{v}_1$ in $\Z^2$. The red tile is the first tile ($t$) along the path with location outside of $B_c(\vec{0})$ and the black tile is the first tile along the path with location ($\vec{x}$) outside of $B_{c'}(\vec{0})$.}
\label{fig:pumpable-path}
\end{center}

\end{figure}
Let $\vec{\beta}$ be the assembly sequence for $\beta$, and let $t$ be the first tile placed outside of $B_c(\vec{0})$ using this assembly sequence. This tile is marked as the red tile in Figure~\ref{fig:pumpable-path}. We will denote the path from $t$ to $\vec{x}$ by $\pi_t^{\vec{x}}$, and denote the subpath of $\pi_{\sigma}^{\vec{x}}$ that does not contain the tiles of $\pi_t^{\vec{x}}$ by $\pi_{\sigma}^{t}$. (Again we abuse notion here. We are using $t$ in the notation $\pi_t^{\vec{x}}$ where we should technically use the location of $t$.)  Without loss of generality, suppose that for $\vec{l} = (l_x,l_y)$, the location of $t$, for all $(x,y) \in B_c(\vec{0})$, $l_y > y$. In other words, suppose that $t$ lies to the north of $B_c(\vec{0})$. This is depicted in Figure~\ref{fig:pumpable-path}. Now we modify the path $\pi_{t}^{\vec{x}}$ as follows.

Let $\vec{\alpha} = (\alpha_0, \alpha_1,...,\alpha_n)$ be the assembly sequence such that $\alpha_0 = \sigma$ and $\alpha_i$ is obtained from $\alpha_{i-1}$ by the attachment of the tile $\pi_{\sigma}^{\vec{x}}(i-1)$ of $\pi_{\sigma}^{\vec{x}}$. We modify $\vec{\alpha}$ as follows. Suppose that $k$ is such that $t$ is the tile at the location in $\dom \alpha_k \setminus \dom \alpha_{k-1}$. That is, $k = |\pi_{\sigma}^{t}|$, and so $t = \pi_{\sigma}^{\vec{x}}(k-1) = \pi_t^{\vec{x}}(0)$.
Starting from $\alpha_{k-1}$, orient and attach $t$ to form $\alpha_{k}'$ such that a tile with the same type as $t_{k+1} = \pi_t^{\vec{x}}(1)$ can attach to $\alpha'_k$ at a location that is north or west of $t$. That is, attach $t$ in such a way that the output glue is exposed on the north or west edge of $t$. Note that this is possible since $t$ is the first tile located outside of $B_c(\vec{0})$ and the location of $t$ is to the north of $B_c(\vec{0})$. Then, $\alpha_{k+1}'$ is obtained from $\alpha'_k$ by attaching a tile with the same type as $t_{k+1}$ to the north or west of $t$ in $\alpha'_k$ such that the output glue of $t_{k-1}$ is also exposed on the north or west edge. In general, for a step $j$ such that $k+1<j<n$, $\alpha'_j$ is obtained from $\alpha'_{j-1}$ by attaching a tile with the same type as $\pi_t^{\vec{x}}(j-1)$ to the tile $t'$ at the location in $\dom \alpha'_{j-1} \setminus \dom \alpha'_{j-2}$ to the north or west of $t'$. In other words, we form a path with the same tile types as $\pi_{\sigma}^{\vec{x}}$, only as tiles attach, flip them so that the next tile to attach does so at a tile location that is north or west of the previously attached tile. Denote the resulting path as $\pi'$. Figure~\ref{fig:pumpable-path} (left) gives an example of a modified path.

\newcolumntype{M}{>{\centering\arraybackslash}m{\dimexpr.25\linewidth-2\tabcolsep}}
\begin{table}[htp]
\centering
\begin{tabular}{| M | M | M | M |}
	\hline
	\includegraphics[scale=.12]{images/line1}  & \includegraphics[scale=.12]{images/line2} &
	\includegraphics[scale=.12]{images/line3}  & \includegraphics[scale=.12]{images/line4} \\
	(a) & (b) & (c) & (d) \\\hline
	\includegraphics[scale=.12]{images/line5}  & \includegraphics[scale=.12]{images/line6} &
	\includegraphics[scale=.12]{images/line7}  & \includegraphics[scale=.12]{images/line8} \\
	(e) & (f) & (g) & (h) \\\hline
\end{tabular}\caption{Each figure in this table depicts a possible path that can assemble in $\mathcal{T}$. The yellow tiles make up the seed and the green tiles are a path leading to the repeated tile that allows the path to repeat.}\label{tbl:paths}
\end{table}

Note that since $\pi_t^{\vec{x}}$ contains at least $4|T|+1$ nodes, $\pi'$ contains at least $4|T| + 1$. By the pigeonhole principle, there must be two nodes $\vec{v}_0$ and $\vec{v}_1$ contained in $\pi'$ with locations in $\overline{B_{c}(\vec{0})}$ that correspond to the same tile type in the same orientation. These are depicted as blue tiles in Figure~\ref{fig:pumpable-path}. Let $\pi_{\vec{v}_0}^{\vec{v}_1}$ denote the path in $G_{\alpha}$ that is the subpath of $\pi_{\sigma}^{\vec{x}}$ from $\vec{v}_0$ to $\vec{v}_1$. Because each tile of $\pi'$ attaches to the north or west of a previous tile in the path, $\pi'$ can be modified to extend indefinitely to the north-west since the subpath $\pi_{\vec{v}_0}^{\vec{v}_1}$ can be repeated an arbitrary number of times. See Table~\ref{tbl:paths}(a) for an example of repeating $\pi_{\vec{v}_0}^{\vec{v}_1}$.

Now, either $\pi_{\vec{v}_0}^{\vec{v}_1}$ turns or it does not. If it does not turn, then  $\pi_{\sigma}^{\vec{x}}$ can be modified so that the tile types of the subpath $\pi_{\vec{v}_0}^{\vec{v}_1}$ are repeated indefinitely starting from $\vec{v}_0$ and this modified path lies on a vertical (or horizontal) line. If each such path $\pi_{\sigma}^{\vec{x}}$ from the seed $\sigma$ to some point $\vec{x}$ is such that for each pair of distinct nodes of $\pi_{\sigma}^{\vec{x}}$ with corresponding tiles of the same type and orientation, the path segment between these nodes does not turn, then every path containing a node in $\overline{B_{c'}(\vec{0})}$ can be modified to assemble an arbitrarily long path that lies on a vertical or horizontal line. Then, it can easily be seen that $\alpha$ is the finite union of semi-doubly periodic sets. This is due to the fact that in the limit, $\alpha$ must consist of $\sigma$ and finitely many infinitely long paths connected to $\sigma$. Each of the infinitely long paths is a semi-doubly periodic set, and there can only be finitely many paths that lie on vertical or horizontal lines that can assemble outside of $B_c(\vec{0})$. On the other hand, if there is a $\pi_{\sigma}^{\vec{x}}$ such that the subpath $\pi_{\vec{v}_0}^{\vec{v}_1}$ turns, then we can use this turn and the non-orientable nature of tiles in the \rtam/ to show that $\alpha$ is the finite union of semi-doubly periodic sets.

Let $\beta'$ denote $\sigma\cup \pi'$. Notice that $\beta' \in \prodasm{T}$. We now use $\pi_{\vec{v}_0}^{\vec{v}_1}$ to form an infinite set of producible assembles in $\mathcal{T}$ by repeating the tiles of $\pi_{\vec{v}_0}^{\vec{v}_1}$ and changing tile orientations by reflecting.
Without loss of generality, we assume that output glues of the tiles at locations $\vec{v}_0$ and $\vec{v}_1$ are on the north edges of the tiles at those locations. (An analogous argument holds when the output glues are on the west edges.) Table~\ref{tbl:paths} depicts this. Table~\ref{tbl:paths} (b) through (h) shows seven of the various paths that can assembly in $\mathcal{T}$ from the path depicted in Table~\ref{tbl:paths}(a).  We can construct these paths as follows. Since tiles of $\pi_{\vec{v}_0}^{\vec{v}_1}$ always attach to the north or west, tile types of $\pi_{\vec{v}_0}^{\vec{v}_1}$ can be repeated (in the order that they appear in the path $\pi_{\vec{v}_0}^{\vec{v}_1}$ and with the same orientations) indefinitely in the north-west direction to yield a producible assembly. This gives (a) in Table~\ref{tbl:paths}. Now, as as tile types of $\pi_{\vec{v}_0}^{\vec{v}_1}$ are repeated, at any arbitrary number of repetitions, we can reflect the tile types of $\pi_{\vec{v}_0}^{\vec{v}_1}$ about a vertical axis so that the reflection of the path $\pi_{\vec{v}_0}^{\vec{v}_1}$ assembles instead of $\pi_{\vec{v}_0}^{\vec{v}_1}$. Let $\hat{\pi}_{\vec{v}_0}^{\vec{v}_1}$ denote the reflection of $\pi_{\vec{v}_0}^{\vec{v}_1}$ about a vertical axis. Then, tiles of $\hat{\pi}_{\vec{v}_0}^{\vec{v}_1}$ can be repeated an indefinite number of times to give (b) through (d) in Table~\ref{tbl:paths}. Note that each path of tiles is producible in $\mathcal{T}$ (though perhaps not in the same assembly). Similarly, we can produce paths (e) through (h) in $\mathcal{T}$ by first forming a path where each successive tile attaches to the north-east of the previous tile. This gives an infinite set of of paths such that each path belongs to a producible assembly of $\mathcal{T}$. Denote this set by $S$. Note that $S$ is an infinite subset of $\prodasm{T}$.

\begin{figure}[htp]
\begin{center}
\includegraphics[scale=.11]{images/combs}
\caption{A portion of the configuration $\Gamma$.  This is a map from $\Z^2$ to $T$ obtained by considering tile types from each path given in Table~\ref{tbl:paths}.}
\label{fig:combs}
\end{center}

\end{figure}

Notice that the assemblies in $S$ must be pairwise type-consistent since $\mathcal{T}$ is directed. Consider the map $\Gamma:\Z^2 \to T$ defined by $\vec{x} \mapsto a$ iff $\exists \gamma \in S$ such that $p_T(\gamma(\vec{x})) = a$. It should be noted that the fact that assemblies in $S$ are pairwise type-consistent implies that $\Gamma$ is well-defined. It should also be noted that $\Gamma$ is equal to the map $p_T\circ \alpha|_{\cup_{\gamma\in S} \dom \gamma}$. Notice that it may be the case that there does not exist an assembly $\phi \in \prodasm{T}$ such that $\Gamma = p_T\circ \phi$ since orientations of tiles of the same type at a particular location may differ. Figure~\ref{fig:combs} depicts a portion of this configuration. The points in the domain of $\Gamma$ consist of the union of a finite number of points and a semi-doubly periodic set. This can be seen by considering the vectors defined by $\vec{u} = \vec{v_1} - \vec{v_0} = (a,b)$ and $\vec{v} = (-a,b)$.

\newcolumntype{M}{>{\centering\arraybackslash}m{\dimexpr.25\linewidth-2\tabcolsep}}
\begin{table}[htp]
\centering
\begin{tabular}{| M | M | M |}
	\hline
	\includegraphics[scale=.1]{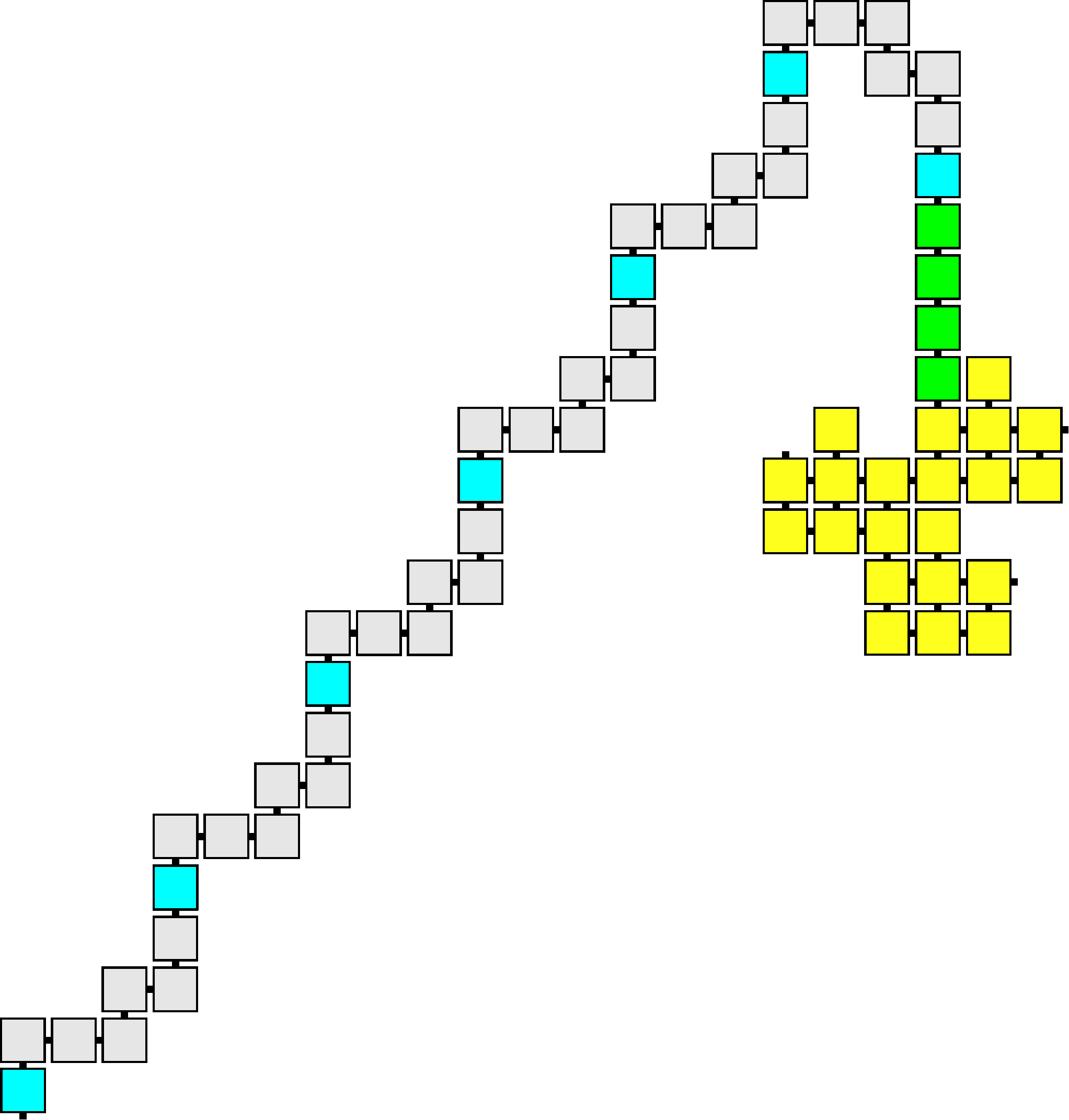}  & \includegraphics[scale=.1]{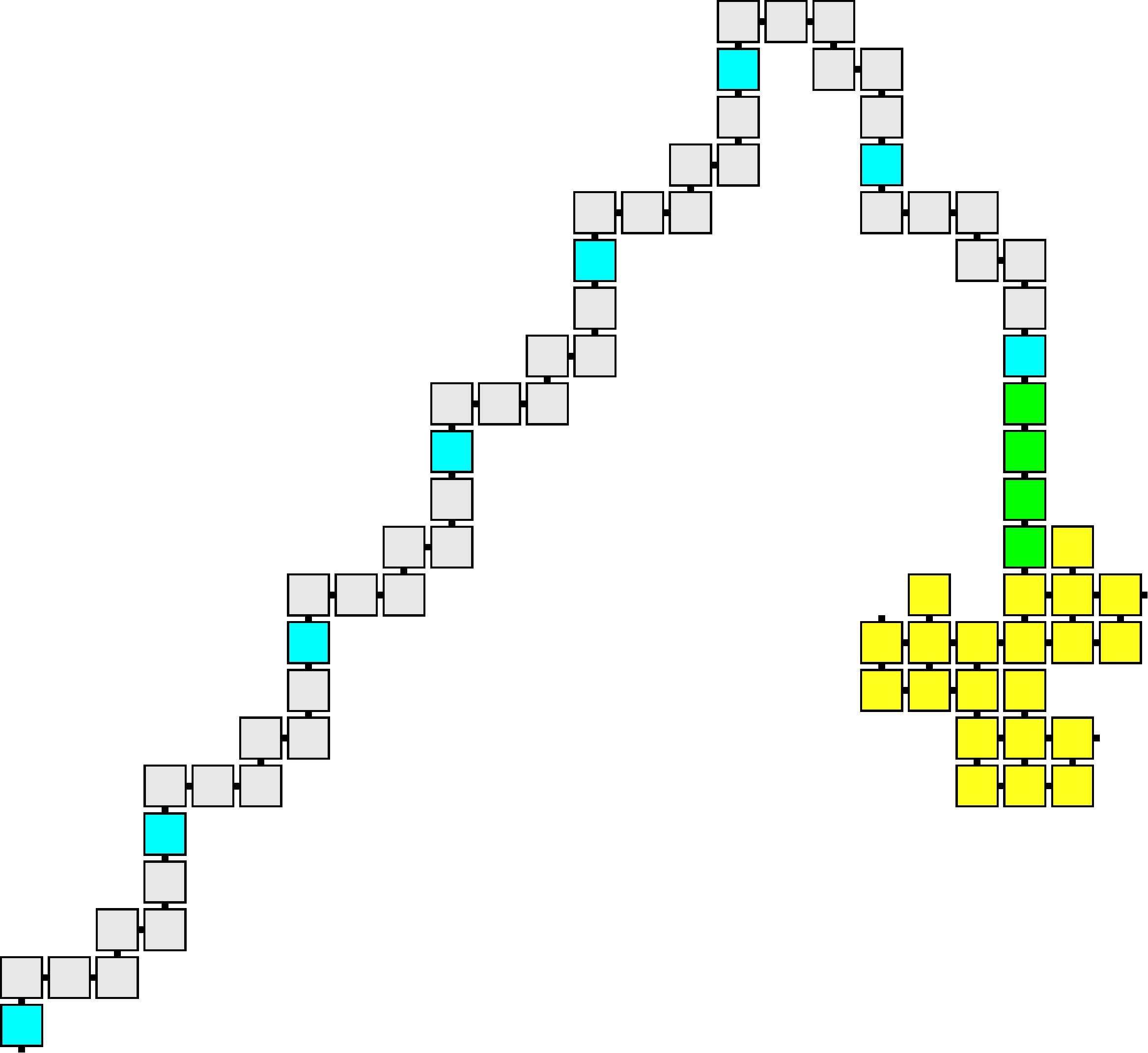} & 	\includegraphics[scale=.1]{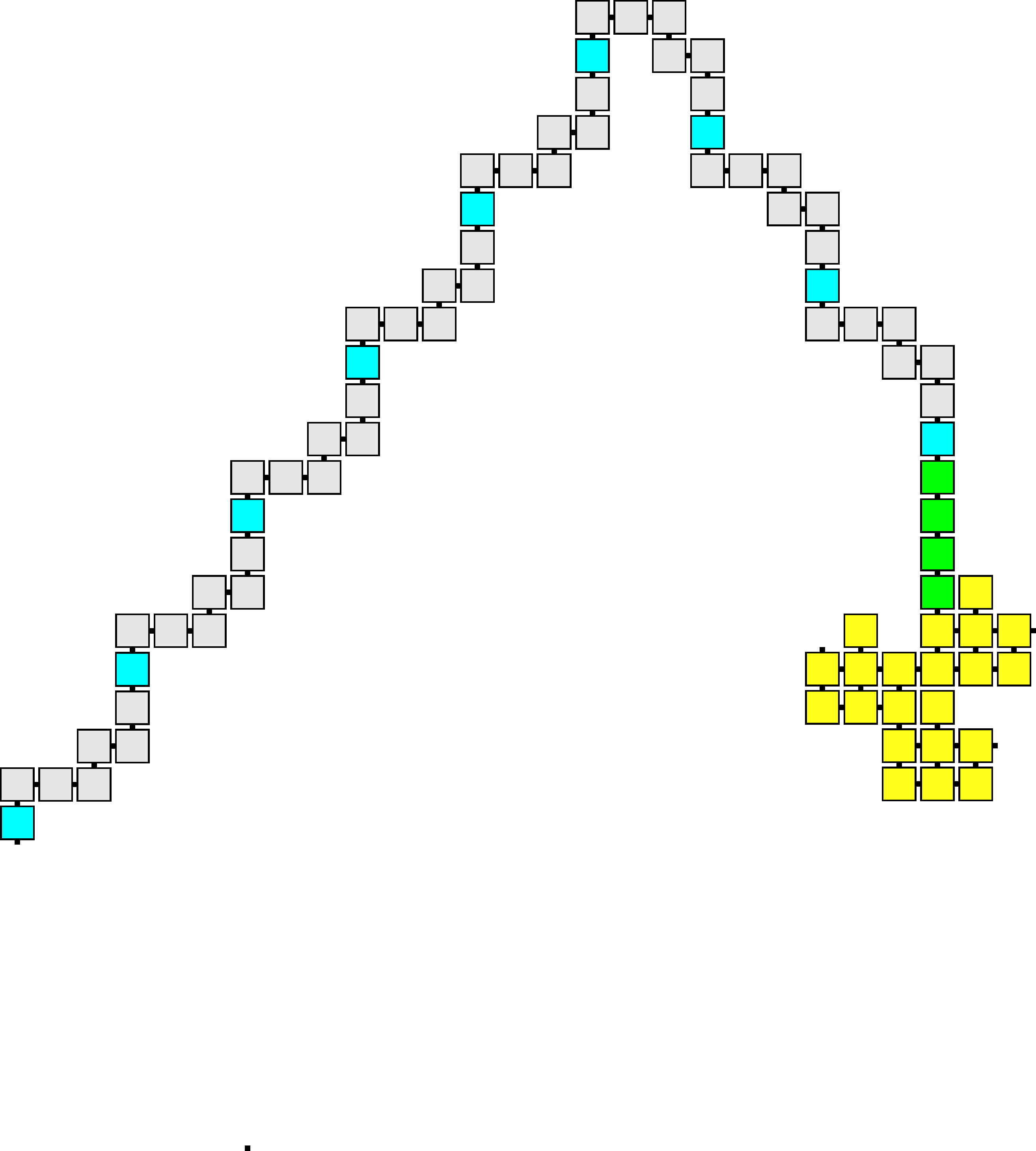} \\
	(a) & (b) & (c) \\\hline
	\includegraphics[scale=.1]{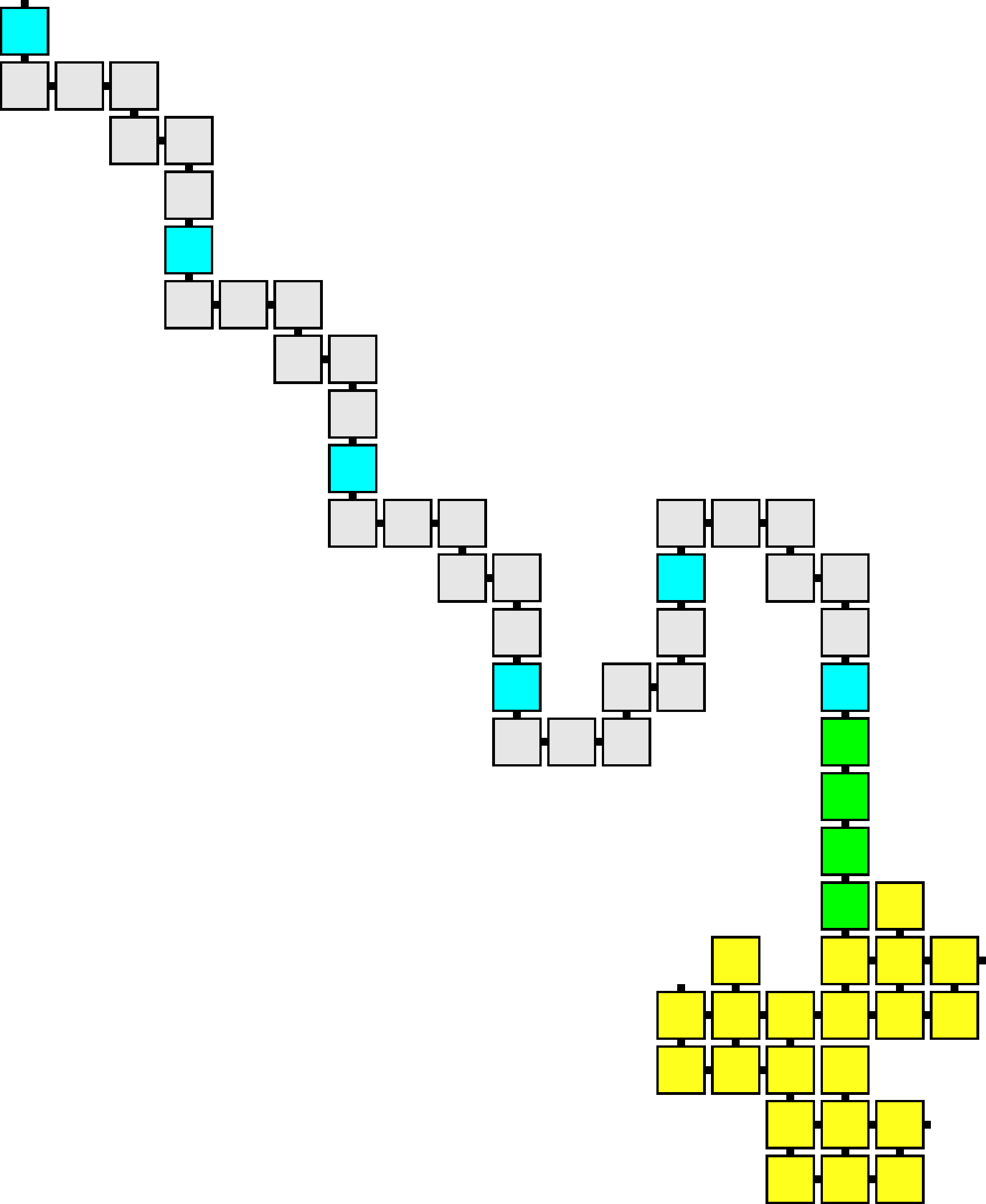}  & \includegraphics[scale=.1]{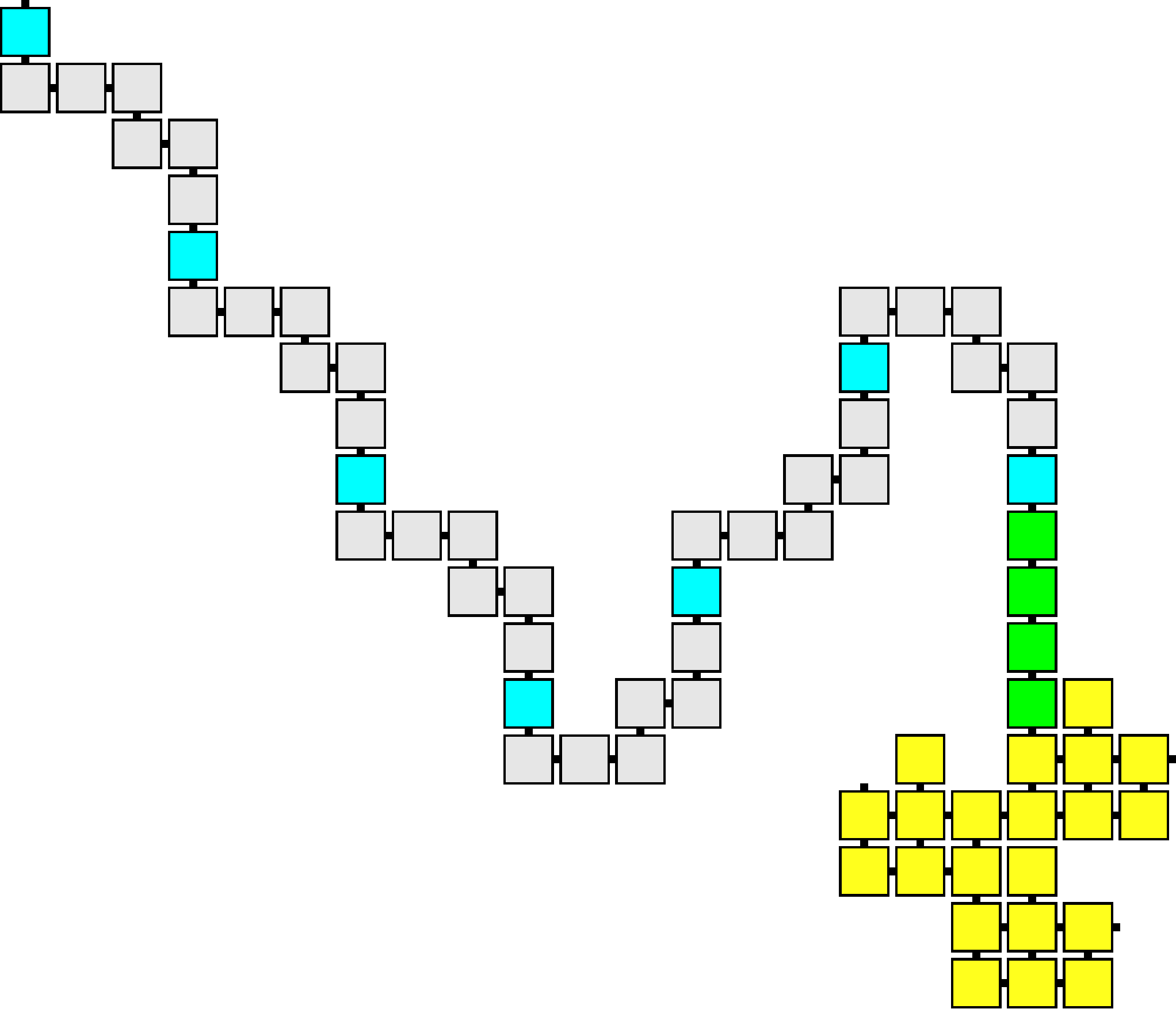} &	\includegraphics[scale=.1]{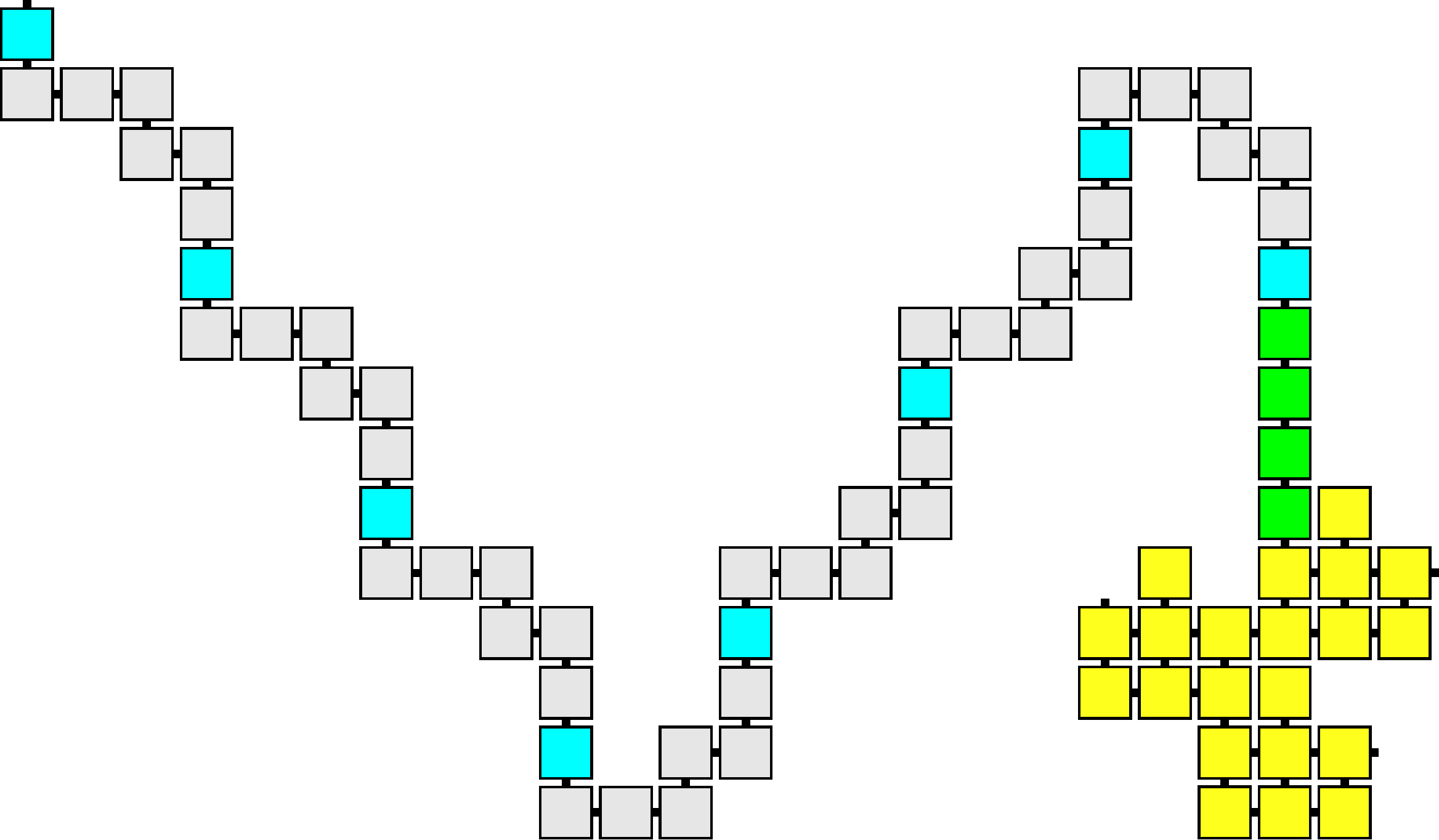} \\
	(e) & (f) & (g) \\\hline
\end{tabular}\caption{Each figure in this table depicts a possible path that can assemble in $\mathcal{T}$. The yellow tiles make up the seed and the green tiles are a path leading to the repeated tile that allows the path to repeat.}\label{tbl:paths2}
\end{table}

Using a similar technique that is used to form the assemblies of $S$, the path $\pi_{\vec{v}_0}^{\vec{v}_1}$ can also repeated (with appropriate reflections) to form type-consistent assemblies containing paths whose domains collectively form a set of points $W$ such that $W$ is the union of four semi-doubly infinite sets. Table~\ref{tbl:paths2} shows six examples of such paths. Because the assemblies whose domains make up $W$ are type-consistent, we can once again give a well-defined partial map $p_T\circ \alpha|_W$ from $\Z^2$ to $T$. Figure~\ref{fig:pump-with-seed} depicts a portion of a configuration given by the map $p_T\circ \alpha|_W$ from $\Z^2$ to $T$. Note that any point of $\Z^2$ is either in $W$ or is in some finite region bounded by points of $W$.

\begin{figure}[htp]
\begin{center}
   \includegraphics[width=3in]{images/pump-with-seed}
\caption{A configuration of tiles that must weakly self-assemble in $\mathcal{T}$.}
\label{fig:pump-with-seed}
\end{center}

\end{figure}

To finish the proof, it now suffices to show that the set of all points in $\dom \alpha$ (not just the points in $W$) is the finite union of semi-doubly periodic sets. First, note that except for possibly a finite number of points, any point in $\dom \alpha$ is contained in some ``parallelogram'' region $R$ bounded by points of $W$, where these bounding points are contained in a semi-doubly periodic set with vectors $\vec{b}$, $\vec{u}$, and $\vec{v}$ as in Definition~\ref{def-doubly-periodic-set}. Note that for each path whose domain makes up $W$ (see Table~\ref{tbl:paths} for examples of these paths), we can orient the tiles of the path so that the exposed glues of each tile of each path are always oriented the same way. This ensures that if some tile can be placed in the region $R$, then the same tile can be placed in the region $R_{n,m} = \{  \vec{r} + n \cdot \vec{u} + m \cdot \vec{v} \mid \vec{r} \in R \}$ for each $n,m\in \N$. See Figure~\ref{fig:periodic-filling} for more details. Therefore, the set of all points in $\dom \alpha$ (even those points that are not contained in $W$) is the finite union of semi-doubly periodic sets, and hence, $\alpha$ is the finite union of semi-doubly periodic sets.

\begin{figure}[htp]
\begin{center}
\def\svgwidth{3in}
   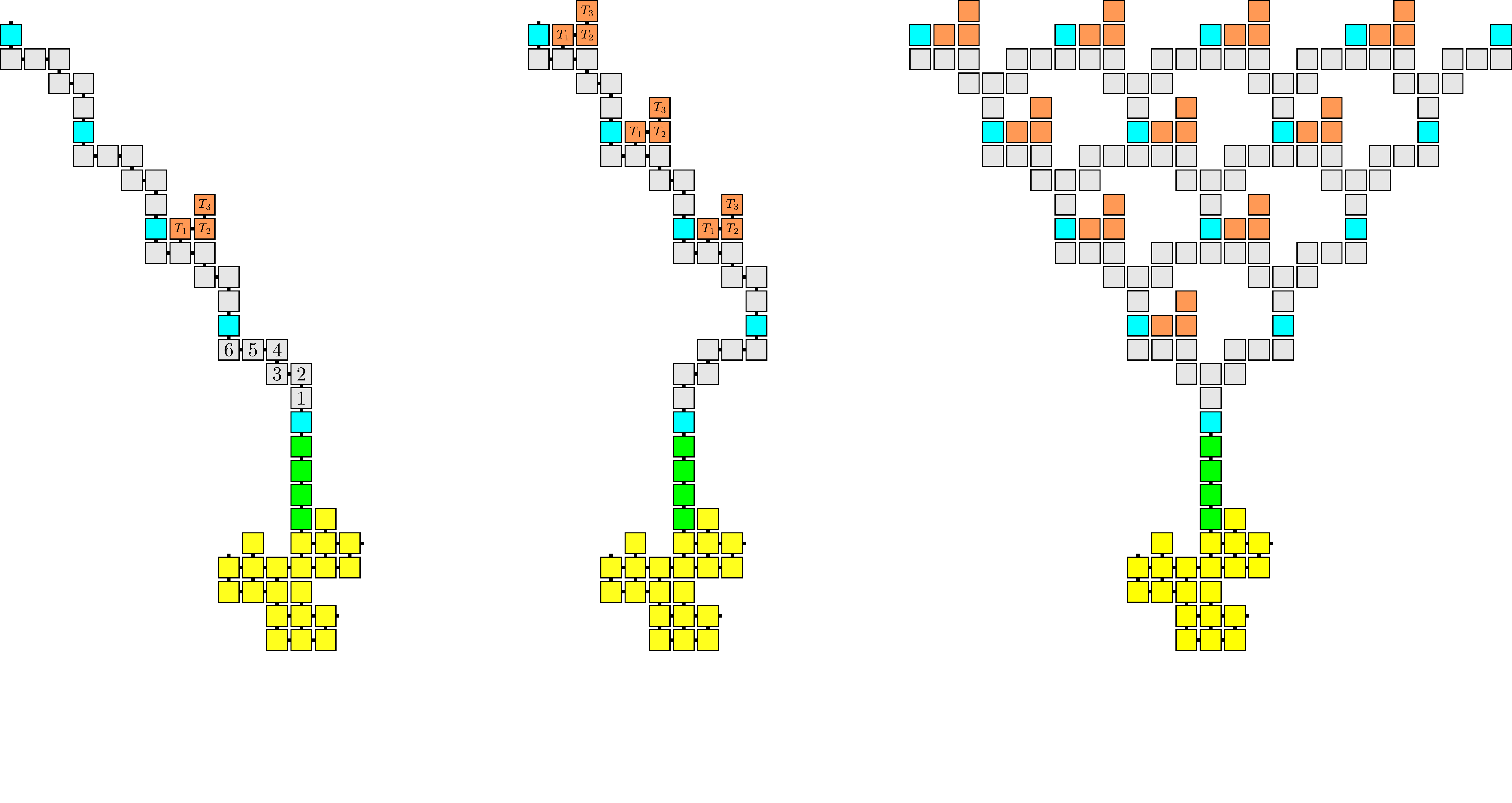
\caption{(a) Tiles $T_1$, $T_2$, and $T_3$ have locations in a region $R$ bounded by points in $W$. (b) We can choose tile orientations such that tiles with the same types as $T_1$, $T_2$, and $T_3$ can bind in regions $R_{n,m}$ for each $n, m \in \N$. (c) A portion of a semi-doubly periodic set containing $T_1$, $T_2$, and $T_3$. }
\label{fig:periodic-filling}
\end{center}

\end{figure}

\iffull
 \end{proof}
\fi

We note that in the proof of Theorem~\ref{thm-no-computation}, we actually explicitly describe the semi-doubly periodic sets that can assemble in an \rtam/ system. This is summed up by the following corollary.

\begin{corollary}
\label{thm-no-computation2} Let $\mathcal{T} = (T,\sigma,1)$ be a
directed \rtam/ system. If a set $X \subseteq \mathbb{Z}^2$ weakly self-assembles in $\mathcal{T}$, then $X$ is a finite union of a box of constant (which depends only on $|T|$ and $|\sigma|$) radius containing the seed and at most $4$ (possibly empty) semi-doubly periodic sets. Moreover, each of these $4$ semi-doubly periodic sets $S_i$ for $1\leq i\leq 4$ is described as follows.

\begin{enumerate}
	\item $S_i$ is empty.
	\item $S_i$ is the finite union of $d\in\N$ horizontal or vertical paths of tiles where $d$ is bounded by $|\sigma|$.
	\item $S_i$ is the set obtained from the proof of Theorem~\ref{thm-no-computation} by modifying a path that does not lie on a horizontal or vertical line. Note that the number of such sets in this case is bounded by $|T|$.
\end{enumerate}
\end{corollary}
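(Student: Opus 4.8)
The plan is to read off the refined decomposition directly from the proof of Theorem~\ref{thm-no-computation}, which already manufactures every semi-doubly periodic piece explicitly; the new work is to bound their number and to sort each piece into one of the three listed types. As in that proof, I would fix a terminal $\alpha \in \termasm{\mathcal{T}}$ type-consistent with $\sigma$, let $c$ be minimal with $\sigma \subseteq B_c(\vec{0})$, and set $c' = 4|T| + c + 1$. If $\overline{B_{c'}(\vec{0})} \cap \dom \alpha = \emptyset$ then $X$ is finite and is absorbed entirely by the box term, with all four $S_i$ empty; so I assume growth escapes $B_{c'}(\vec{0})$. Because the entire argument operates at the level of tile \emph{types} rather than mere occupancy, the structure it produces for $\dom \alpha$ transfers to the painted set $X = \alpha_r^{-1}(B)$: within each periodic region the tile types repeat with the generating periods, so the sub-collection of positions colored by $B$ repeats with the same periods.

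The four pieces arise by sorting escaping growth according to direction of first exit from the box. Every tile of $\alpha$ outside $B_{c'}(\vec{0})$ is joined to $\sigma$ by a path whose first vertex outside $B_c(\vec{0})$ lies strictly north, east, south, or west of the box, and I would let $S_1,\dots,S_4$ collect the (colored) tiles of the paths exiting in each of these four directions; any point of $\dom\alpha$ not placed in some $S_i$ lies in $B_{c'}(\vec{0})$ and is charged to the box term. This yields the bound of four periodic pieces. For a fixed exit direction I would then invoke the dichotomy of Theorem~\ref{thm-no-computation}. If nothing exits in that direction, the piece is empty (item~1). Otherwise, straightening each exiting path by reflection and applying pigeonhole produces a repeated tile type in a fixed orientation at nodes $\vec{v}_0,\vec{v}_1$, hence a pumpable segment $\pi_{\vec{v}_0}^{\vec{v}_1}$. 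If no such segment for this direction turns, every exiting path straightens onto a single horizontal or vertical line and the piece is a finite union of straight periodic rays (item~2); if some exiting segment turns, the reflection-and-pumping construction of the theorem (Tables~\ref{tbl:paths} and~\ref{tbl:paths2}) generates from it a genuinely two-dimensional semi-doubly periodic set with periods $\vec{u} = \vec{v}_1 - \vec{v}_0 = (a,b)$ and $\vec{v} = (-a,b)$ as in Definition~\ref{def-doubly-periodic-set} (item~3).

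To complete each piece I would reuse the parallelogram-filling argument of Figure~\ref{fig:periodic-filling}: orienting the tiles of each generating path consistently forces any tiles occupying a bounded region $R$ bounded by skeleton points to occupy every translate $R_{n,m} = \{\vec{r} + n\vec{u} + m\vec{v} \mid \vec{r}\in R\}$, so that adjoining the non-skeleton interior points keeps each $S_i$ semi-doubly periodic while leaving only finitely many stragglers inside $B_{c'}(\vec{0})$. This exhibits $X$ as $(X \cap B_{c'}(\vec{0})) \cup S_1 \cup S_2 \cup S_3 \cup S_4$, a finite set inside a box together with at most four semi-doubly periodic pieces of the claimed form. The main obstacle I anticipate is not the geometry but the two counting bounds: establishing $d \le |\sigma|$ in item~2 requires arguing that distinct straight rays in one direction must be anchored to distinct seed-determined lanes, while establishing the $|T|$ bound in item~3 requires showing that each turning piece is pinned to one repeated tile type and that the reflection freedom of the \rtam/, constrained by directedness, cannot merge two genuinely different periodic patterns into a single region. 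Verifying that these indexings are well defined — so that the counts are exactly $|\sigma|$ and $|T|$ rather than merely finite — is where the care lies.
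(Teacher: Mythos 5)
Your proposal is correct and takes essentially the same approach as the paper: the paper's own proof simply notes that the number of paths that may assemble from a single glue exposed by $\sigma$ is bounded by $|T|$, takes the box radius $\kappa$ to be the same constant $c'$ from the proof of Theorem~\ref{thm-no-computation}, and reads the decomposition into at most $4$ semi-doubly periodic pieces directly off that proof's construction. The counting bounds you flag as the delicate point ($d \le |\sigma|$ in item~2 and the $|T|$ bound in item~3) are asserted but not elaborated in the paper either; its proof is in fact far terser than your sketch.
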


\begin{proof}
Note that the number of paths $\pi$ as described in the proof of Theorem~\ref{thm-no-computation} that may assemble from a single glue exposed by $\sigma$ is bounded by $|T|$. Therefore, there is a constant $\kappa$ that depends only on $|\sigma|$ and $|T|$ such that for the terminal assembly $\alpha$ of $\mathcal{T}$, $\alpha|_{\overline{B_\kappa(0)}}$ (the portion of $\alpha$ outside of a box of radius $\kappa$) is the union of $4$ semi-doubly periodic sets. We may take $\kappa$ to be $c'$ given in the proof of Theorem~\ref{thm-no-computation} given in Section~\ref{thm-no-computation}.
\end{proof}
} %

An intuitive reason that Theorem~\ref{thm-no-computation} supports the conclusion that the \rtam/ is not computationally universal is as follows. Let $H = \{(i,x)\mid \text{program }$\ $ i \text{ halts when run on input } x\}$ be the halting set and let $M$ be a Turing machine that outputs a $1$ if $(i,x)\in H$. The typical way of expressing the computation of $M$ in tile assembly is as follows. For a fixed tileset $T$, a seed assembly $\sigma$ encodes an ``input'' to the computation, while the ``output'' of $1$ by $M$ corresponds to the translation of some configuration being contained in the terminal assembly $\alpha_\sigma$ of $(T,\sigma, 1)$. For this sense of computation, the following corollary says that the set of seed assemblies that ``output'' a $1$ is a recursive set (not just a recursively enumerable set). This would contradict the fact that the halting set is not recursive. This is stated in the following corollary. For a more formal statement of this corollary and a proof, see Section~\ref{sec:thm-no-computation-proof}.

\begin{corollary}\label{cor:recusive-set}
For any tileset $T$ in the \rtam/ and fixed finite configuration $C$, let $S$ be the set of seed assemblies $\sigma$ such that (1) the \rtam/ system $(T, \sigma, 1)$ is directed and (2) the terminal assembly of $\mathcal{T}$ contains $C$. Then, $S$ is a recursive set.
\end{corollary}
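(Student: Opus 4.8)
The plan is to turn the structural description of Corollary~\ref{thm-no-computation2} into a terminating decision procedure. The key point is that for a directed temperature-$1$ system $(T,\sigma,1)$ the entire terminal configuration is determined by a finite amount of data living inside a box $B_\kappa(\vec{0})$ whose radius $\kappa = c'$ depends only on $|T|$ and $|\sigma|$; since $T$ is fixed and $\sigma$ is a finite input, $\kappa$ is computable. Everything outside this box is a union of at most four semi-doubly periodic sets, and each such set is specified by its period vectors $\vec{u},\vec{v}$ together with the bounded-length pumpable path germ that generates it, exactly as in the proof of Theorem~\ref{thm-no-computation}.

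First I would show that the set of all producible tile-type placements is computable from $\sigma$. At temperature $1$ a location carries a producible tile iff there is a binding path to it from the seed, and by the pumping argument of Theorem~\ref{thm-no-computation} every producible path either stays inside $B_{c'}(\vec{0})$ or, once it leaves, contains a $(\text{type},\text{orientation})$ pair repeated among its nodes in $\overline{B_c(\vec{0})}$ and is therefore generated by pumping (and reflecting) one of finitely many germs, each a binding path of length at most a computable bound $N = N(|T|,\kappa)$. Crucially, this pumping argument uses only the flip operation and holds whether or not the system is directed, so it applies uniformly to \emph{every} input $\sigma$. Hence, enumerating the finitely many binding paths of length at most $N$ yields every germ, and the full set of producible placements is the union of the finitely many interior placements with the periodic families obtained by pumping and reflecting these germs --- a finite symbolic description that the procedure can always compute and which therefore guarantees termination.

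Next I would decide directedness. From the finite description above one forms the candidate union configuration $\Gamma$ (the map of Figure~\ref{fig:combs}) and tests whether producible assemblies are mutually consistent. The observation that makes this a \emph{bounded}, hence decidable, check is that any disagreement at a far-away location is pulled back by pumping to a disagreement among the finitely many germs, and whether two periodic families overlap at a common location with conflicting tile types is itself decidable by elementary lattice arithmetic on the period vectors $\vec{u},\vec{v}$. One further point must be handled: directedness only requires producible assemblies to agree \emph{up to a global reflection $r\in R$ and translation $\vec{v}$}, so literal single-valuedness of $\Gamma$ is not exactly the criterion. However, since every producible assembly contains the fixed finite seed inside the aperiodic core $B_\kappa(\vec{0})$, only finitely many pairs $(r,\vec{v})$ can align these cores, and one checks consistency under each such candidate symmetry. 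If no consistent alignment exists the procedure rejects ($\sigma\notin S$); otherwise the terminal configuration is fully and finitely described.

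Finally, with the unique terminal configuration in hand, condition~(2) is decided by testing whether some reflection/translation of the finite configuration $C$ embeds into it. Because the configuration is eventually periodic, if a copy of $C$ occurs anywhere then a copy occurs inside $B_\kappa(\vec{0})$ enlarged by one period in each direction and by the diameter of $C$; this is a bounded region, so the embedding test is a finite search. The procedure outputs ``yes'' precisely when the system is directed and such an embedding is found, establishing that $S$ is recursive. I expect the main obstacle to be the directedness step --- proving rigorously that global non-directedness, including disagreements that surface only far from the seed and the reflection/translation slack in the definition, is always witnessed within a computable bounded radius. This is exactly where the pumping structure of Theorem~\ref{thm-no-computation} must do the essential work.
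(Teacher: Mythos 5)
Your overall strategy matches the paper's where the paper actually gives an argument: the paper's proof is a three-sentence sketch that invokes Corollary~\ref{thm-no-computation2} and observes that, given the structural description of the terminal configuration (a bounded box around the seed plus at most four semi-doubly periodic sets), testing whether a translate of $C$ occurs in it reduces to searching a finite portion whose size depends only on $|T|$ and $|\sigma|$ --- exactly your final step. Where you diverge is that you attempt to turn this into a complete decision procedure by also (i) computing a finite symbolic description of all producible placements from $\sigma$ and (ii) deciding directedness; the paper's proof is silent on both points.

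The genuine gap is in your claim that ``the full set of producible placements is the union of the finitely many interior placements with the periodic families obtained by pumping and reflecting these germs,'' uniformly in $\sigma$ and ``whether or not the system is directed.'' The pumping argument of Theorem~\ref{thm-no-computation} does not show this. What it shows is that any path leaving $B_{c'}(\vec{0})$ can be \emph{modified} --- its tiles reflected one by one so that the path grows monotonically to the north-west --- into a pumpable path; the modified path occupies \emph{different} locations than the original, and pumping the original path in place can self-intersect (this is exactly the obstruction that makes the analogous temperature-$1$ aTAM questions hard). Directedness is precisely the hypothesis the paper uses to transfer information from the relocated, pumped paths back to actual locations: since all producible assemblies of a directed system must be type-consistent, the pumped paths pin down what the terminal configuration contains at those locations. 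Without directedness this transfer is unavailable, so your candidate description need not capture the producible set of a non-directed system, and your consistency check is therefore not shown to reject exactly the non-directed seeds: a disagreement could, a priori, first arise along an unpumpable path that your germ enumeration never generates. You flag this as the main obstacle yourself, and rightly so; but note that the paper does not resolve it either --- its proof decides only condition (2) assuming the structural description is in hand, and says nothing about deciding condition (1). So your proposal is faithful to, and more ambitious than, the paper's argument, but the directedness step remains unproven in both.
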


\later{
Now we give a corollary that provides evidence to the conclusion that the directed temperature-1 \rtam/ is not computationally universal.
For a tileset $T$, fix a finite configuration $C_T$ over $T$, and let $C_T^{\vec{v}}$ be the translation of $C_T$ by the vector $\vec{v}$. In other words, $C_T$ is a partial function from $\Z^2$ to $T$, and $C_T^{\vec{v}}$ is the configuration given by the partial map $C_T^{\vec{v}}(x) = C_T(x - \vec{v})$ for $x\in \{z \mid z - \vec{v}\in\dom(C_T)\}$. Then, let $A(C_T)$ be the set of all seed assemblies $\sigma$ such that
the \rtam/ system $(T,\sigma,1)$ is directed and the terminal assembly $\alpha_\sigma$ of the \rtam/ system $(T,\sigma,1)$ contains some translation of the configuration $C_T$. More formally, $A(C_T)$ is the set of all seed assemblies $\sigma$ such that the \rtam/ system $(T, \sigma, 1)$ is directed and the terminal assembly $\alpha_\sigma$ of \rtam/ system $(T,\sigma,1)$ has the property that there exists a vector $\vec{v}$ in $\Z^2$ such that $p_T\circ \alpha_{\sigma}(x) = C_T^{\vec{v}}(x)$ for all $x\in \dom(C_T^{\vec{v}})$.

Intuitively, we are considering the set $C_T$ (and its translations) for the following reason. Let $H = \{(i,x)\mid \text{program } i \text{ halts when run on input } x\}$ be the halting set and let $M$ be a Turing machine that outputs a $1$ if $(i,x)\in H$. The typical way of expressing the computation of $M$ in tile assembly is as follows. For a fixed tileset $T$, a seed assembly $\sigma$ encodes an ``input'' to the computation, while the ``output'' of $1$ by $M$ corresponds to some configuration being contained in the terminal assembly $\alpha_\sigma$ of $(T,\sigma, 1)$. For this sense of computation, the following corollary says that the set of seed assemblies that ``output'' a $1$ is a recursive set (not just a recursively enumerable set). This would contradict the fact that the halting set is not recursive.
The following corollary is a restatement of Corollary~\ref{cor:recusive-set}.

\begin{corollary}
For any tileset $T$ in the \rtam/ and configuration $C_T$ over $T$, $A(C_{T})$ is a recursive set.
\end{corollary}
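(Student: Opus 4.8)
The plan is to exploit the fact that Corollary~\ref{thm-no-computation2} makes the terminal configuration of a directed system \emph{finitely describable}, and then to argue that every step of deciding membership in $A(C_{T})$ can be carried out by a bounded, terminating computation. Given a seed $\sigma$, I would first compute $c$, the least radius with $\sigma \subseteq B_c(\vec{0})$, and then the radius $\kappa = c' = 4|T|+c+1$ appearing in the proof of Theorem~\ref{thm-no-computation}; both are computable directly from $|T|$ and $|\sigma|$. By Corollary~\ref{thm-no-computation2}, whenever $(T,\sigma,1)$ is directed its terminal configuration consists of a core contained in $B_\kappa(\vec{0})$ together with at most four semi-doubly periodic sets, each specified by finitely much data (a base point $\vec{b}$ and period vectors $\vec{u},\vec{v}$ as in Definition~\ref{def-doubly-periodic-set}, together with the tile types occurring in one fundamental domain). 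The size of this entire description is bounded by a computable function of $|T|$ and $|\sigma|$, so the infinite object $\alpha_\sigma$ is reduced to a finite one.

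The first real task is to \emph{decide directedness}, since membership in $A(C_{T})$ requires it and since Theorem~\ref{thm-no-computation} only describes systems already known to be directed. Here I would argue that there is a computable radius $\rho \ge \kappa$ such that $(T,\sigma,1)$ is directed if and only if all producible assemblies, restricted to $B_\rho(\vec{0})$, are pairwise consistent up to one of the finitely many symmetries of the seed (the pairs $(r,\vec{v})$ with $r \in R$ for which $F(\sigma,r,\vec{v})$ agrees with $\sigma$ on their common domain; this symmetry set is finite and computable because $\sigma$ is finite). The justification reuses the reflective pumping of Theorem~\ref{thm-no-computation}: any pair of producible paths witnessing a disagreement far from the seed can be pulled back, using a repeated tile type and orientation lying inside $B_c(\vec{0})$, to a disagreement already visible inside $B_\rho(\vec{0})$. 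Because only finitely many tiles fit in $B_\rho(\vec{0})$, and because at temperature $1$ producibility is witnessed by paths whose relevant length can be bounded, there are only finitely many producible restrictions to enumerate, so the consistency check runs in finite time. If it fails, the system is undirected and the procedure rejects $\sigma$.

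If the system is directed, the same bounded simulation reveals the finite core inside $B_\kappa(\vec{0})$ together with the repeating segments $\pi_{\vec{v}_0}^{\vec{v}_1}$, and hence the period vectors $\vec{u} = \vec{v}_1 - \vec{v}_0 = (a,b)$ and $\vec{v} = (-a,b)$ used in the proof --- exactly the finite description of $\alpha_\sigma$ promised by Corollary~\ref{thm-no-computation2}. It then remains to decide whether some translate of the fixed finite configuration $C_T$ occurs in $\alpha_\sigma$. Since $C_T$ is finite and $\alpha_\sigma$ is eventually periodic, any occurrence of a translate of $C_T$ either meets the core region $B_\kappa(\vec{0})$ or lies entirely inside the periodic region, in which case it is a lattice translate (by some $n\vec{u} + m\vec{v}$) of an occurrence meeting a single fundamental domain. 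Enlarging $B_\kappa(\vec{0})$ and one fundamental domain of each periodic set by the diameter of $C_T$ therefore yields a finite list of candidate translation vectors, each tested by a finite comparison of configurations; the procedure accepts iff one such test succeeds. Since every branch terminates, $A(C_{T})$ is recursive.

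The main obstacle is the middle step: producing the computable radius $\rho$ and proving that directedness --- a global property defined modulo a single reflection and translation of the entire assembly --- is faithfully detected within $B_\rho(\vec{0})$. The delicate points are controlling paths that leave $B_\rho(\vec{0})$ and re-enter it, which must be shown to create no disagreements that are not already forced (by appeal to the periodicity established in Theorem~\ref{thm-no-computation}), and correctly folding the finite symmetry group of the seed into the consistency check. Once $\rho$ is in hand, the remaining steps are routine finite computations.
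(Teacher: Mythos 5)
Your core argument coincides with the paper's own (very brief) proof: the paper simply appeals to Corollary~\ref{thm-no-computation2} --- the terminal assembly of a directed temperature-$1$ system is a box of computable radius around the seed together with at most four semi-doubly periodic sets --- and observes that deciding whether a translate of the finite configuration $C_T$ occurs therefore reduces to checking a finite portion of each piece. Your step of enlarging $B_\kappa(\vec{0})$ and one fundamental domain of each periodic set by the diameter of $C_T$ and testing the resulting finite list of candidate translations is exactly a formalization of that one sentence, so on this part you match the paper. Where you genuinely depart is in trying to \emph{decide directedness} itself. The paper's proof is completely silent on this, even though $A(C_T)$ as defined conjoins directedness with containment of $C_T$, so a fully rigorous proof of recursiveness does require something like your middle step; you are right to single it out as the real obstacle. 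However, your treatment of it remains a plan rather than a proof: you do not exhibit the radius $\rho$, nor prove that every disagreement between producible assemblies can be pulled back inside $B_\rho(\vec{0})$ (paths that exit and re-enter the box, and reflection/translation pairs that match configurations without being symmetries of the seed, are precisely the cases your sketch waves at). In short, for the part the paper actually proves your argument is the same and is correct; the additional step you attempt is one the paper skips entirely, and neither you nor the paper closes it.
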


\begin{proof}(sketch)
In each of the cases for the sets $S_i$ described in Corollary~\ref{thm-no-computation2}, to determine if some $C_{T}^{\vec{v}}$ is contained in one of the $S_i$ subassemblies one need only check whether or not a finite portion of $S_i$ contains a translation of $C_{T}$, which can be done in a finite number of steps depending only on $|\sigma|$ and $|T|$.
\end{proof}

} %

\subsection{Universal computation at $\tau=2$}\label{sec:temp2-comp}
\vspace{-8pt}
In this section give a theorem that states that universal computation is possible in the RTAM at temperature $2$. We give an example of simulating a binary counter in the \rtam/ and give the general proof in Section~\ref{sec-temp2-computation} in the appendix.
\vspace{-4pt}
\begin{theorem}\label{thm:comp-univ}
The \rtam/ is computationally universal at $\tau=2$. Moreover, the class of directed \rtam/ systems is computationally universal at $\tau=2$.
\end{theorem}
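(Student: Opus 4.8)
The plan is to reduce to the known temperature-$2$ universality of the aTAM and then \emph{flip-proof} the construction. I would start from a standard directed aTAM temperature-$2$ system that simulates an arbitrary Turing machine in the usual zig-zag fashion (each row encoding one tape configuration, with computation tiles attaching cooperatively to a south neighbor and an east-or-west neighbor); the binary counter described just before the theorem is the illustrative special case of this technique. Since such an aTAM system is directed and Turing-universal, it suffices to build an \rtam/ system whose producible assemblies are exactly the reflections and translations of those of the aTAM system, so that the \rtam/ system is directed and performs the same computation.

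The core task is therefore to design the glue labels so that, at $\tau=2$, no tile can ever stably attach in a reflected orientation within the frame fixed by the seed. Two structural facts do most of the work. First, the four permissible reflections in $R$ preserve the partition of sides into $NS$ and $EW$: a reflection can only swap a tile's north and south glues or its east and west glues, never move a glue between the $NS$ and $EW$ pairs. I would exploit this by drawing all $NS$ glue labels from one alphabet and all $EW$ glue labels from a disjoint alphabet, so that reflections introduce no new cross-matches between the two classes. Second, because the \rtam/ uses explicit complementary glues and stable attachment requires total bond strength at least $\tau=2$ (hence, with unit-strength glues, two matching glues), a reflected tile can attach only if it simultaneously presents both required input glues. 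I would give each tile four distinct labels, append to each label a tag recording its intended cardinal direction, and choose complements so that the complement of any north-glue is never equal to any south-glue (and similarly for $E/W$). Then a tile reflected across the $x$-axis presents its south label where its north label is expected and vice versa, and by distinctness of the tagged labels at least one of the two cooperative bonds fails; the analogous statement holds for $H$ and for $B$.

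With the glue design in place, the proof proceeds by induction on an assembly sequence. The seed is placed in a fixed orientation, so the base case fixes the global frame. For the inductive step I would show that at every frontier location reachable in the simulation, the only reflection of any tile type achieving bond strength at least $2$ against the already-placed (default-oriented) tiles is the default orientation; this is a finite case analysis over the four reflections and the local glue environment, using the two structural facts above to eliminate each non-default orientation. It then follows that every producible \rtam/ assembly is type-consistent with, and in fact equal up to a global reflection and translation to, the corresponding aTAM assembly, which yields both directedness of the \rtam/ system and the faithful simulation of the Turing machine.

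I expect the main obstacle to be establishing this no-reflected-attachment invariant uniformly: one must verify that across all four reflection types and every local configuration of exposed glues that can occur during the zig-zag simulation, no reflected tile ever obtains two matching glues. The real danger is a parasitic cooperative attachment in which a flipped tile matches two glues belonging to two \emph{different} neighbors (for instance a south glue and a west glue that, after reflection, happen to coincide with glues exposed elsewhere in the assembly). Handling these cross-neighbor matches, rather than only the single-neighbor case, is the delicate part, and it is exactly what the disjoint $NS$/$EW$ alphabets and the direction-tagged complements are engineered to preclude.
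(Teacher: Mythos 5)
Your plan hinges on an invariant that cannot be established: that no tile can ever stably attach in a reflected orientation. This fails precisely at the attachments your glue engineering cannot reach, namely tiles that bind via a \emph{single} strength-$2$ glue. In any singly seeded $\tau=2$ system such attachments are unavoidable (the very first tile to attach to the one-tile seed must use one, and every zig-zag row transition uses one), and for such a tile the reflection that fixes the binding side is undetectable by any labeling scheme: a horizontal flip leaves the south glue on the south side, and glues carry no orientation along their edge, so the flipped tile presents \emph{exactly} the same glue to the assembly and binds with exactly the same strength. Disjoint $NS$/$EW$ alphabets and direction-tagged complements are irrelevant here because the second (cooperative) glue you are counting on to break the tie does not exist---the attachment is stabilized by one strength-$2$ bond alone. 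Your inductive step is therefore false at every such frontier location, and with it the conclusion that producible \rtam/ assemblies equal the aTAM assemblies up to one global reflection.

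The paper's proof concedes this point rather than fighting it: instead of preventing flips, it makes them harmless. Tiles of the aTAM zig-zag system that attach via a strength-$2$ south glue are given identical east and west glues, and corner tiles (which attach via a strength-$2$ east/west glue) are given identical north and south glues, so that the uncontrollable flip produces the same tile behavior either way; cooperative tiles are kept unchanged, using the same observation you make that two-sided binding forces orientation. The price of this symmetrization is that the duplicated glue on a corner tile exposes a strength-$2$ glue below the row, allowing one ``misplaced'' tile per row to attach; the paper then argues these misplaced tiles can never cooperate to place anything further, so the simulation of the Turing machine (read off the northernmost row) is unaffected and the system is still directed. So the achievable statement is weaker than your target---the \rtam/ system is ``almost'' zig-zag, not an exact reflected copy of the aTAM system---and reaching even that requires the symmetrization-plus-damage-control idea that your proposal is missing.
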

\vspace{-4pt}

First we note that given a Turing machine $M$, we use Lemma 7 of \cite{CookFuSch11} to obtain a tile set which simulates $M$ using a \emph{zig-zag system}.  In fact, as noted in \cite{SingleNegative}, we can find a singly seeded \emph{compact zig-zag system} $\calT = (T, \sigma, 2)$ with $\termasm{T}=\{\alpha\}$ which simulates $M$.
Then the proof of Theorem~\ref{thm:comp-univ} relies on showing that any compact zig-zag system in the aTAM at temperature $2$ can be converted into a directed \rtam/ system $\cal S$ that is ``almost'' compact zig-zag. The \rtam/ system that we construct differs from a compact zig-zag system in that when the length of a row of the growing zig-zag assembly increases by a tile, a strength-$2$ glue is exposed that allows a tile to bind below the row. This results in the possibility of a single ``misplaced'' tile per row, but nevertheless, this is enough to simulate a Turing machine. The proof of Theorem~\ref{thm:comp-univ} can be found in Section~\ref{sec-temp2-computation}

\later{
\section{Proof of Theorem~\ref{thm:comp-univ}}\label{sec-temp2-computation}

In this section we present the proof that the \rtam/ is computationally universal at $\tau=2$.
As previously stated, we note that given a Turing machine $M$, we use Lemma 7 of \cite{CookFuSch11} to obtain a tile set which simulates $M$ using a \emph{zig-zag system}.  In fact, as noted in \cite{SingleNegative}, we can find a singly seeded \emph{compact zig-zag system} $\calT = (T, \sigma, 2)$ with $\termasm{T}=\{\alpha\}$ which simulates $M$.
Then the proof of Theorem~\ref{thm:comp-univ} relies on showing that any compact zig-zag system in the aTAM at temperature $2$ can be converted into a directed \rtam/ system $\cal S$ that is ``almost'' compact zig-zag. The \rtam/ system that we construct differs from a compact zig-zag system in that when the length of a row of the growing zig-zag assembly increases by a tile, a strength-$2$ glue is exposed that allows a tile to bind below the row. This results in the possibility of a single ``misplaced'' tile per row, but nevertheless, this is enough to simulate a Turing machine.
In addition, notice that the orientation of the first tile which binds to the seed tile is such that a strength 2 glue is exposed on either the west or the east.  Since the two assemblies obtained from the different binding orientations of this tile are the same (up to reflection), this does not affect the assembly which $\mathcal{S}$ produces. First, we give an example of how to simulate a zig-zag.

\parshape 1
0pt \dimexpr0.6\textwidth
Figure~\ref{fig:t2Example} shows how we convert a tile set used in a system which simulates a counter in the aTAM at $\tau =2$ (part (a) of the figure) into a tile set used in a system which simulates a counter in the RTAM at $\tau=2$ (part (b) of the figure).  Part (c) of Figure~\ref{fig:t2Example} shows a part of the assembly of the RTAM system $S$ which simulates the counter.

\parshape 1
0pt \dimexpr\textwidth\

\begin{figure}[htp]
\begin{center}
\vspace{-120pt}
   \includegraphics[width=3.5in]{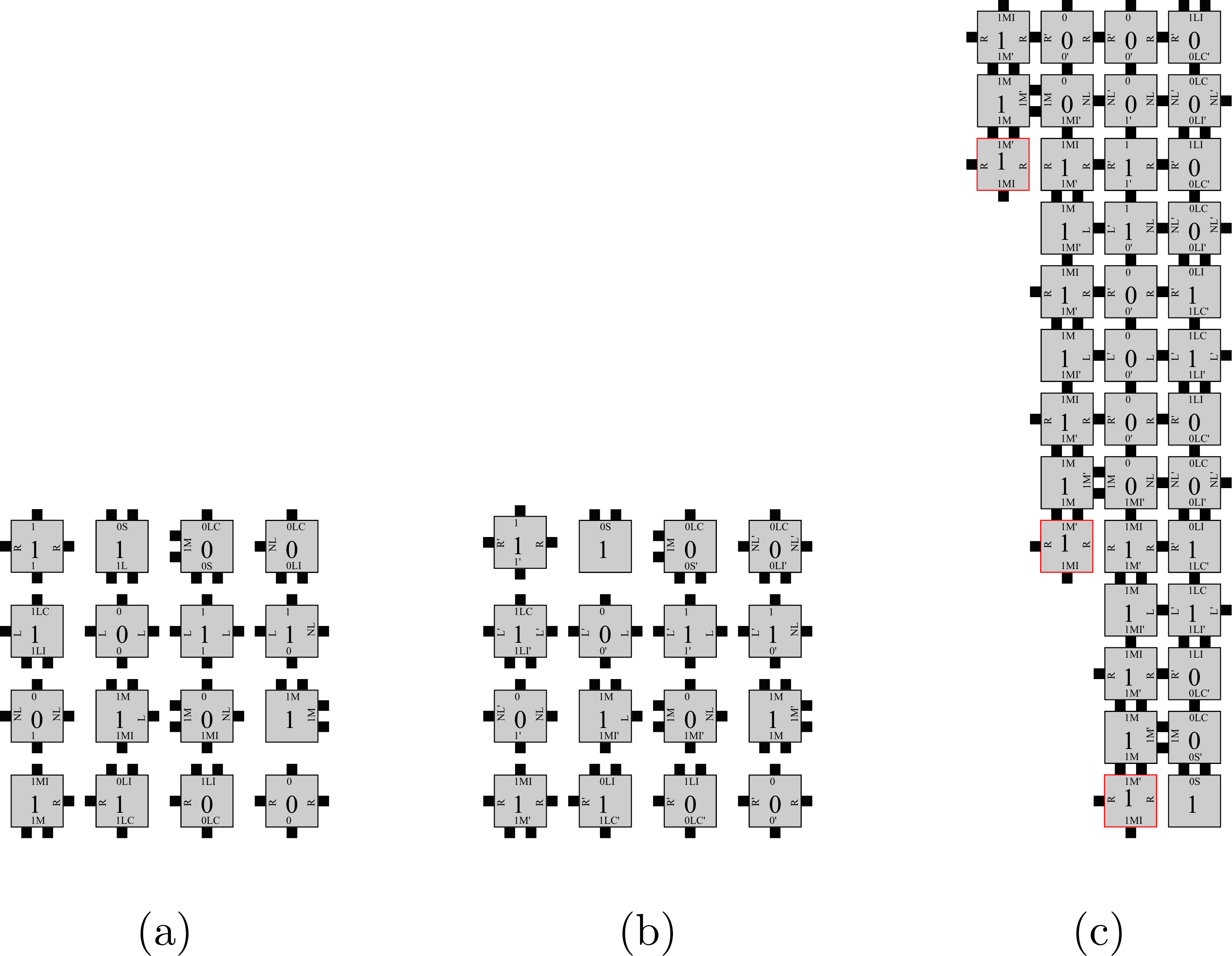}
\caption{Part (a) of this figure shows a tile set which assembles a binary counter in the aTAM, part (b) shows the converted tile set from (a) which assembles a binary counter in the RTAM, and part (c) shows a part of the binary counter assembly from an RTAM system which has the tile set from part (b). A single black box on the side of a tile represents a strength 1 glue, two black boxes on the side of a tile represents a strength 2 glue, and a glue mismatch is represented by a glue label on the side of a tile and the absence of a black box on that side.  The tiles with red borders represent ``misplaced'' tiles.}
\label{fig:t2Example}
\end{center}

\end{figure}

\begin{proof}(sketch proof of Theorem~\ref{thm:comp-univ})
For the remainder of the proof, we say that two tiles $t$ and $t'$ are of the same form provided that they have glues of the same strengths on the same sides. Given a Turing machine $M$, we use Lemma 7 of \cite{CookFuSch11} to obtain a tile set which simulates $M$ using a zig-zag system.  Furthermore, as noted in \cite{SingleNegative}, we can find a singly seeded compact zig-zag system  $\calT = (T, \sigma, 2)$ with $\termasm{T}=\{\alpha\}$ which simulates $M$.  We assume that $\calT$ grows to the north.  Let $\mathcal{S} = (S, \sigma', 2)$ be an RTAM system defined a follows.

First, for clarity, we rewrite the glues contained on tiles of $T$ in terms of their complementary glues. We use the convention that if a glue $g$ appears on the north or east side of a tile, then we simply write $g$.  If a glue $g$ appears on the south or west of a tile, then we rewrite that glue as $g'$.  Next, we set $\sigma'$ to be the same as the tile which composes $\sigma$.  It follows from the constructive proof of Lemma 7 in \cite{CookFuSch11}, that for any $t \in T$ such that $t$ has a strength 2 glue on its south, it must be of the same form as tile $A$ shown in Figure~\ref{fig:t2casesS}(a) up to reflection.  For each $t \in T$ with a strength 2 glue on the south, we create a tile $t' \in S$ which consists of the same south glue as the one on $t$, but has a copy of the east/west glue of $t$ such that $t'$ has identical east and west glues.  The form of $t'$ is that of tile $A'$ shown in Figure~\ref{fig:t2casesS}(b).  Henceforth, we refer to tiles of the same form (up to reflection across the vertical axis) as the tile labeled $B$ in Figure~\ref{fig:t2casesS}(a) as \emph{corner tiles}.  For each corner tile $t \in T$, we form a tile $t' \in S$ which consists of the same east and west glues as $t$, but has a copy of the north/south glue of $t$ such that $t'$ has identical south and north glues.  Thus, in our RTAM tile set $S$, corner tiles in $T$ take on the form of the tile labeled $B'$ in Figure~\ref{fig:t2casesS}(b).  Notice that tiles which require cooperation to bind are necessarily oriented upon binding.  Consequently, for all tiles $t \in T$ which bind cooperatively (i.e. those that are not of the form $A$ or $B$ shown in Figure~\ref{fig:t2casesS}(a)), we add tile $t$ to our RTAM tile set $S$.

\begin{figure}[htp]
\begin{center}
   \includegraphics[scale=.5]{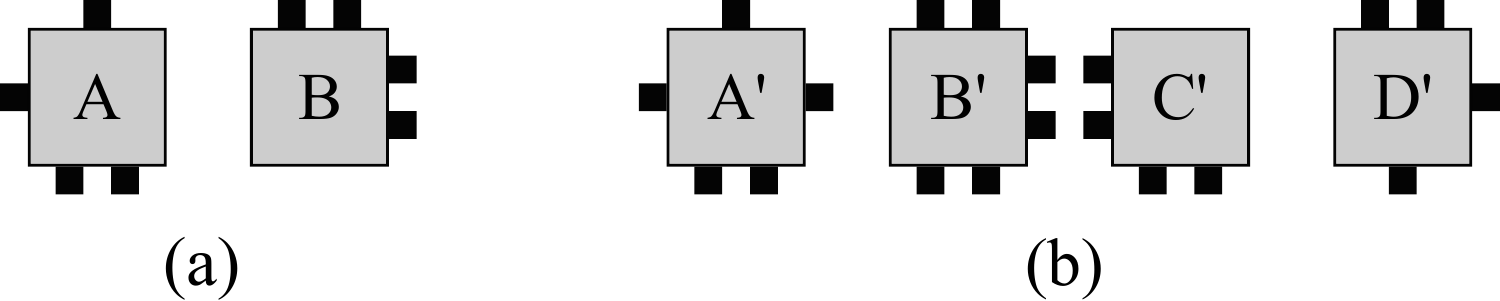}
\caption{(a) Tiles of this form (upto reflection) allow a row of zig-zag growth in the aTAM to increase in length by one tile before proceeding to the growth of the next row. (b) Tiles of this form (upto reflection) allow a row of ``almost'' zig-zag growth in the \rtam/ to increase in length by one tile before proceeding to the growth of the next row.}
\label{fig:t2casesS}
\end{center}

\end{figure}

Now, we show that the northern most row of tiles in the assembly obtained from $\mathcal{S}$ contains the final configuration of the tape of $M$ and its final state just as it is in $\calT$. Because tiles which bind cooperatively in $\mathcal{S}$ always orient themselves in the same manner in which they appear in $\calT$, we only examine the cases where $\tau$ strength glues are used in binding.  Up to reflection, all of the cases of tiles which bind with strength 2 are of the same form as tiles $A'-C'$ in Figure~\ref{fig:t2casesS}(b).  Tiles that are of the same form as the tile labeled $A'$ in Figure~\ref{fig:t2casesS}(b) do not pose a problem since the extra glue it has relative to its counterpart in $t$ is strength 1 and exposed on the east/west edges of the assembly in such a way that no other tiles can bind to expose a glue with which it can cooperate.

Next, we examine the binding of a tile of the same form as $B'$ in Figure~\ref{fig:t2casesS}(b).  Upon binding, a tile of the same form as $B'$ will allow for the binding of a tile which its counterpart in $T$ does not. That is, it will allow for the binding of a tile to its south.  But, the constructive proof of Lemma 7 in \cite{CookFuSch11} is such that only tiles of the same form as $A'$ in Figure~\ref{fig:t2casesS}(b) can flip and bind to the south of the $B'$ tile.  Now, observe that due to the zig-zag nature of growth in $S$, the tile located diagonally to the south-east of the misplaced $A'$ tile is of the same form as the tiles $B'$ or $D'$ in Figure~\ref{fig:t2casesS}(b).  Notice that this means the misplaced $A'$ tile cannot cooperatively place a tile with any other glue.

It follows from the compact zig-zag Turing machine simulation construction used in \cite{SingleNegative}, that tiles of the form $C'$ appear only once in $\alpha$.  More specifically, tiles of the same form as $C'$ always attach to the seed tile.  Thus, in our case it does not matter which orientation $C'$ has upon binding since the two assemblies obtained from allowing $C'$ to bind in its two orientations are the same up to reflection.

Consequently, the ``misplaced'' tiles in the assembly produced by $\mathcal{S}$, that is the tiles that differ from the tiles in the assembly produced by $\calT$ in location or label, are tiles that bind to the south of corner tiles.  Since this does not interfere with the simulation of $M$, $\mathcal{S}$ is computationally universal. To prove the second statement of the theorem, notice that the constructed \rtam/ system is directed.
\end{proof}
} %

\section{Self-assembly of shapes in the RTAM at $\tau=1$}\label{sec:shapes}

In this section, we discuss the self-assembly of shapes in the RTAM, especially the commonly used benchmark of squares.  At temperature $2$, using a zig-zag binary counter similar to that used in Section~\ref{sec:temp2-comp}, $n \times n$ squares can be built using the optimal $\log n/(\log \log n)$ tile types following the construction of \cite{RotWin00} with only trivial modifications.  Similarly, the majority of shapes which can be weakly self-assembled in the temperature-$2$ aTAM can be built in the temperature-$2$ RTAM, although shapes with single-tile-wide branches which are not symmetric are impossible to strictly self-assemble in the RTAM.

At temperature-$1$, however, the differences between the powers of the aTAM and RTAM appear to increase. Here we will demonstrate that squares whose sides are of even length cannot weakly (or therefore strictly) self-assemble in the RTAM at $\tau=1$, although any square can strictly self-assemble in the $\tau=1$ aTAM.  We then prove a tight bound of $n$ tile types required to self-assemble an $n \times n$ square for odd $n$ in the RTAM at $\tau=1$.  (Which is, interestingly, better than the conjectured lower bound of $2n-1$ for the $\tau=1$ aTAM.)  

\subsection{For even $n$, no $n \times n$ square self-assembles in the $\tau=1$ RTAM}\label{sec:even-squares}

\begin{theorem}\label{thm:square-even}
For all $n \in \Z^+$ where $n$ is even, there exists no RTAM system $\mathcal{T}=(T,\sigma,1)$ where $|\sigma|=1$ and $\mathcal{T}$ weakly (or strictly) self-assembles an $n \times n$ square.
\end{theorem}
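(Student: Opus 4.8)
The plan is to assume such a system $\mathcal{T}=(T,\sigma,1)$ exists and derive a contradiction from the tension between the single, orientation-fixed seed and the flip-freedom enjoyed by every other tile. The engine of the argument is a \emph{reflection lemma}: fix a spanning tree of the binding graph of a producible assembly $\beta$ rooted at $\sigma$, and let $t\in\dom\beta$ be a non-seed tile whose parent lies immediately to its $E$ or $W$ (a \emph{horizontal} bond). Then the assembly obtained by reflecting the entire subtree rooted at $t$ across the horizontal line through $t$'s row is again producible; symmetrically, a vertical parent-bond ($N$/$S$) permits reflecting the subtree of $t$ across the vertical line through $t$'s column. This holds because, by the definition of the reflection map $S$, the glue on the side of $t$ facing its parent is unchanged by the flip perpendicular to the bond (e.g.\ $S(V,E)=E$ and $S(H,N)=N$), so $t$ can attach in the flipped orientation, after which every descendant attaches in the correspondingly flipped orientation at the reflected position with all bonds preserved.

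Next I would record the parity obstruction. Since $n$ is even, the central row and central column of any $n\times n$ square fall strictly between cells, so the single seed, sitting at a cell $(s_x,s_y)$, is strictly closer to one horizontal edge than the other and strictly closer to one vertical edge than the other. Because self-assembly is defined only up to $F$ (reflection and translation), I may apply a global reflection and assume the seed lies in the bottom-left region, i.e.\ $s_y\le n/2-1$ and $s_x\le n/2-1$. Writing $a,b$ for the numbers of rows above/below the seed and $c,d$ for the numbers of columns right/left of it inside the square, $a+b=c+d=n-1$ is odd, whence $a\ge b+1$ and $c\ge d+1$. The other ingredient is that, for strict self-assembly, every producible assembly is a subassembly of a terminal $n\times n$ square and therefore fits inside an $n\times n$ bounding box — a property invariant under $F$ and hence robust to the ``up to reflection/translation'' slack.

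The core is to show the top-right corner cell cannot be covered. The subtrees hanging off $\sigma$ partition the non-seed tiles, so the corner lies in exactly one of them, $T^\star$. I would establish two \emph{confinement} statements: a subtree attached to $\sigma$ by a horizontal bond cannot reach the top row, and a subtree attached by a vertical bond cannot reach the right column. Granting these, $T^\star$ cannot exist, a contradiction. Each confinement is proved with the reflection lemma against the bounding-box bound: reflecting a horizontal-bond subtree across the seed's row sends anything reaching row $n-1$ down below row $0$ (using $s_y\le n/2-1$, so $2s_y-(n-1)\le-1$) while an anchoring part of the assembly still pins the top row, forcing height $>n$; the vertical case uses $s_x\le n/2-1$ and the right edge as anchor to force width $>n$. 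This is exactly where evenness is essential: for odd $n$ the seed can be centred, these reflections become genuine symmetries of the square, nothing is pushed outside, and the positive construction for odd squares goes through.

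I expect the main obstacle to be making the ``anchor'' rigorous in the presence of branching. The seed may spawn up to four subtrees, and the confinement argument only bites when, after flipping the subtree responsible for an extreme edge, some \emph{other} part of the assembly still reaches that edge to witness the excess; the delicate case is when a single subtree is solely responsible for reaching the top (or right), where one must instead exploit the reflection lemma at interior bonds and iterate the confinement, and one must also verify that each chosen reflection is collision-free so that the result is a legitimate producible assembly. Finally I would upgrade the conclusion from strict to weak self-assembly (the stronger claim, which also yields the strict case): since membership in the painted set $B$ depends only on a tile's \emph{type} and is invariant under reflection, the same subtree flips relocate painted tiles, so an over-reaching painted tile together with an anchoring painted tile forces the bounding box of the painted region to exceed $n$, and it can therefore not be congruent to an $n\times n$ square.
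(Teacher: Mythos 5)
Your reflection lemma is sound (modulo collisions), and your parity observation that an even square has no central row/column, so the seed sits strictly off-center, is correct. But the engine of your proof --- the two confinement statements proved by ``flip the offending subtree, keep an anchor'' --- has a genuine gap exactly in the case you flag as delicate, and that case cannot be patched with the tools you have set up. A rigid reflection of an entire subtree is an isometry of that subtree: it preserves all internal distances between its tiles, including between its painted tiles. So if a single subtree hanging off $\sigma$ is solely responsible for the top row (in the extreme case, $\sigma$ has degree one and one subtree builds the whole square), flipping it just produces a congruent, reflected copy of the square, which is perfectly consistent with weak self-assembly, since the definition grants each terminal assembly its own reflection and translation. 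No iteration of whole-subtree flips applied to that one piece can ever push two of its painted tiles farther apart than they already are, so ``iterate the confinement'' cannot close this case. Moreover, even when an anchor does exist in another subtree, you need the flipped subtree and the anchor to coexist in a single producible assembly; the flipped subtree may occupy positions of the anchor, and tiles cannot be placed at occupied locations, so the combined assembly need not exist. Your proposal acknowledges both issues but offers no mechanism to resolve them, and the anchor-based contradiction is the entire proof.

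What is missing is a tool that changes the \emph{intrinsic} geometry of a piece of the assembly rather than its rigid position, and that is exactly what the paper's proof uses. The paper picks a terminal assembly, takes opposite corners $\vec{a},\vec{c}$ of the square, and uses evenness to argue that any binding-graph path between them must detour through a third quadrant (the two diagonal quadrants meet only at a point when $n$ is even); combining with a path to the fourth corner, it extracts a path $\pi'$ between two painted corner tiles that contains at least $n$ east--west bonds while fitting in width $n$, i.e., a path that doubles back. It then re-grows \emph{only} a path (seed to $\pi'$, then the two arms of $\pi'$) tile by tile in ``stretched'' form: each tile is individually reflected on attachment so that its output side always points, say, north/west on one arm and south/east on the other, turning the path into monotone staircases. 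This unfolding makes the horizontal extent equal to the number of east--west bonds plus one, so the two painted endpoints land at distance $\geq n$ apart, contradicting that all painted tiles must fit in some $n\times n$ square; monotonicity also guarantees the regrown paths never collide, sidestepping your coexistence problem. Per-tile unfolding of a path is strictly stronger than your rigid subtree flips (it is, in effect, a composition of reflections applied at every turn of the path), and filling the gap in your argument essentially requires adopting it.
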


\begin{proof}(sketch)
We prove Theorem~\ref{thm:square-even} by contradiction, and here give a sketch of the proof.  (See Section~\ref{sec:square-even-proof} for the full proof.)   Therefore, assume that for some $n \in \Z^+$ such that $(n \mod 2) = 0$, there exists an RTAM system $\mathcal{T} = (T,\sigma,1)$ such that $|\sigma|=1$ and $\mathcal{T}$ weakly self-assembles an $n \times n$ square $S$.  We take $\alpha \in \termasm{\calT}$ and consider the corners of the square which is weakly self-assembled.

\begin{figure}[htp]

\centering
 \includegraphics[height = 1.7in]{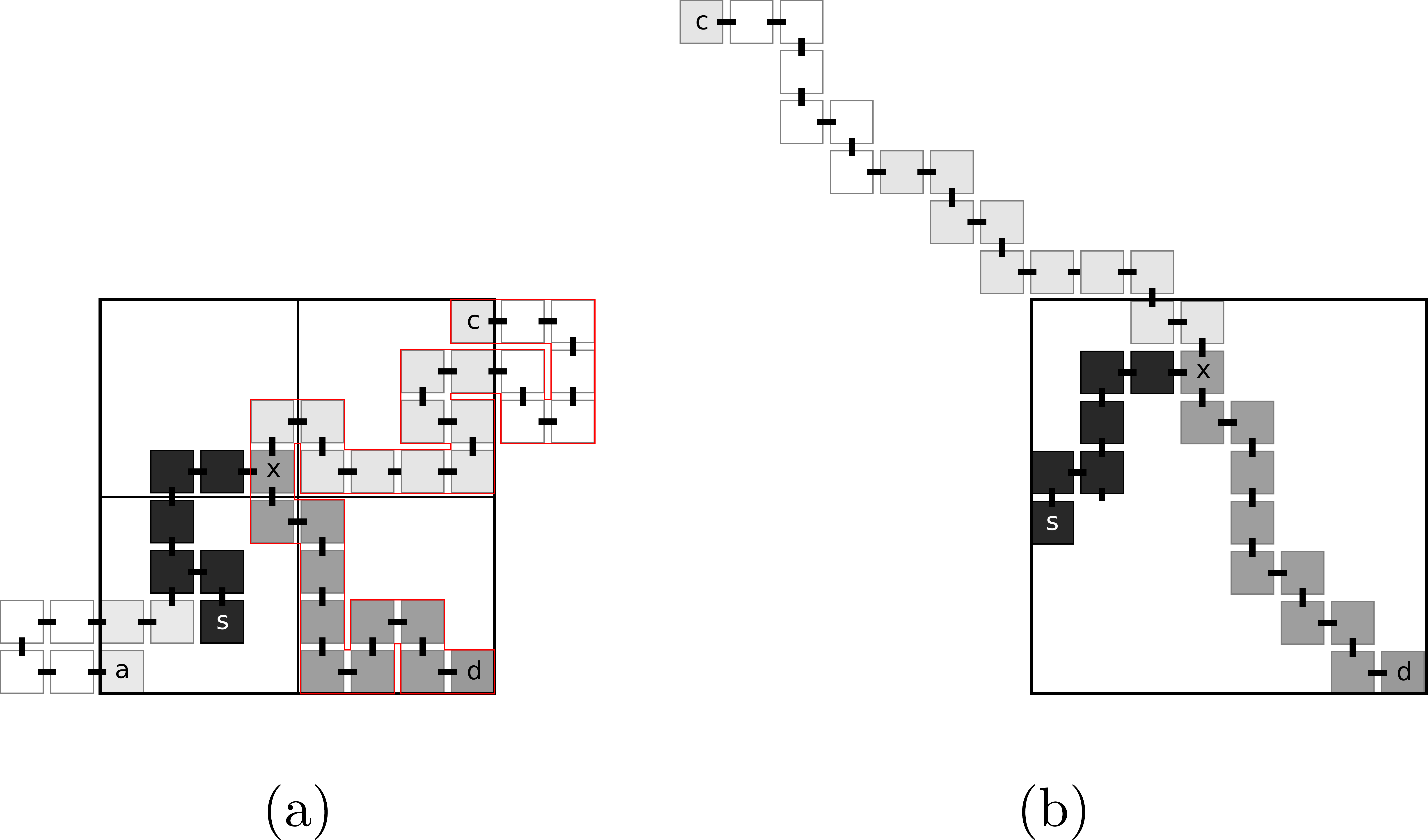}%
  \caption{(a) Example paths in an $8 \times 8$ square (i.e. of even dimension).  The path from corners $\vec{a}$ and $\vec{c}$ is composed of tiles of different colors.  The path from corner $\vec{d}$ to a point on that path is dark grey. The path from the seed to that intersection is in black. The path to be stretched out is outlined in red. (b) The stretched out version of the paths.}
  \label{fig:weak-s-a}

\end{figure}
There must exist a path which connects two diagonal corners, and that path must travel through 3 quadrants of the square since the dimensions are even length and diagonal paths are not possible in the grid graph of an assembly.  We then find a path connecting the corner of the quadrant (possibly) unvisited by that path, and note that since $\alpha$ must be connected that we can find a new path which connects the new corner to one of the original corners via a path which crosses across the midpoint of the square from either corner on the new path. (See Figure~\ref{fig:weak-s-a} (a) for an example.) Finally, we demonstrate that there is an assembly sequence which starts from the seed and grows to that new path, then builds that path in a way such that it is maximally ``stretched'' out by appropriately flipping the tiles.  (See Figure~\ref{fig:weak-s-a} (b) for an example.)  This stretched out path is producible by a valid assembly sequence but must grow beyond the bounds of the square (by at least one position), so $\calT$ does not weakly (or strictly) self-assemble the square.
\qed
\end{proof}

\later{
\section{Proof of Theorem~\ref{thm:square-even}}\label{sec:square-even-proof}

In this section, we give the full proof of Theorem~\ref{thm:square-even}.

\begin{proof}
We prove Theorem~\ref{thm:square-even} by contradiction.  Therefore, assume that for some $n \in \Z^+$ such that $(n \mod 2) = 0$, there exists an RTAM system $\mathcal{T} = (T,\sigma,1)$ such that $|\sigma|=1$ and $\mathcal{T}$ weakly self-assembles an $n \times n$ square $S$.  It must therefore be the case that for some subset of tile types $B \subseteq T$, for all $\alpha \in \termasm{\mathcal{T}}$ there exist $r \in R$ and $\vec{v} \in \Z^2$ such that for $\alpha_r = F(\alpha,r,\vec{v})$, $\alpha_r^{-1}(B) = S$ (that is, every terminal assembly contains an $n \times n$ square of ``black'' tiles, and no ``black'' tiles anywhere else).

\begin{figure}[htp]
\begin{center}
   \includegraphics[width=2.0in]{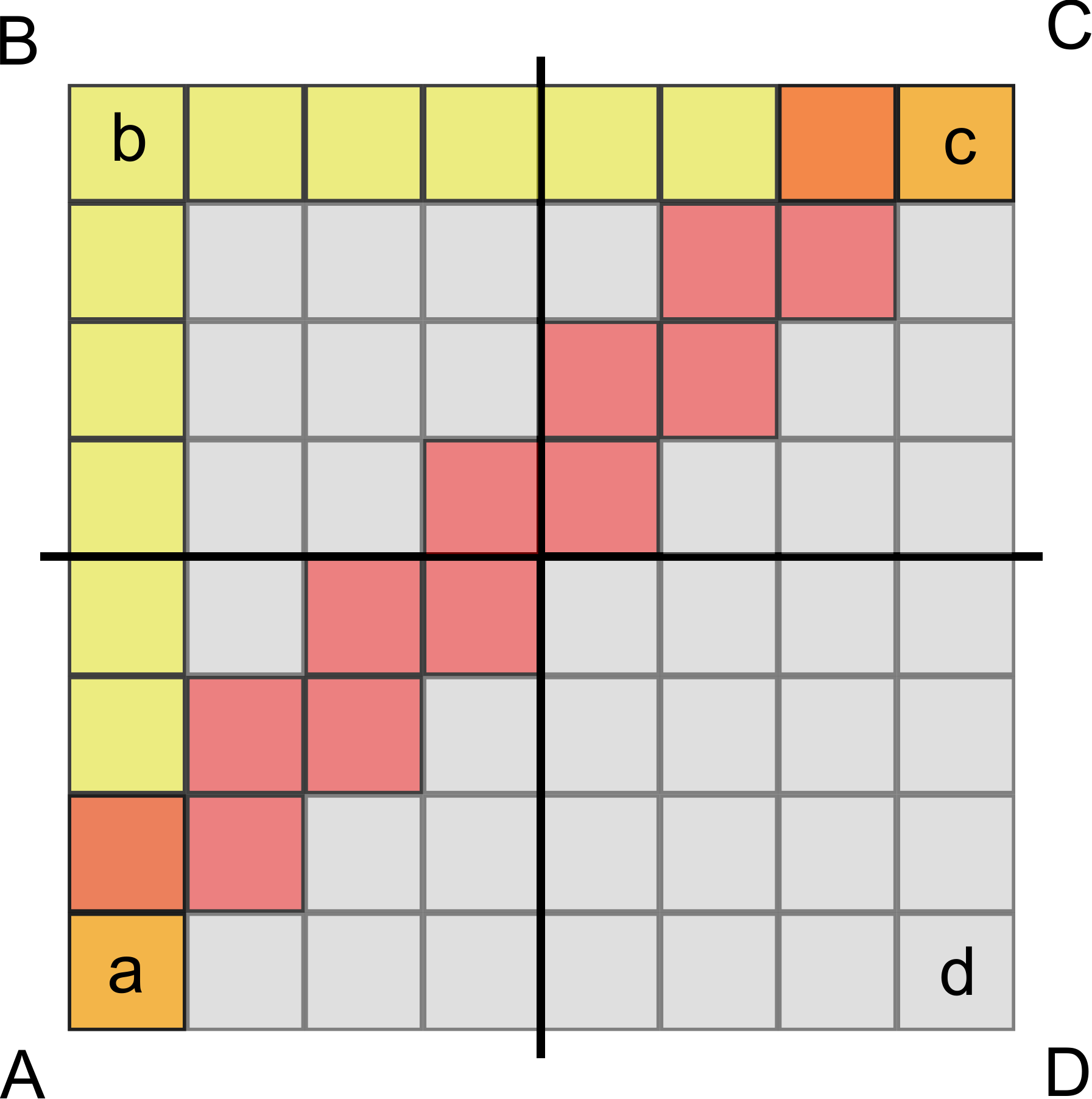}
\caption{The shortest path between two opposite corners of an $n \times n$ square has length $2n-1$.}
\label{fig:even-square-corners}
\end{center}

\end{figure}

We choose an arbitrary $\alpha \in \termasm{\mathcal{T}}$ and refer to the four corners of $S$ in $\alpha$ as $\vec{a}$, $\vec{b}$, $\vec{c}$, and $\vec{d}$, with $\vec{a}$ being the southwest and moving clockwise to name the remaining corners.  (See Figure~\ref{fig:even-square-corners} for an example using an $8\times8$ square.)  Since $\alpha$ must be a stable assembly, there must exist a set of (possibly overlapping) paths $P$ in the binding graph of $\alpha$ which contain the tiles at all $4$ corners.  Let $\pi_a^c$ be a shortest path which contains both corners $\vec{a}$ and $\vec{c}$ (there may be more than one), and note that $|\pi_a^c| \ge 2n-1$ because the tiles are located on a grid.  (Two possible such paths can be seen in Figure~\ref{fig:even-square-corners} as the red and yellow paths.)  Let the uppercase letters $A$, $B$, $C$, and $D$ represent the quadrants which contain the corners with the corresponding lowercase labels.  It must be the case that $\pi_a^c$ contains at least one tile in either quadrant $B$ or $D$, else it could not reach quadrant $C$.  This is due to the facts that $n$ is even and that $\pi_a^c$ is a path through a grid graph where diagonal travel is not allowed.  Without loss of generality, assume that $\pi_a^c$ enters quadrant $B$.

We now know that some path $\pi_a^c$ through the binding graph of $\alpha$ connects $\vec{a}$ and $\vec{c}$ and also contains at least one point in quadrant $B$.  Additionally, we know that within the overall binding graph of $\alpha$, $\pi_a^c$ must somehow also be connected to the tile at $\vec{d}$.  Let $\pi_d^{ac}$ be a path in the binding graph which contains both $\vec{d}$ and some tile in $\pi_a^c$.  Note that this may just be the tile at $\vec{d}$ since $\pi_a^c$ may contain that location.  We now define path $\pi'$ as the longest of the following two paths formed by possibly combining subpaths of $\pi_a^c$ and $\pi_d^{ac}$:  1) a path which connects $\vec{d}$ and $\vec{c}$ and contains a tile in $B$, or 2) a path which connects $\vec{a}$ and $\vec{d}$ and contains a tile in quadrant $B$.  For ease of discussion, we will talk about $\pi_a^c$ as being a directed path from $\vec{a}$ to $\vec{c}$, while noting that we don't know the actual ordering of its growth.  Such a $\pi'$ must exist because the point at which $\pi_d^{ac}$ intersects $\pi_a^c$ must be either 1) before it has entered $B$, which yields case 1 for $\pi'$, 2) after it has left $B$, which yields case 2 for $\pi'$, or 3) within $B$ in which case either holds.  Note that $\pi_a^c$ could perhaps enter and leave $B$ multiple times, which would only strengthen our argument, but for simplicity we will simply make use of the first time it enters quadrant $B$ and the last time it leaves quadrant $B$ for the above argument.

\begin{figure}[htp]
\begin{center}
	\includegraphics[width=2.0in]{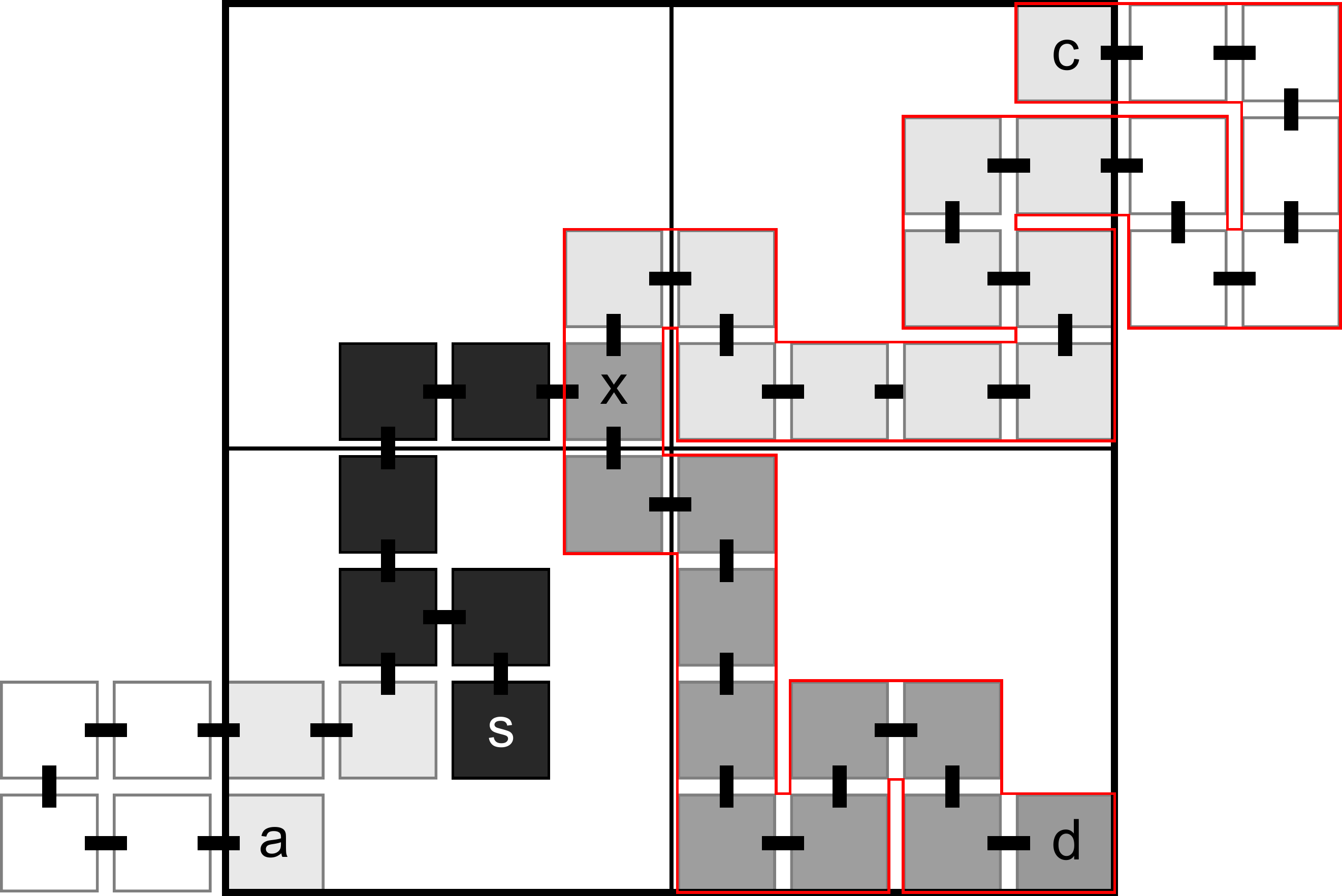}
\caption{Example paths in an $8 \times 8$ square (i.e. of even dimension).  White tiles represent tile types not in $B$ (and thus tiles not in $S$), black tiles represent path $\pi_{\sigma}'$, dark grey tiles represent path $\pi_d^{ac}$, and $\pi_a^c$ is represented by a combination of tiles of all colors.  Path $\pi'$ connects $\vec{d}$ and $\vec{c}$ by passing through $\vec{x}$ (which is in quadrant $B$) and is outlined in red.}
\label{fig:s-x-a-c-paths-even}
\end{center}

\end{figure}

Without loss of generality, we will assume case 1 for path $\pi'$, and note that we now have found the following path in the binding graph of $\alpha$.  Path $\pi'$ includes the tile at location $\vec{d}$, contains at least one tile in quadrant $B$, and also includes the tile at location $\vec{c}$.  (See Figure~\ref{fig:s-x-a-c-paths-even} for an example.) For a square of dimension $n$, it must be the case that path $\pi'$ contains a minimum of $n$ pairs of tiles bound to each other on their east/west edges.

Now we check to see if the seed $\sigma$ is contained within $\pi'$.  If not, we define $\pi_{\sigma}'$ as the minimum path in $\alpha$'s binding graph which contains $\sigma$ and some tile in $\pi'$, else $\pi_{\sigma}'$ is simply $\sigma$.  Now we define a new valid assembly sequence, $\vec{\alpha'}$ for $\calT$ as follows.

\begin{figure}[htp]
\begin{center}
	\includegraphics[width=3.0in]{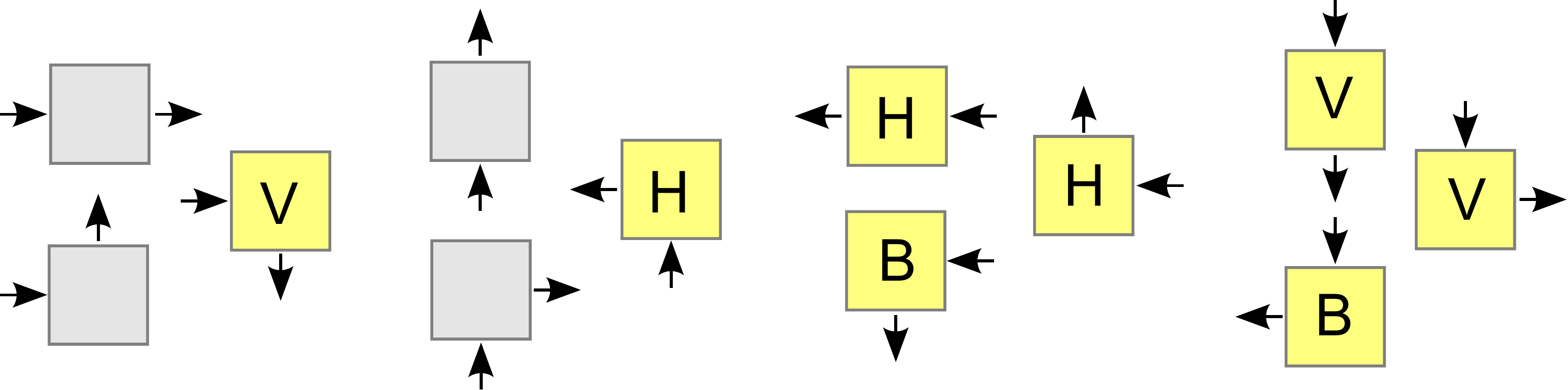}
\caption{Every possible pairing of an input and output side of a tile on a path.  Those in grey have as input sides either south or west, and as output sides either north or east.  Those in yellow do not, but are labeled with the minimal reflection necessary to orient them so that they do.}
\label{fig:input-output-flips}
\end{center}

\end{figure}

Let $\vec{x}$ denote the location of the intersection of $\pi_{\sigma}'$ and $\pi'$, and $\pi_x^c$ be the portion $\pi'$ from $\vec{x}$ to $\vec{c}$ and let $\pi_x^d$ be the portion of $\pi'$ from $\vec{x}$ to $\vec{d}$.  Without loss of generality, we will assume that the tile on $\pi_{\sigma}'$ which binds to $\sigma$ attaches to the north of $\sigma$.  (If this is not the case, then all following directions can simply be rotated by the same angle as the difference of the direction of the first binding from north.)  Starting from $\sigma$, $\vec{\alpha'}$ attaches one tile at a time from $\pi_{\sigma}'$ so that output glues are exposed on the north or east edge of each tile.  In other words, as tiles attach, flip them so that the next tile to attach does so at a tile location that is north or east of the previously attached tile.  (Figure~\ref{fig:input-output-flips} shows how this must be possible for each tile.)  This causes all inputs along $\pi_{\sigma}'$ to be on the south or west sides of tiles, and outputs on the north and east.  Furthermore, place the tile from location $\vec{x}$ so that the exposed glue used for the binding of $\pi_x^c$ is on its the north or west, and the exposed glue used for the binding of $\pi_x^d$ is on its south or east.  Figure~\ref{fig:input-output-flips2} shows how this can be done.  (Note that depending on where and from which direction $\pi_{\sigma}'$ intersects $\pi'$, it could be the case that the output sides which bind to $\pi_x^c$ and $\pi_x^d$ are instead south or east, and north or west, respectively.  However, again this doesn't change the argument as the following directions can be rotated appropriately, so we assume the output for $\pi_x^c$ is on the north or west for ease of discussion.)

\begin{figure}[htp]
\begin{center}
	\includegraphics[width=3.0in]{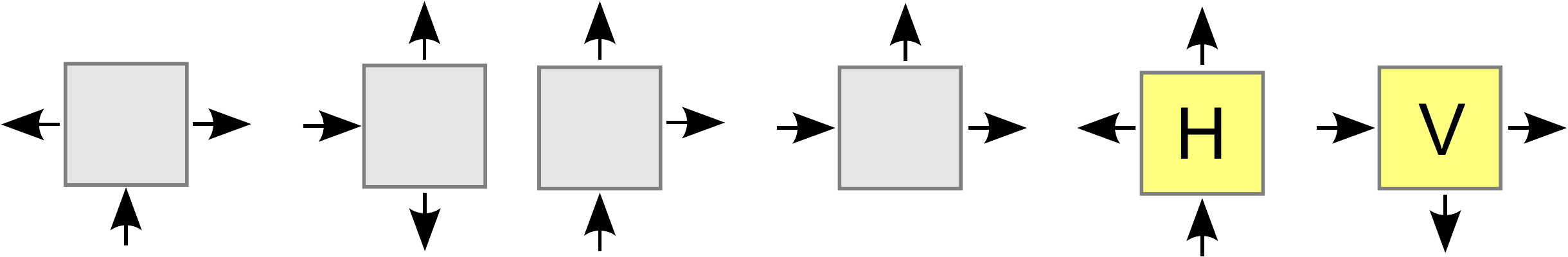}
\caption{Every possible pairing of a south or west input side and 2 output sides of the final tile of $\pi_{\sigma}'$.  Those in grey have one output side on the north or west, and the other on the south or east.  Those in yellow do not, but are labeled with the minimal reflection necessary to orient them so that they do.}
\label{fig:input-output-flips2}
\end{center}

\end{figure}

Now we assemble the path $\pi_x^c$ as follows.  Starting from the tile of location $\vec{x}$, attach one tile at a time so that the output glues are exposed on the north or west edge of each tile.  This results in a version of $\pi_x^c$ which grows strictly up and to the left.  Such growth is possible because of the set of reflections possible and the fact that no tile previously placed from $\pi_x^c$ or $\pi_{\sigma}'$ can block the placement.  The fact that blocking can't occur is due to the fact that the stretched version of $\pi_x^c$ grows strictly up and/or to the left so all previous tiles placed for $\pi_x^c$ cannot block, and after the first tile placed after the final tile of $\pi_{\sigma}'$, the newly forming path is either completely above $\pi_{\sigma}'$ or to the left of the topmost tile of that path.  Next we assemble the path $\pi_x^d$ in an analogous manner, but down and to the right.  Starting from the tile of location $\vec{x}$, attach one tile at a time so that the output glues are exposed on the south or east edge of each tile.  For the same but reflected reasons as for the growth of the modified version of $\pi_x^c$, $\pi_x^d$ can be grown in this way.  See Figure~\ref{fig:s-x-a-c-paths-stretched-even} for an example of these ``stretched'' paths.

\begin{figure}[htp]
\begin{center}
	\includegraphics[width=3.5in]{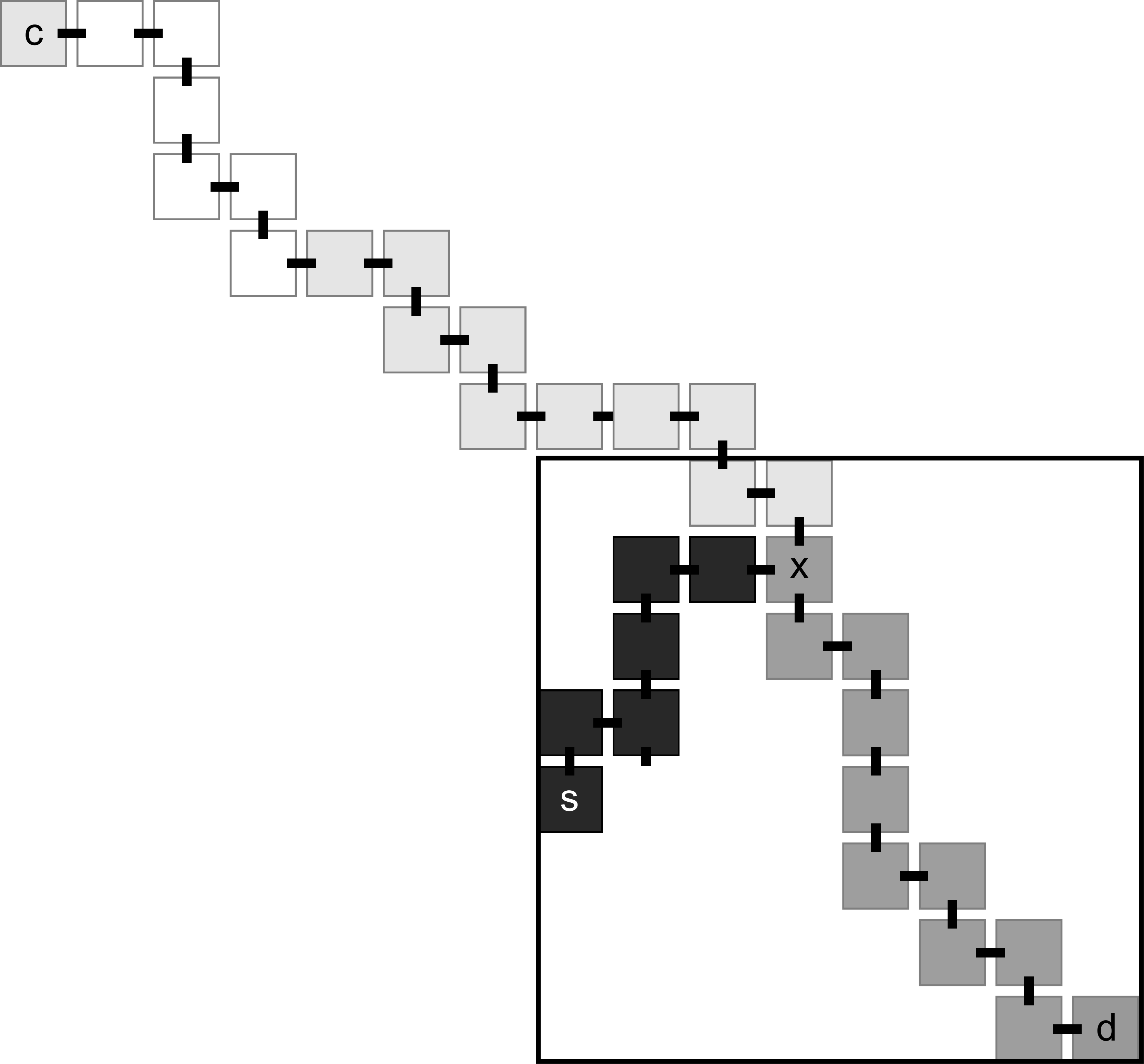}
\caption{Example paths $\pi_{\sigma}^x$, $\pi_x^c$, and $\pi_x^d$ stretched out when assembled by $\vec{\alpha}$.}
\label{fig:s-x-a-c-paths-stretched-even}
\end{center}

\end{figure}

The tile types which are at the ends of $\pi_x^c$ and $\pi_x^d$ were at locations $\vec{c}$ and $\vec{d}$ in $\alpha$, so they must be of types which are in $B \subseteq T$ (i.e. ``black'' tile types).  However, since there are $\ge n$ pairs of tiles connected via east/west glues in $\pi'$ and it is now maximally stretched, it must be $\ge n+1$ tiles wide (i.e. from leftmost to rightmost tiles).  Thus, this is a producible assembly which has ``black'' tiles at a distance which is further apart than any two points in the square $S$.  Therefore $\vec{\alpha'}$ does not weakly (or, therefore, strictly) self-assemble the $n \times n$ square $S$ and so neither does $\calT$.  This is a contradiction, and thus $\calT$ does not exist.
\end{proof}

}  %

\subsection{Tight bounds on the tile complexity of squares of odd dimension}\label{sec:odd-squares}

In this section, we prove tight bounds on the number of tiles necessary to self-assemble a square of odd dimension.

\begin{theorem}\label{thm:square-upper}
For all $n \in \Z^+$ where $n$ is odd, an $n \times n$ square strictly self-assembles in an RTAM system $\mathcal{T}=(T,\sigma,1)$ where $|T|=n$ and $|\sigma|=1$.
\end{theorem}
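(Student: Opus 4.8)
The plan is to give an explicit construction that places a single seed tile at the \emph{center} of the target square and exploits the fact that a tile may attach in any of its four reflected orientations, so that one tile type can serve in several mirror-image positions at once. Write $k=(n-1)/2$ and set up coordinates so the square is $\{-k,\dots,k\}\times\{-k,\dots,k\}$ with the seed at the origin. I would use two families of tile types: a family of \emph{spine} tiles $s_0,\dots,s_k$ that assemble the central horizontal row (the line $y=0$), and a family of \emph{tooth} tiles $r_1,\dots,r_k$ that assemble a vertical column of height $k$ above and below each spine tile. This gives $|T|=(k+1)+k=2k+1=n$ tile types, as required, and $|\sigma|=1$. This mirrors the $\frac{n+m}{2}$ rectangle bound announced in the introduction, specialized to $m=n$.

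The glue structure is designed so that each family uses its own disjoint set of labels and so that reflection symmetry does the bookkeeping for the mirrored halves. Concretely, I would give $s_0$ the glue $g_1$ on both its east and west sides and the glue $h_1$ on both its north and south sides; for $1\le j<k$, tile $s_j$ carries $\overline{g_j}$ on one $EW$ side, $g_{j+1}$ on the other, and $h_1$ on both $NS$ sides; and $s_k$ is the same but with a null glue in place of $g_{k+1}$. Because the $g$-glues along the spine are all distinct, the spine is forced to grow from the seed outward to length exactly $k$ in each direction, with $s_j$ appearing at $(j,0)$ in its default orientation and at $(-j,0)$ after a horizontal flip. Each tooth tile $r_i$ ($1\le i<k$) carries $\overline{h_i}$ and $h_{i+1}$ on its two $NS$ sides and \emph{blank} $EW$ sides, with $r_k$ terminating the tooth by a null glue; since every spine tile exposes $h_1$ on both north and south, a tooth of height exactly $k$ grows up and down from each column, with $r_i$ appearing in its default orientation above the spine and vertically flipped below it.

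To prove correctness I would verify that every producible terminal assembly has domain exactly the square. The key structural facts are: (i) the spine labels $\{g_j,\overline{g_j}\}$ and the tooth labels $\{h_i,\overline{h_i}\}$ are pairwise disjoint, so spine growth and tooth growth never interfere; (ii) the only tile carrying $\overline{g_j}$ is $s_j$ and the only tile carrying $\overline{h_i}$ is $r_i$, so at each exposed glue exactly one tile type can attach, forcing the path-like growth and the exact stopping lengths (there is no $g_{k+1}$ or $h_{k+1}$); and (iii) the tooth tiles have blank $EW$ sides, so no tooth can seed horizontal growth, and a spine tile's $h_1$ sides only ever accept $r_1$ (never another spine tile), so the spine stays one tile thick. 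Running the induction, the domain produced from the seed is exactly $\{-k,\dots,k\}^2$ regardless of the flip choices made, and since strict self-assembly constrains only the domain, $\mathcal{T}$ strictly self-assembles the $n\times n$ square. The degenerate case $n=1$ is the single blank seed tile.

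The main obstacle is precisely the feature that distinguishes the \rtam/ from the aTAM: I must rule out that some \emph{flipped} placement of a tile exposes a glue in an unexpected orientation and lets the assembly escape the square. The heart of the argument is therefore a finite case check: for each of the (at most four) orientations of each tile type, list the glue appearing on each physical side and confirm it binds only where the construction intends. This check is exactly what forces the two deliberate design choices above --- giving tooth tiles blank $EW$ glues, and reusing the single label $h_1$ on both $NS$ sides of every spine tile --- since any other choice would allow a flipped tile to initiate growth perpendicular to, or beyond, the intended row or column.
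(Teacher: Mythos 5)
Your proposal is correct and is essentially the paper's own construction, merely rotated by ninety degrees: the paper grows a central \emph{vertical} column of $(n-1)/2$ hard-coded tile types from a centered seed, with every column tile (and the seed) carrying the same glue on both east and west sides so that identical hard-coded horizontal arms of $(n-1)/2$ tile types grow from each column position, for a total of $1+2\cdot\frac{n-1}{2}=n$ tile types. Your spine/tooth decomposition, the reuse of a single duplicated glue on the two sides perpendicular to the central line, the blank glues on the arm tiles' parallel sides, and the reliance on complementary (primed) glue labels to prevent unintended flipped bindings all match the paper's proof point for point.
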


\later{

\section{Proof of Theorem~\ref{thm:square-upper}}\label{sec:square-upper-proof}

In this section, we give the details of the proof of Theorem~\ref{thm:square-upper}.

\begin{proof}

The general scheme for the construction can be seen in
Figure~\ref{fig:odd-square}.  Essentially, the seed, which is in the center of the square, presents some glue on both the north and south, and presents some second glue on both the east and west.  Above the seed, a series of $(n-1)/2$ hard-coded tile types form a vertical column, with the east and west edges of all tiles exposing the same glue.  Using the same tile types, a reflected version of the upper column grows downward to the bottom of the square.  To both the east and west of that column, a hard-coded series of another $(n-1)/2$ tile types form reflected rows out to the sides of the square.  This requires $1+2((n-1)/2)=n$ tile types and strictly self-assembles an $n \times n$ square.

\begin{figure}[htp!]

\centering
   \includegraphics[width=2in]{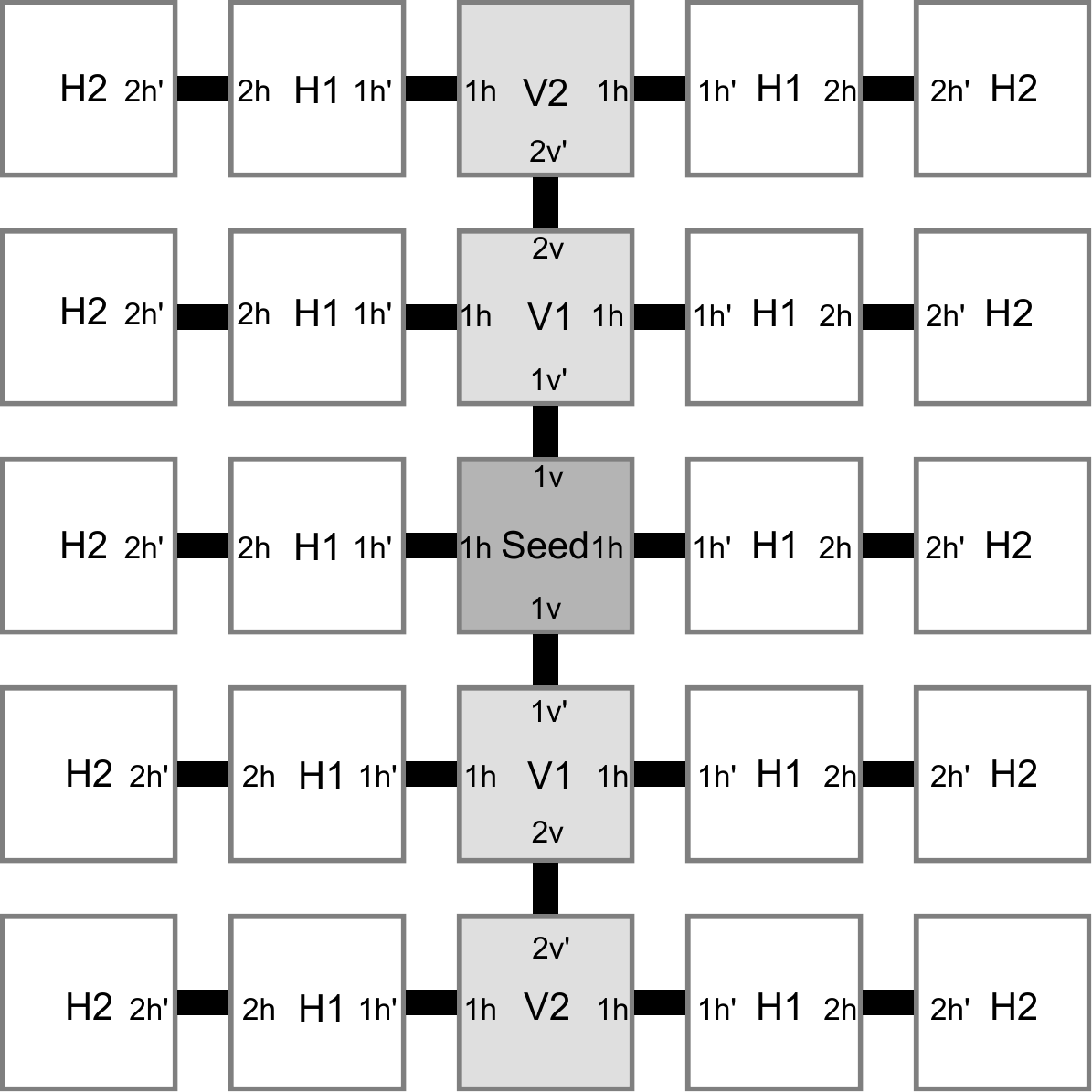}
\caption{A $5 \times 5$ square built by an RTAM TAS at $\tau=1$ using only $5$ tile types.}
\label{fig:odd-square}

\end{figure}

To prove Theorem~\ref{thm:square-upper}, we provide the following construction which demonstrates how to build an RTAM TAS which strictly self-assembles an $n \times n$ square for any odd $n$.  The basic idea is that we place the seed tile in the middle of the square, and from both its top and bottom it grows copies of a reflected column of tiles which grow to the top and bottom of the square, respectively.  This requires one tile type for the seed and $(n-1)/2$ tile types for the column (which grows in both directions).  Each of the seed and the tiles of the vertical column have the same glue on both east and west sides.  From each of those a row of $(n-1)/2$ unique tile types grows to the left or right boundary of the square.  This construction is robust to any valid rotation of the tiles and strictly self-assembles an $n \times n$ square using exactly $1 + 2(n-1)/2 = n$ tile types.

\begin{figure}[htp]
\begin{center}
   \includegraphics[width=2.0in]{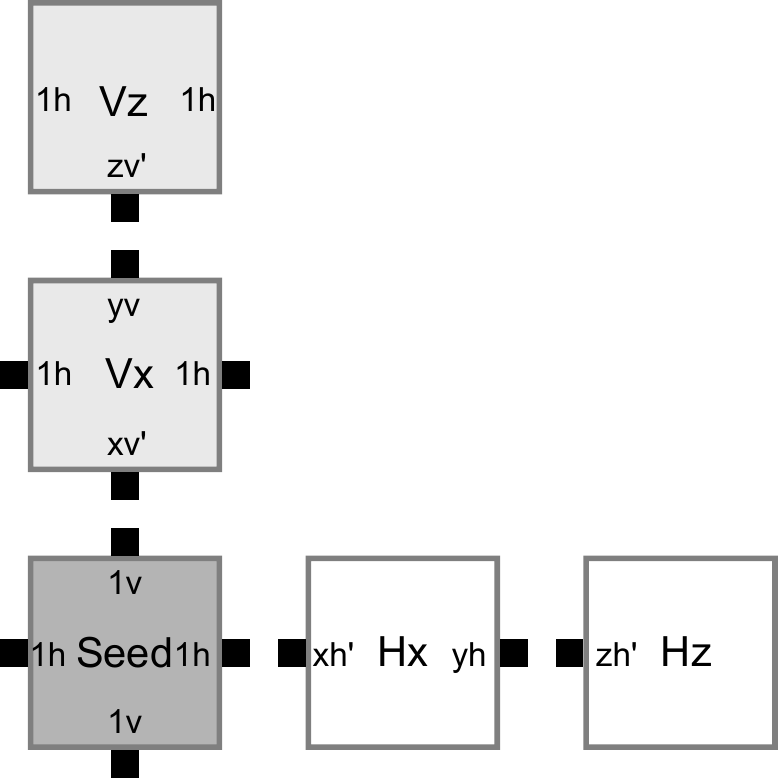}
\caption{The tile type templates for building a square whose sides are odd length in the RTAM at $\tau=1$ using $n$ unique tile types.}
\label{fig:odd-square-tiles}
\end{center}

\end{figure}

In Figure~\ref{fig:odd-square-tiles} we show the templates used to generate the necessary tile types to self-assemble an $n \times n$ square for odd $n$, and in Figure~\ref{fig:odd-square} we show an example of the terminal assembly for $n=3$.  For all odd $n$, the seed is exactly as in Figure~\ref{fig:odd-square-tiles}.  Then for each $m$ where $0<m<(n-1)/2$, we create a unique tile type from the tile type template labeled ``Vx''.  We do this by replacing each ``x'' (i.e. those of the tile label and the south glue) with the value ``$m$'', and the ``y'' of the north glue with the value ``$m+1$''.  (Note the ``prime'' markings of some glues which denote they are complementary to the ``unprimed'' versions, and for all tile types created from these templates we keep those markings unchanged, which is important in restricting the potential interactions.)  In this way, we create the tile types for the central column between the seed and the top or bottom most locations.  To make the top/bottom tile type, we start with the ``Vz'' template and replace the ``z'' of the label and south glue with the value $(n-1)/2$.  To make the tile types for the horizontal rows which grow outward from the central column, for each $m$ where $0<m<(n-1)/2$ we create a unique tile type from the tile type template labeled ``Hx''.  We do this by replacing each ``x'' (i.e. those of the tile label and the west glue) with the value ``$m$'', and the ``y'' of the east glue with the value ``$m+1$''.  In this way we create the tile types for the horizontal row between the central column and left or right most locations.  To make the left/right tile type, we start with the ``Hz'' template and replace the ``z'' of the label and west glue with the value $(n-1)/2$.  This completes the creation of exactly $n$ tile types: 1 for the seed, $2(((n-1)/2) - 1)$ for the interiors of the column and rows, and 1 each for the top/bottom tile type and left/right tile type.  The RTAM TAS $\calT$ consisting of this tile set, a seed consisting of a copy of the seed tile in any valid orientation at the origin, and $\tau=1$ strictly self-assembles an $n \times n$ square and is directed.  This is because the only tiles which can attach to the north and south of the seed are properly reflected versions of the ``V1'' tile type, and then to the north/south of those two tiles properly reflected copies of ``V2'', etc. until copies of the ``V$(n-1/2)$'' tile type cap off the north and south growth of the column.  Regardless of the reflections of the ``Vx'' and ``Vz'' tiles across the vertical axis, the only subassemblies that can grow to the left and right are the arms consisting of the ``Hx'' and ``Hz'' tile types.  All producible terminal assemblies of $\calT$ are equivalent and in the shape of an $n \times n$ square.
\end{proof}

} %

We prove Theorem~\ref{thm:square-upper} by giving a scheme for obtaining the tileset for any given $n$ that exploits the fact that for $n$ odd, an $n\times n$ square in $\Z^2$ is symmetric across a row and column points. Although Theorem~\ref{thm:square-upper} pertains to squares, a simple modification of the proof shows the following corollary.

\begin{corollary}
For all $n,m \in \Z^+$ where $n$ is odd, an $n \times m$ square strictly self-assembles in an RTAM system $\mathcal{T}=(T,\sigma,1)$ where $|T|=\frac{n+m}{2}$ and $|\sigma|=1$.
\end{corollary}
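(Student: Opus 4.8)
The plan is to adapt the construction from the proof of Theorem~\ref{thm:square-upper} by decoupling the height of the central column from the width of the horizontal arms. Note first that for $\frac{n+m}{2}$ to be a natural number with $n$ odd, $m$ must also be odd; this is exactly the regime in which an $n \times m$ rectangle in $\Z^2$ has a unique center point and is symmetric across both the horizontal and vertical lines of lattice points through that center. I place a single seed tile at this center. As in the square construction, the seed exposes one glue (say $g_V$ and its complement) on its north and south edges and a second glue ($g_H$ and its complement) on its east and west edges.

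First I would build the central vertical column using the ``Vx'' and ``Vz'' templates of Figure~\ref{fig:odd-square-tiles}, creating $\frac{n-1}{2}$ hard-coded tile types so that copies grow from the north and south of the seed up to the correct height. By the horizontal reflection symmetry of the odd-height column, the tile a distance $k$ above the seed and the tile a distance $k$ below it are reflections of one another and hence share a tile type, which is why only $\frac{n-1}{2}$ types are needed. Each of these column tiles (and the seed) exposes $g_H$ on both its east and west edges. I then build the horizontal arms using the ``Hx'' and ``Hz'' templates, but now creating $\frac{m-1}{2}$ tile types rather than $\frac{n-1}{2}$; by the vertical reflection symmetry of the odd-width rows, the tile a distance $j$ east and the tile a distance $j$ west of the column are reflections of one another and share a type. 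Counting yields $1 + \frac{n-1}{2} + \frac{m-1}{2} = \frac{n+m}{2}$ tile types, as required. (The assignment of which dimension is carried by the column and which by the arms is arbitrary, since the formula is symmetric in $n$ and $m$.)

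It remains to verify correctness and directedness, and here the argument is verbatim that of Theorem~\ref{thm:square-upper}. The prime/unprime glue labels ensure that the only tiles able to attach to the north and south of the seed are appropriately reflected copies of the first column tile type, that the subsequent column tiles are forced in order, and that the column is capped at the correct height by the ``Vz'' tile; regardless of how the column tiles are reflected across the vertical axis, the only subassemblies that can grow from their east and west glues are the hard-coded horizontal arms, which are themselves capped at the correct width by ``Hz''. Thus every terminal assembly is, up to reflection and translation, exactly the $n \times m$ rectangle, and the system is directed.

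The main obstacle, as in the square case, is confirming that no valid sequence of reflections lets growth escape the rectangle's boundary. The essential observation is that the odd dimensions make the target shape invariant under precisely the reflections available to the tiles, so a reflected copy of any forced tile still lands at a lattice point inside the rectangle; this is what prevents the ``stretching'' failure exploited in the even case of Theorem~\ref{thm:square-even} and makes the single-tile-per-position, symmetry-folded encoding sound.
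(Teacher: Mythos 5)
Your proposal is correct and is exactly the ``simple modification'' the paper intends: decouple the column height from the arm width, using $\frac{n-1}{2}$ vertical tile types and $\frac{m-1}{2}$ horizontal tile types from the same templates as Theorem~\ref{thm:square-upper}, for a total of $1+\frac{n-1}{2}+\frac{m-1}{2}=\frac{n+m}{2}$, with correctness and directedness carried over verbatim. Your observation that $m$ must also be odd (so the rectangle has a true center and the required reflection symmetries) matches the paper's own framing of the result for odd $n$ and $m$.
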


We also prove that the upper bound of Theorem~\ref{thm:square-upper} is tight, i.e. an $n \times n$ square, where $n$ is odd, cannot be self-assembled using less than $n$ tile types. The proof of this theorem can be found in Section~\ref{sec:square-lower-proof}.

\begin{theorem}\label{thm:square-lower}
For all $n \in \Z^+$ where $n$ is odd, there exists no RTAM system $\mathcal{T}=(T,\sigma,1)$ where $|T|<n$ and $|\sigma|=1$ such that $\mathcal{T}$ weakly (or strictly) self-assembles an $n \times n$ square.
\end{theorem}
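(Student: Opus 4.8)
The plan is to argue by contradiction: assume some \rtam/ system $\mathcal{T}=(T,\sigma,1)$ with $|\sigma|=1$ and $|T|\le n-1$ weakly self-assembles the $n\times n$ square, fix a terminal assembly $\alpha\in\termasm{\mathcal{T}}$, and (after the reflection and translation witnessing weak self-assembly) regard the black tiles as occupying $S=\{0,\dots,n-1\}^2$. The two tools I would use are (i) the reflection-aware ``stretching'' of paths developed in the proof of Theorem~\ref{thm:square-even}, and (ii) the structural fact, special to the \rtam/, that reflections preserve the partition of a tile's sides into the horizontal pair $\{E,W\}$ and the vertical pair $\{N,S\}$ and act independently on the two pairs. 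Thus each tile type $t$ carries a fixed unordered pair of horizontal glues and a fixed unordered pair of vertical glues, and on a straight path the only freedom a reflection grants is to swap the two glues within the pair that lies along the path (its ``mode'').

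First I would establish a one-dimensional pumping lemma. On a maximal horizontal binding path, consecutive tiles bond through horizontal glues, so if two tiles on it share a type \emph{and} a horizontal mode then the enclosed segment can be repeated to the east (the junction closes because both endpoints expose the same ordered horizontal glues and glues are non-orientable), producing a producible path of black tiles that extends past column $n-1$ --- impossible for a system whose black pattern is confined to $S$. Hence along any horizontal spanning family of black tiles (covering all $n$ columns) no (type, horizontal-mode) pair repeats; since each type admits only two horizontal modes, at least $\lceil n/2\rceil=(n+1)/2$ distinct types appear horizontally. The identical argument on a vertical spanning family forces $\ge (n+1)/2$ types to appear vertically. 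These two bounds exactly match the upper-bound construction of Theorem~\ref{thm:square-upper}, whose central row uses the seed together with the $(n-1)/2$ ``$H$'' types and whose central column uses the seed together with the $(n-1)/2$ ``$V$'' types.

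The crux is to combine the two counts into $n$ rather than $(n+1)/2$, which amounts to showing that the set of types occurring on the horizontal spanning family and the set occurring on the vertical spanning family are disjoint except for the single type sitting at the common center. I would prove this by a \emph{transplant} pump: if a non-central type $t$ occurs both at a black position of row-height $r_0$, where it bonds horizontally, and as the origin of a vertical continuation arm of height $h$ (the arm that, in the vertical family, runs from height $r_1$ to an edge, so $h=(n-1)-r_1$ or $h=r_1$), then, using the independence of the horizontal and vertical modes, I can reach the first occurrence by a producible assembly and then graft onto it a fresh copy of that arm pointing away from the center. Since the arm is long enough to reach an edge from height $r_1$ but is now launched from the \emph{different} height $r_0$, it overshoots $S$, placing copies of black tiles outside the square --- a contradiction. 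Disjointness then yields $|T|\ge (n+1)/2+(n+1)/2-1=n$.

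I expect the transplant step to be the main obstacle, for two reasons. First, the reflection bookkeeping must be done carefully: one has to verify that the glue initiating the perpendicular arm can always be exposed on the outward side at the grafting location without disturbing the horizontal bond that certified the first occurrence, and one must treat the degenerate cases (for instance $r_0=r_1$, handled by instead transplanting along the orthogonal coordinate or by choosing a different occurrence) together with the symmetric north/south and east/west variants. Second, there is the familiar gap between adjacency and binding: a full row or column of $S$ need not be connected through horizontal (resp.\ vertical) bonds, so some care is needed to extract genuine spanning binding paths of black tiles. I would do this by working within $S$ and invoking the temperature-$1$ fact that any binding path is itself producible in isolation, together with the stretching of Theorem~\ref{thm:square-even} to convert an arbitrary connecting path into a monotone one on which the mode-counting applies.
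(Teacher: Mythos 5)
Your proposal is a genuinely different route from the paper's, but it has two gaps that I do not think are repairable as written. First, the straight spanning binding paths your mode\nobreakdash-counting needs may simply not exist: connectivity of the binding graph only guarantees \emph{some} path between tiles, and an adversarial tile set can realize the square with a spanning tree in which every row and every column contains a broken bond, so no horizontal binding path covers all $n$ columns and no vertical one covers all $n$ rows. Your proposed repair (stretch an arbitrary connecting path into a monotone one, as in Theorem~\ref{thm:square-even}) does not restore the count: a stretched corner-to-corner path is a staircase that is only guaranteed to contain $n-1$ east--west bonds and $n-1$ north--south bonds, so even if the $(\text{type},\text{mode})$ pumping argument is carried out on it, you get at least $(n-1)/2$ types used horizontally and $(n-1)/2$ vertically, and even with \emph{perfect} disjointness this sums to $n-1$ (or $n-2$ after subtracting the shared type), strictly short of the bound $n$ you must prove. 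The value $(n+1)/2$ per direction requires an $n$-tile straight path in each direction, which is exactly what cannot be assumed. The paper avoids this entirely: it takes a single corner-to-corner path of at least $2n-1$ tiles, notes that $|T|\leq n-1$ forces \emph{three} occurrences of some type by pigeonhole, and proves the key lemma that among three occurrences some two are joined by a subpath that meets them on \emph{different sides} of the tile type; that pair (not a pair agreeing in a glue ``mode'') is what can be pumped after stretching, and a single pumped path already yields $|T|\geq n$.

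Second, the transplant step does not produce a contradiction even if the graft can be carried out. Weak self-assembly in the RTAM is defined up to reflection \emph{and translation} of each terminal assembly, so producing ``black tiles outside $S$'' proves nothing by itself; you need a producible assembly whose black tiles fit inside \emph{no} $n\times n$ square. Your grafted arm has height $h\leq n-1$, so an arm launched from row $r_0$ together with a width-$n$ horizontal path still fits inside an $n\times n$ bounding box; to rule out every placement of the square you would additionally need black tiles pinning the opposite extreme row inside the same producible assembly, and guaranteeing that while the reflected arm grows without colliding with those pinning tiles (orientations of grafted tiles differ from those in $\alpha$, so you cannot simply invoke subassemblies of $\alpha$) is precisely the part your sketch leaves open. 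The paper's pump sidesteps this by inserting $2n$ copies of the repeated segment along one monotone path, which places two black tiles (the corner types) at Manhattan distance at least $2n>2(n-1)$ apart, a configuration that fits in no $n\times n$ square regardless of translation or reflection.
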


\later{

\section{Proof of Theorem~\ref{thm:square-lower}}\label{sec:square-lower-proof}

In this section, we give the full proof of Theorem~\ref{thm:square-lower}.

\begin{proof}
We prove Theorem~\ref{thm:square-lower} by contradiction.  Therefore, assume that for some $n \in \Z^+$ such that $(n \mod 2) = 1$, there exists an RTAM system $\mathcal{T} = (T,\sigma,1)$ such that $|\sigma|=1$ and $|T|<n$, and $\mathcal{T}$ weakly self-assembles an $n \times n$ square $S$.  It must therefore be the case that for some subset of tile types $B \subseteq T$, for all $\alpha \in \termasm{\mathcal{T}}$ there exist $r \in R$ and $\vec{v} \in \Z^2$ such that for $\alpha_r = F(\alpha,r,\vec{v})$, $\alpha_r^{-1}(B) = S$.

We choose an arbitrary $\alpha \in \termasm{\mathcal{T}}$ and, using the notation of Figure~\ref{fig:even-square-corners}, note that in the binding graph of $\alpha$, there is a path $\pi_a^c$ which connects the tiles in locations $\vec{a}$ and $\vec{c}$ (i.e. opposite corners of the square).  Furthermore, the tile types of the tiles at locations $\vec{a}$ and $\vec{c}$ must be in $B \subseteq T$, and tiles of types in $B$ can be no further apart from each other (in the plane) than a Manhattan distance of $2n-1$.  Since the seed $\sigma$ may not be on $\pi_a^c$, define the location $\vec{x}$ to be either 1) the location of $\sigma$ if it is on $\pi_a^c$, or 2) the location of the intersection of $\pi_a^c$ and the shortest path in the binding graph which contains $\sigma$ and some tile in $\pi_a^c$.  Note that such a location $\vec{x}$ must exist since $\alpha$ must be connected.  Define the path $\pi_{\sigma}^x$ as the path from $\sigma$ to $\vec{x}$ and note that if $\sigma$ is on $\pi_a^c$ then $\pi_{\sigma}^x$ consists of a single tile.  Also, we can now define $\pi_x^a$ and $\pi_x^c$ as the paths from $\vec{x}$ to $\vec{a}$ and $\vec{c}$, respectively.  (See Figure~\ref{fig:s-x-a-c-paths} for a possible example of such paths in $\alpha$.)

\begin{figure}[htp]
\begin{center}
	\includegraphics[width=2.0in]{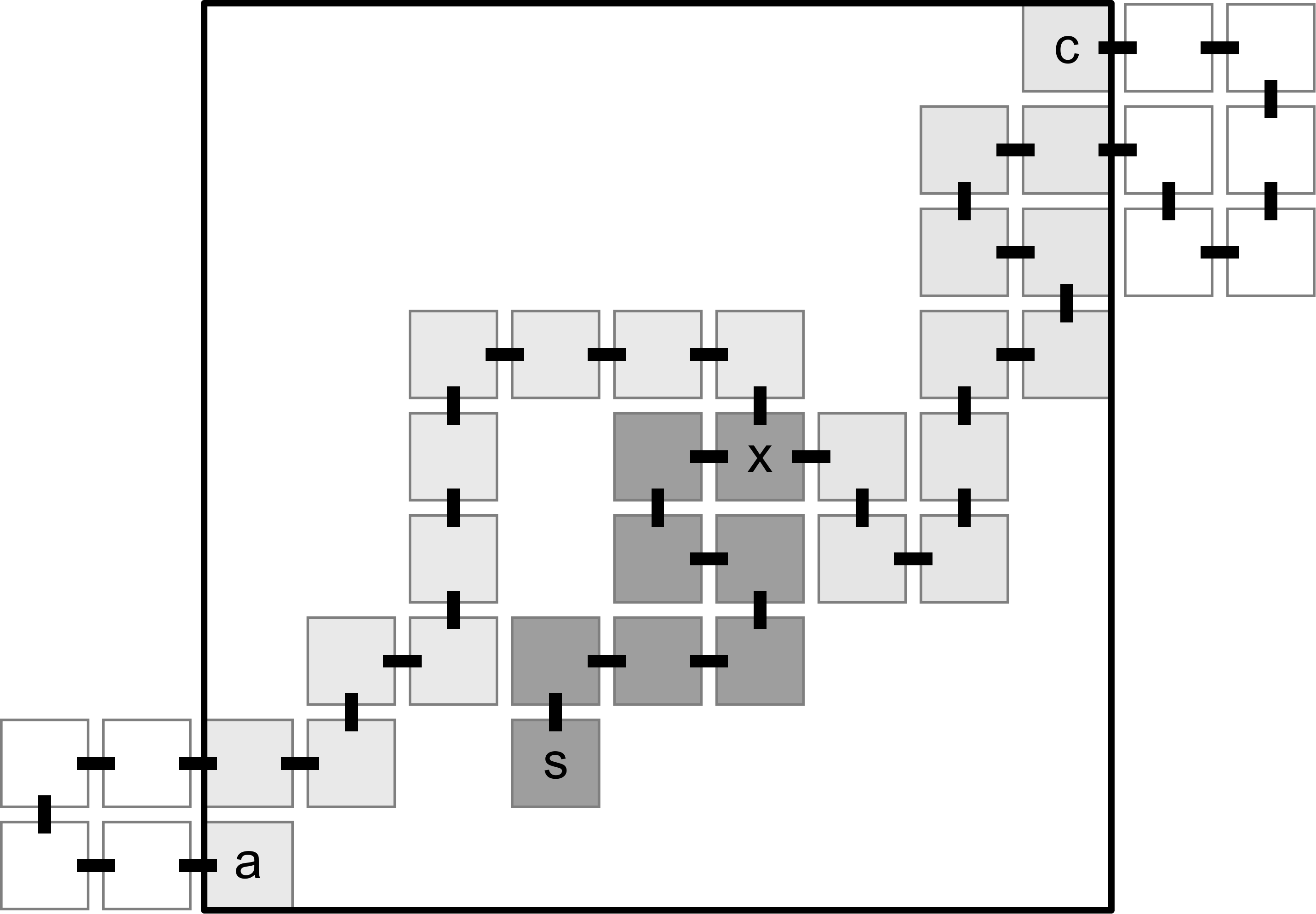}
\caption{Example paths $\pi_{\sigma}^x$ (darkest grey) and $\pi_a^c$ (a combination of light grey, dark grey, and white tiles) in $\alpha$ which weakly self-assembles a $9 \times 9$ square (i.e. of odd dimension).  White tiles represent tile types not in $B \subseteq T$, while grey and dark grey tiles represent ``black'' tile types in $B$.}
\label{fig:s-x-a-c-paths}
\end{center}

\end{figure}

We will now define a valid assembly sequence $\vec{\alpha'}$ in $\calT$ which builds an assembly $\alpha'$ which consists of modified versions of the paths $\pi_{\sigma}^x$ and $\pi_a^c$, and which places tiles of types in $B \subseteq T$ (i.e. ``black'' tiles) at points further apart than $2n-1$, proving that $\calT$ does not weakly self-assemble $S$.  Let $k=|\pi_{\sigma}^x|$ and let $\vec{\alpha'}$ start from $\sigma$ and first build a modified version of $\pi_{\sigma}^x$ in the following way.  If $k=1$, stop.  Else, without loss of generality assume that $\alpha(\pi_{\sigma}^x(1))$ attaches to the north of $\sigma$ in $\alpha$.  (If it binds on the east, south, or west, for the following argument used to construct $\alpha'$, rotate all directions used clockwise by $90$, $180$, or $270$ degrees respectively.)  For each $i$ where $0<i<k$, place a tile of the same type as at $\alpha(\pi_{\sigma}^x(i))$ but reflected so that its input side is either south or west, and its output side is either north or east.  Note that for any pair of input and output sides, it is possible to reflect a tile to meet this criteria (see Figure~\ref{fig:input-output-flips}).  Place the tile of that type and orientation so that it binds with the $(i-1)$th tile of the newly created path.  Note that this also must be possible because, following from $\sigma$, every tile is oriented so that the side used as its output side in $\pi_{\sigma}^x$ is north or east and has a glue which matches that of the newly placed tile on its (now) south or west side, respectively, and since the newly forming path grows only up and/or to the right, no previously placed tile can block it from binding.  The result of this sequence of placing the tiles from $\pi_{\sigma}^x$ is a ``stretched'' version of the path $\pi_{\sigma}^x$ which grows only up and/or right from the seed $\sigma$, grown by the valid assembly sequence $\vec{\alpha'}$.

We now extend $\vec{\alpha'}$ to grow a stretched version of $\pi_a^c$ by growing stretched versions of $\pi_x^a$ and $\pi_x^c$.  We will build one of $\pi_x^a$ or $\pi_x^c$ by stretching it up and to the left, and the other by stretching it down and to the right.  (See Figure~\ref{fig:s-x-a-c-paths-stretched} for an example.)  Since the final tile of $\pi_{\sigma}^x$ is included in $\pi_a^c$, it must be the case that it has two output sides (to bind to each of $\pi_x^a$ and $\pi_x^c$).  Also, recall that its input side must have been either south or west. We modify $\vec{\alpha'}$ so that the final tile of $\pi_{\sigma}^x$ is oriented with one output side on either the north or west, and the other on either the south or east.  See Figure~\ref{fig:input-output-flips2} for all possible orientations of this tile. Without loss of generality, assume that in this orientation, the side which serves as input to $\pi_x^a$ is on either the north or west, and that serving as input to $\pi_x^c$ is on either the south or east.  (If it is the opposite, the directions used to grow the stretched $\pi_x^a$ and $\pi_x^c$ can simply be swapped and the argument will still hold.)  We will now build a stretched version of $\pi_x^a$ up and to the left, and $\pi_x^c$ down and to the right.  To do this, we extend $\vec{\alpha'}$ to attach the tiles of $\pi_x^a$, moving along that path as it appears in $\alpha$ and placing the same tiles in the same order, but always orienting them so that each has as its input side either its south or east, and as its output side either its north or west, allowing each successive tile to correctly bind to its predecessor on the path.  As before, this is possible because of the set of reflections possible and the fact that no tile previously placed from $\pi_x^a$ or $\pi_{\sigma}^x$ can block the placement.  The fact that blocking can't occur is due to the fact that the stretched version of $\pi_x^a$ grows strictly up and/or to the left so all previous tiles placed for $\pi_x^a$ cannot block, and after the first tile placed after the final tile of $\pi_{\sigma}^x$, the newly forming path is either completely above $\pi_{\sigma}^x$ or to the left of the topmost tile of that path.  The stretched version of $\pi_x^c$ is grown analogously, with all input sides being on the north or west and output sides to the south or east.

Note that in the special case where $\vec{x} = \vec{a}$ or $\vec{x} = \vec{c}$ in $\alpha$, meaning that either the seed is in one of those corners, or the path from the seed to the path $\pi_a^c$ terminates at one of those corners, we can simplify our construction and simply grow a single direction from $\vec{x}$.

\begin{figure}[htp]
\begin{center}
	\includegraphics[width=3.5in]{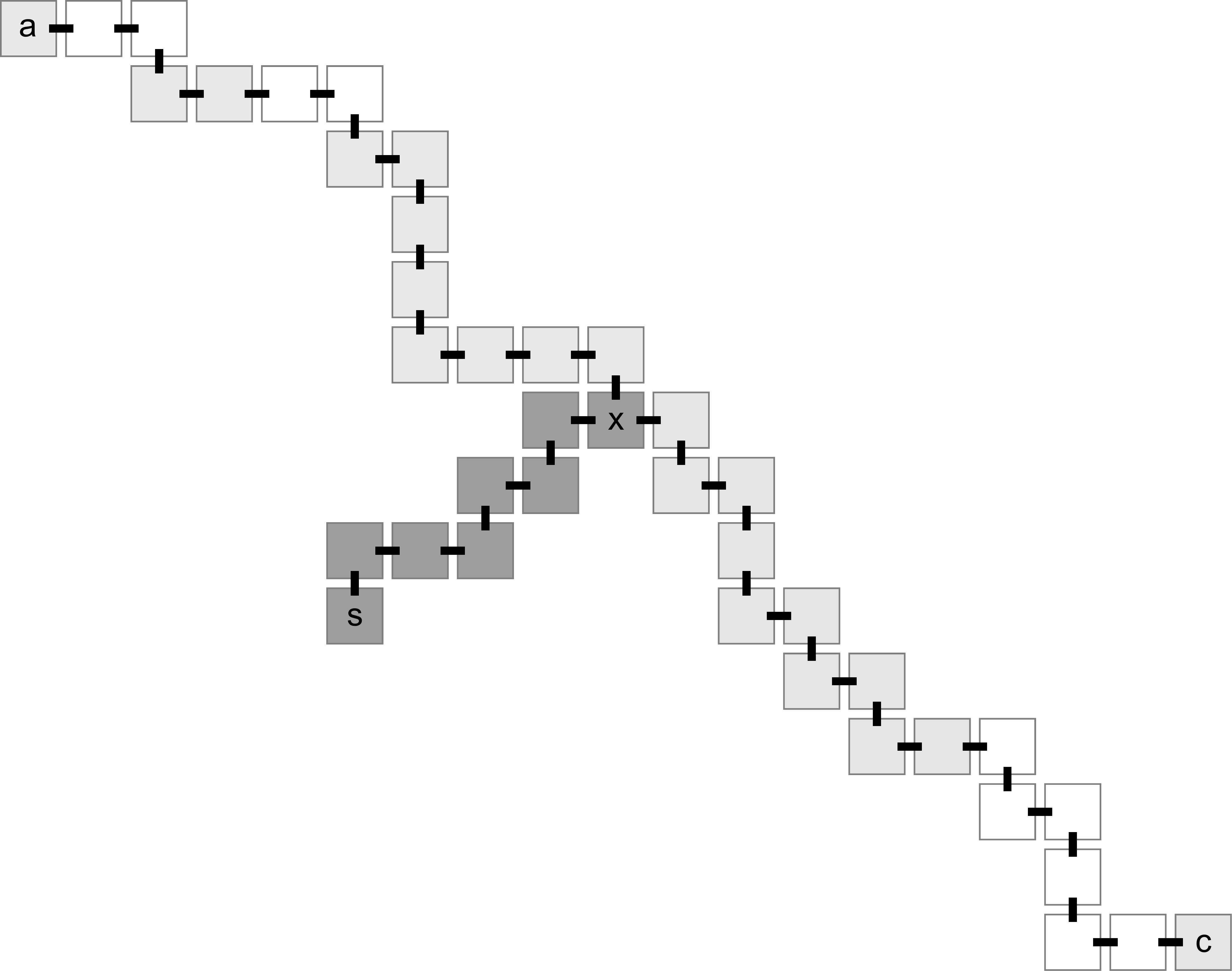}
\caption{Example paths $\pi_{\sigma}^x$ and $\pi_x^c$ stretched out in $\hat{\pi}_{\sigma}^c$.}
\label{fig:s-x-a-c-paths-stretched}
\end{center}

\end{figure}

At this point, $\vec{\alpha'}$ grows assembly $\alpha'$ which is a path that looks like that in Figure~\ref{fig:s-x-a-c-paths-stretched} and which starts from the seed and grows a stretched version of $\pi_{\sigma}^x$, then grows a stretched version of $\pi_a^c$ (as connected stretched versions of $\pi_x^a$ and $\pi_x^c$) from the end of that path.  Since in $\alpha$, $\pi_a^c$ connects the bottom-left corner of the square $S$ with the top-right corner, it must be the case that $|\pi_a^c| \ge (2n-1)$.  Since $|T| \leq (n-1)$, and $2(n-1) = 2n-2$, by the pigeonhole principle we know that along $\pi_a^c$ (and therefore also its stretched version), it must be the case that at least $3$ tiles of some same type appear.

\begin{figure}[htp]
\begin{center}
	\includegraphics[width=2.0in]{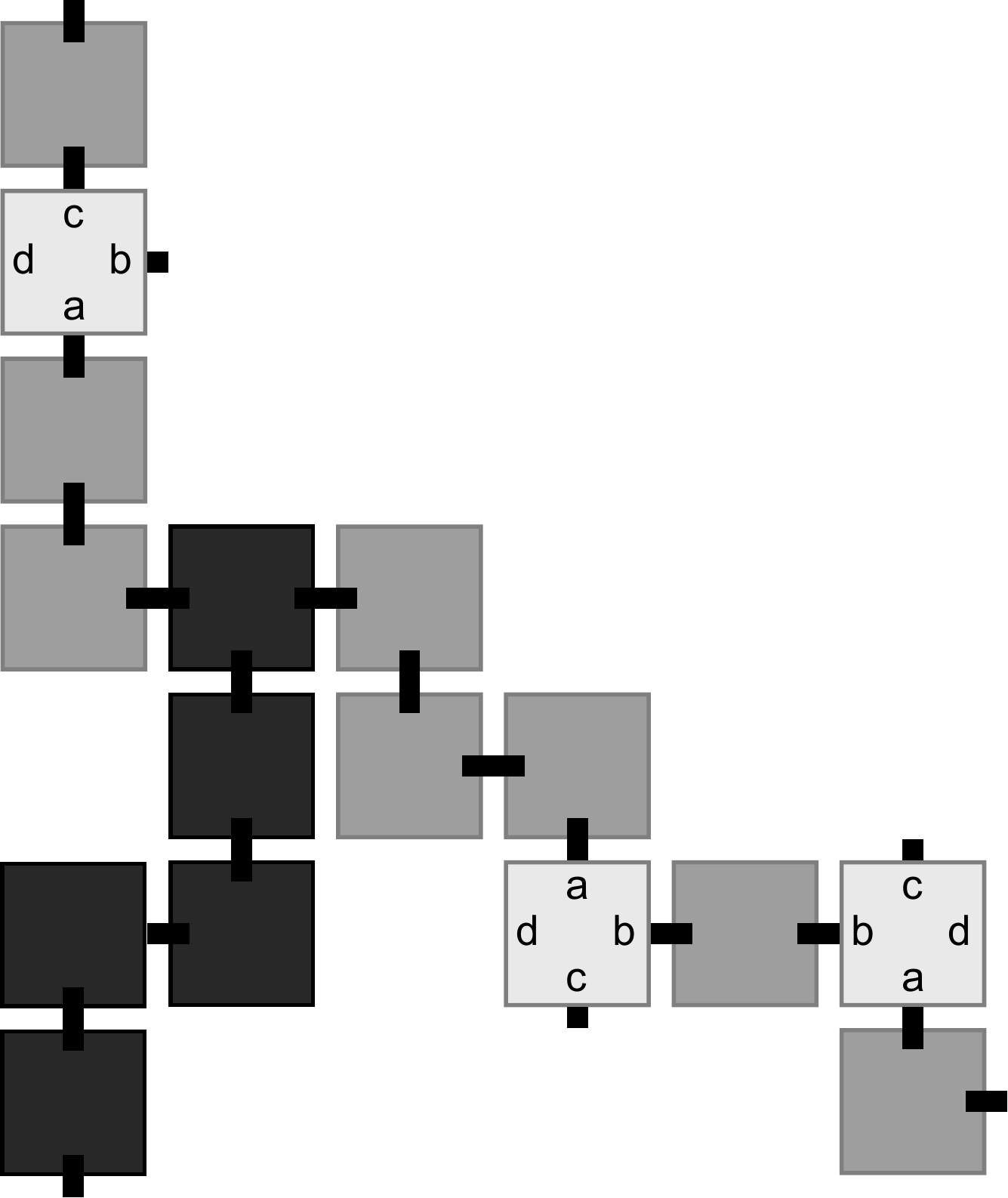}
\caption{One possible configuration of a portion of the stretched paths $\pi_{\sigma}^x$ (black) and $\pi_x^a$ and $\pi_x^c$ (grey).  Three reflected copies of the same tile type appear on $\pi_x^a$ and $\pi_x^c$ (lightest grey).  A path using different input sides can be formed between the top and bottom copies of the repeated tile type, entering from the $a$ and $b$ sides, respectively.}
\label{fig:paths-3-tile-repeats}
\end{center}

\end{figure}

With $3$ copies of the same tile type appearing along $\pi_a^c$, it must be the case that a path can be formed which connects two of the copies in such a way that it enters each of the two tiles along different sides of that tile type when it's in its default configuration.  (See Figure~\ref{fig:paths-3-tile-repeats} for an example where a path can be formed between the top copy and the bottom copy, with the top being entered via the $a$ side and the bottom via the $b$ side, while the path between the top and middle copies enters each from the $a$ side, and the path between the middle and bottom copies enters each from the $b$ side.)  Let $\vec{f}$, $\vec{g}$, and $\vec{h}$ be the locations of the repeated tile type in $\alpha'$, and let $\vec{f}$ and $\vec{g}$ be the locations which can be connected by a path entering the tiles at those locations from different sides, and let $\vec{f}$ be the furthest of $\vec{f}$ and $\vec{g}$ from the final tile of $\pi_{\sigma}^x$ (or the southernmost if they tie).  Without loss of generality, assume that $\vec{f}$ is south of $\vec{g}$ (otherwise, rotate the following directions).

We now modify $\vec{\alpha'}$ so that it builds the stretched copies of $\pi_{\sigma}^x$, $\pi_x^a$, and $\pi_x^c$ until it is about to place the tile at $\vec{f}$.  At this point, it rotates the tile for that position so that the side which serves as input for the tile at $\vec{g}$ is facing either south or east, and places that tile.  Then, since the tile at $\vec{g}$ was above and/or to the left of that at $\vec{f}$, the side it used as output is now oriented so that it can be used as output from the copy at $\vec{g}$.  We add to $\vec{\alpha'}$ the tile placements which now make a copy of the path $\pi_g^f$, again orienting the final tile to allow yet another copy.  We make $\vec{\alpha'}$ add $2n$ copies of $\pi_g^f$.  Then, we let the final tile on the final copy of $\pi_g^f$ be oriented so that the final portion of the original stretched copy of $\pi_x^c$ can grow.  At this point, $\vec{\alpha'}$, which is still a valid assembly sequence in $\calT$, forms the stretched copy of $\pi_{\sigma}^x$, as well as the stretched copy of $\pi_a^c$, with the addition that some interior portion of $\pi_a^c$ has been ``pumped'' so that the overall distance between the tiles at the end of the path are at a Manhattan distance at least $2n$ from each other.  This is because the stretched path $\pi_a^c$ is at least that long, containing $2n$ copies of $\pi_g^f$, and it grows so that each tile placed makes the endpoints strictly further from each other.  Further, the tiles placed at the endpoints of the stretched version of $\pi_a^c$ are of the same types as those placed at $\vec{a}$ and $\vec{c}$ in $\alpha$.  Since those were at the corners of the square $S$, they were tile types from $B \subseteq T$ and must therefore be inside the square $S$. However, in $S$, no two points are at a Manhattan distance greater than $2n-1$ from each other.  Thus, $\vec{\alpha'}$ does not weakly (or therefore strictly) self-assemble $S$ and neither does $\calT$.  This is a contradiction, and thus Theorem~\ref{thm:square-lower} is proven.
\end{proof}

} %

\vspace{-20pt}
\subsection{Assembling Finite Shapes in the \rtam/}\label{sec:general-shapes}
\vspace{-8pt}

In this section we first give a corollary of Theorem~\ref{thm:square-upper} showing that sufficiently symmetric shapes weakly self-assemble in the \rtam/. Then we prove $3$ theorems about assembling finite shapes in $\Z^2$ in the \rtam/. These theorems show that the assembly of finite shapes in singly seeded \rtam/ systems is quite a bit different than the assembly of finite shapes in the aTAM by singly seeded systems.

Given a shape $S$ in $\Z^2$, let $\chi_S$ denote the characteristic function of the set $S$. That is, $\chi_S\left(x,y\right) = 1$ if $x\in S$ and $\chi_S\left(x,y\right) = 0$ otherwise. Then, we say that a shape $S$ is \emph{odd-symmetric} with respect to a horizontal line $y=l$ (respectively, vertical line $x=l$)  for $l\in \Z^2$ iff for all $(a,b)\in \Z^2$, $\chi_S(a,l-b) = \chi_S(a,l+b)$ (respectively, $\chi_S(l-a,b) = \chi_S(l+a,b)$). If there exists a line such that a shape, $S$, is odd-symmetric with respect to this line, we say that the shape is odd-symmetric. Given a shape $S$, we call the smallest rectangle of points in $\Z^2$ containing $S$ the \emph{bounding-box} for $S$.

Let $R$ denote the bounding box of an odd-symmetric shape $S$, and let $n$ and $m$ in $\N$ be the dimensions of $R$. A simple modification to the proof of Theorem~\ref{thm:square-upper} where a tile is labeled with a label $B$  if and only if it corresponds to points in $S$, shows that odd-symmetric shapes can weakly assemble in the \rtam/.

\begin{corollary}\label{cor:weak-assembly-symm}
Given a shape $S$ in $\Z^2$, if $S$ is odd-symmetric, then there exists an RTAS $\mathcal{T} = (T, \sigma, \tau)$ such that $|\sigma|=1$, $\tau\geq 1$, and $\mathcal{T}$ weakly assemblies $S$.
\end{corollary}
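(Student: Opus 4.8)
The plan is to reduce Corollary~\ref{cor:weak-assembly-symm} directly to the construction already given in the proof of Theorem~\ref{thm:square-upper}, adapting it from filling a full square to ``painting'' the characteristic function of $S$. First I would invoke the hypothesis that $S$ is odd-symmetric, so there is a line of symmetry—say, without loss of generality, the horizontal line $y=l$—with the property that $\chi_S(a,l-b)=\chi_S(a,l+b)$ for all $(a,b)\in\Z^2$. The key structural observation is that odd-symmetry across $y=l$ means the center row $y=l$ is a genuine axis of reflection passing \emph{through} lattice points (not between them), so that the shape decomposes into the center row together with mirror-image halves above and below. This is exactly the situation the Theorem~\ref{thm:square-upper} construction is built to exploit: a centrally-placed seed that grows a reflected vertical column upward and downward, and from each column tile grows reflected horizontal arms to the sides.

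The main steps are as follows. First I would place the bounding box $R$ with dimensions $n\times m$ and position the seed on the symmetry row $y=l$; since the construction of Theorem~\ref{thm:square-upper} requires the seed at a center of symmetry, I would note that odd-symmetry guarantees such a central lattice point exists on the axis. Second, I would reuse the vertical-column and horizontal-arm tile templates of Figure~\ref{fig:odd-square-tiles} to strictly self-assemble the entire bounding box $R$ exactly as in Theorem~\ref{thm:square-upper} (for a square) and its rectangular generalization (the corollary giving $n\times m$ rectangles). Third—and this is the only genuinely new ingredient—I would define the coloring set $B\subseteq T$ by marking a tile type with label $B$ if and only if the position it occupies in the terminal assembly corresponds to a point of $S$. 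Because the construction assembles $R$ strictly and \emph{deterministically up to reflection}, each lattice position of $R$ is occupied by a predictable tile type, so this marking is well-defined. The critical point is that the reflective symmetry of the assembly matches the odd-symmetry of $S$: a tile type appearing at position $(a,l+b)$ also appears (as a reflected copy of the same type) at position $(a,l-b)$, and odd-symmetry ensures $\chi_S$ agrees at these two positions. Hence labeling by membership in $S$ is consistent with the fact that reflected copies of a single tile type occupy mirror-image positions, which is precisely what lets a single $B$-labeled type correctly paint both a point and its mirror image.

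I expect the main obstacle to be verifying this consistency of the coloring under the tile set's reflective ambiguity. In the \rtam/, tiles attach in flipped orientations nondeterministically, so a single tile type may occupy several positions related by reflection across the central axes; for the weak self-assembly definition to be satisfied, we need $\alpha_r^{-1}(B)=S$ for every terminal assembly $\alpha$ (after some reflection $r$ and translation), and this forces $\chi_S$ to be invariant under exactly the reflections that the construction cannot distinguish. Odd-symmetry across one axis is precisely the condition that makes the vertical (column) reflection harmless; I would need to check that the horizontal-arm growth, which is itself built from reflected pairs, does not impose a second symmetry requirement that $S$ fails to satisfy. The resolution is that the arms grow independently at each row and can be hard-coded per row, so only the single assumed axis of symmetry is needed—any lack of symmetry across the perpendicular axis can be absorbed by using distinct (non-reflection-forced) arm tile types where the two sides of $S$ differ, at the cost of more tile types but without affecting weak assembly. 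Finally I would remark that since weak (not strict) assembly is claimed and $\tau\geq 1$ is permitted, there is ample slack: the construction already works at $\tau=1$, and the coloring argument is purely combinatorial, completing the proof.
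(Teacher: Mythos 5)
Your high-level plan---assemble the bounding box by the spine-and-arms construction of Theorem~\ref{thm:square-upper} and label with $B$ exactly the tile types occupying positions of $S$---is the same as the paper's, but there is a genuine gap in how you handle the axis perpendicular to the symmetry axis, and it sits exactly where the defining limitation of the RTAM bites. You orient the spine (the vertical column) perpendicular to the assumed symmetry line $y=l$, and you propose to absorb the left/right asymmetry of the rows of $S$ by giving such a row \emph{distinct} tile types for its east arm and its west arm. At $\tau=1$ this cannot work: a column tile is held in the column only by its north/south glues, and a horizontal flip leaves the north/south glues in place, so every column tile attaches in both horizontal orientations nondeterministically; whatever distinct glues you put on its east and west sides get swapped at random, independently in each row, and there is no mechanism in the model that makes an arm ``non-reflection-forced.'' Concretely, take $S=\{0,1\}\times\{-1,0,1\}$, which is odd-symmetric across $y=0$ but has no row palindromic about the column line: your system has a terminal assembly in which the arm of row $y=1$ grows east while the arm of row $y=-1$ grows west, and the resulting $B$-set is not any reflection or translation of $S$, so weak self-assembly fails. (The off-by-one symmetry appearing later in Theorem~\ref{thm:temp1-shapes} is precisely a symptom of this phenomenon: at most one position along an axis can carry asymmetric branches, because two or more can flip inconsistently.)

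The repair---and what the paper's one-line proof intends---is to rotate the construction so that the spine lies \emph{along} the symmetry axis of $S$, not across it. The reflections the model cannot suppress are the mirror images of the two arms attached to a given spine tile, so the construction must be arranged so that this forced symmetry coincides with the one symmetry $S$ actually has: give each spine tile equal glues on its two sides perpendicular to the axis, so its two arms are mirror copies of one another (consistent with odd-symmetry of $S$); hard-code a separate arm pair for each position along the spine; and encode all asymmetric information of $S$ along the spine itself, using distinct glue sequences on the two sides of the seed. That direction \emph{is} controllable, because the seed has a fixed orientation and a spine tile flipped across the axis presents its input glue on the wrong side, so it can bind only in its intended position and orientation. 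With the spine so aligned, your labeling and consistency argument goes through essentially verbatim.
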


Additionally, if one is willing to build 2 mirrored copies of the shape in each assembly, then any finite shape can be weakly self-assembled in the RTAM at $\tau=1$, along with its mirrored copy (at a cost of tile complexity approximately equal to the number of points in the shape) by simply building a central column (or row) from which identical copies of hardcoded rows (or columns) grow, so that each side grows a reflected copy of the shape in hardcoded slices.

We say that a TAS (in either the aTAM or the RTAM) $\mathcal{T}$ is called \emph{mismatch-free} if for every producible assembly $\alpha\in \prodasm{T}$ with two neighboring tiles with abutting edges $e_1$ and $e_2$, either $e_1$ and $e_2$ do not have glues or $e_1$ and $e_2$ have glues with matching labels and strengths.
Then, for singly seeded aTAM systems, any finite connected shape can be strictly assembled by a mismatch-free system. Theorems~\ref{thm:no-strict},~\ref{thm:no-directed}, and~\ref{thm:no-mismatch-free} show that assembling shapes in the \rtam/ is more complex. The proofs of these theorems can be found in Section~\ref{sec:proof-shapes}.

\begin{theorem}\label{thm:no-strict}
There exists a finite connected shape $S$ in $\Z^2$ that weakly self-assembles in a singly seeded \rtam/ system such that there exists no singly seeded \rtam/ system that strictly self-assembles $S$.
\end{theorem}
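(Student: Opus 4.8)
The plan is to exhibit one explicit shape and handle the two halves separately, recycling the path-stretching machinery from the proofs of Theorems~\ref{thm:square-even} and~\ref{thm:square-lower}. I take $S$ to be the single-tile-wide ``$\sqsubset$'': a height-$5$ vertical column together with two length-$3$ horizontal arms attached at its top and bottom and pointing the same way, $S = \{(0,j) : 0 \le j \le 4\} \cup \{(1,0),(2,0),(1,4),(2,4)\}$, whose bounding box is $3$ wide and $5$ tall. Since $S$ is odd-symmetric about the line $y=2$, Corollary~\ref{cor:weak-assembly-symm} immediately furnishes a singly seeded \rtam/ system weakly self-assembling $S$, which settles the first half of the statement.

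For the impossibility of strict self-assembly I would argue by contradiction, assuming some singly seeded $\calT=(T,\sigma,\tau)$ strictly self-assembles $S$. The key structural observation is that the adjacency graph of $S$ is a simple path from the tip $(2,0)$ to the tip $(2,4)$ with exactly two corners, at $(0,0)$ and $(0,4)$. Every producible assembly grows connected from the single seed, its binding graph is a subgraph of this adjacency path, and a connected spanning subgraph of a path must be the entire path; hence in any terminal assembly the binding graph is the whole path and each tip is reached only through it. Moreover, removing any interior tile disconnects the path, so in any assembly sequence each non-seed tile is placed with exactly one already-present neighbor and therefore attaches by a single bond of strength $\ge\tau$. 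This single-bond attachment is what lets each non-seed tile be reflected freely (exactly as catalogued in Figure~\ref{fig:input-output-flips}), and it is why the argument in fact goes through at every temperature, not only at $\tau=1$: no tile ever binds cooperatively.

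I would then stretch this forced path. Reflections preserve the width and height of a bounding box (the \rtam/ permits flips but not rotations), so, unlike the square case, I cannot rotate the analysis and must arrange the stretch to overflow the \emph{narrow} dimension, namely the width $3$. Growing from $\sigma$, I reorient the two corner tiles (which bind on perpendicular sides and so may have their ``elbows'' pointed in any of the four diagonal directions) so that the bottom arm extends east while the top arm extends west, the column keeping both corners at a common $x$-coordinate. The two arms then contribute $2+2=4$ units of horizontal bonds to opposite sides of the column, so the producible stretched assembly has bounding-box width $5 > 3$. Since it is producible it extends to a terminal assembly, every one of which must be a reflection and translation of $S$ and hence only $3$ wide --- a contradiction. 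Therefore no singly seeded \rtam/ system strictly self-assembles $S$.

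The main obstacle, and the reason the $\sqsubset$ shape is chosen deliberately, is the last step: because reflections cannot swap width and height, I must force the stretched path to overrun the short dimension, which requires both that the unique forced path carry enough horizontal bonds ($4$ here) and that those bonds be routable in \emph{opposing} horizontal directions. The latter is possible precisely because the two corner tiles can be flipped --- the very feature the \rtam/ adds over the aTAM. A secondary technical point to discharge is that $\sigma$ has a fixed orientation and may lie at a corner or in a straight run; I would verify that in either case the remaining free corner (or corners) still suffices to send the two arms to opposite sides, and that the resulting monotone layout is self-avoiding so that the stretching assembly sequence is valid.
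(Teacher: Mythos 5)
Your proposal is correct and follows essentially the same route as the paper: both establish weak self-assembly by choosing an odd-symmetric shape and invoking Corollary~\ref{cor:weak-assembly-symm}, and both refute strict self-assembly (at arbitrary $\tau$) by reflecting a single tile on a one-tile-wide branch so that a producible assembly escapes every reflection and translation of $S$. Your write-up is merely more explicit than the paper's---you spell out the forced single-bond path structure and the bounding-box-width contradiction, details the paper compresses into ``by applying a single tile reflection''---but the proof skeleton is identical.
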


\later{
\section{Proofs of Theorems~\ref{thm:no-strict},~\ref{thm:no-directed}, and~\ref{thm:no-mismatch-free}}\label{sec:proof-shapes}
\subsection{Proof of Theorem~\ref{thm:no-strict}}

Consider the shape $S$ depicted in Figure~\ref{fig:no-strict1}. By Corollary~\ref{cor:weak-assembly-symm}, there is a singly seeded \rtam/ system that weakly self-assembles $S$. Therefore, to complete the proof, we show that this shape cannot be strictly self-assembled by a singly seeded system in the \rtam/ by contradiction. Suppose that $\mathcal{T} = (T,\sigma, \tau)$ is a singly seeded system that strictly assembles $S$, and let $\alpha$ in $\termasm{T}$ be a terminal assembly such that $\dom(\alpha) = S$. Without loss of generality, assume that the seed for $\mathcal{T}$ is at a location shown in red in Figure~\ref{fig:no-strict1}. Then, by applying a single tile reflection, we can modify the assembly sequence of $\alpha$ to obtain a producible assembly $\beta$ depicted in Figure~\ref{fig:no-strict2}, which contradicts the fact that $S$ strictly assembles in $\mathcal{T}$.

\begin{figure}[htp]
\centering
  \subfloat[][]{%
        \label{fig:no-strict1}%
        \makebox[2in][c]{ \includegraphics[width=1.5in]{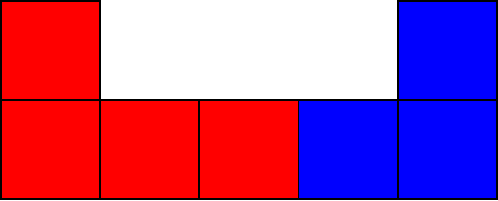}}
        }%
        \quad
  \subfloat[][]{%
        \label{fig:no-strict2}%
        \makebox[2in][c]{ \includegraphics[width=1.5in]{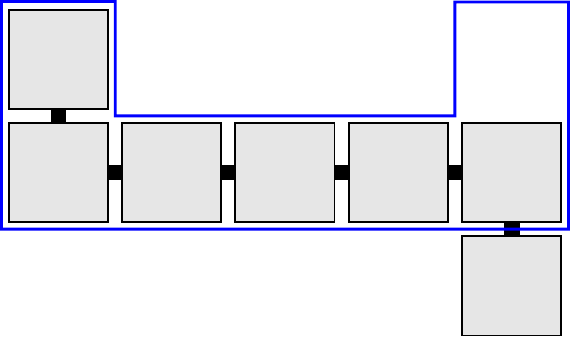}}
        }%
  \caption{(a) A shape $S$ that cannot strictly self-assemble in the \rtam/. (b) A producible assembly that can assemble in any system which can also produce an assembly whose domain is $S$ (upto translation and reflection).}
  \label{fig:no-strict}
\end{figure}

}%

\begin{theorem}\label{thm:no-directed}
There exists a finite shape $S$ in $\Z^2$ that can be strictly self-assembled by some singly seeded \rtam/ system such that every singly seeded \rtam/ system at temperature $1$ which strictly self-assembles $S$ is not directed.
\end{theorem}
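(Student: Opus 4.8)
The plan is to exhibit one explicit finite shape $S$, prove that it strictly self-assembles in some singly seeded \rtam/ system, and then prove that \emph{every} singly seeded temperature-$1$ \rtam/ system strictly assembling $S$ has at least two terminal assemblies whose configurations are not related by any reflection and translation. The governing intuition, already exploited in the proofs of Theorems~\ref{thm:square-even} and~\ref{thm:square-lower}, is that at $\tau=1$ a tile attaching by a single glue is always free to be reflected about the axis perpendicular to that glue, and the designer of a tile set can never suppress this freedom. I would therefore build $S$ so that this unavoidable reflection freedom forces a genuine configurational choice, while arranging that $S$ has too little global symmetry for the two choices to be equivalent.

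Concretely, I would take $S$ to be a flip-robust ``frame'' (so that strict assembly is possible at all) carrying a reflection-sensitive gadget whose two reflected tilings both fit inside $S$, placed off the unique symmetry axis that $S$ possesses. The first step is the positive direction: construct a singly seeded system, put the seed at a location from which growth is symmetric, and verify by a finite case analysis over the reflections in $R$ that every producible assembly extends to a terminal assembly with domain exactly $S$ --- i.e. that no reflection of any attaching tile ever lets growth leave $S$ or stop short of it. The second step is the forcing direction: given an arbitrary strict-assembly system $\mathcal{T}=(T,\sigma,1)$ and a terminal $\alpha$ with $\dom\alpha=S$, I would locate a path through the gadget and, exactly as in the stretching construction of Theorem~\ref{thm:square-lower}, re-run its assembly sequence reflecting a single tile so that the same sequence of tile \emph{types} is laid down along the alternate route. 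Since the gadget and its reflected image both lie inside $S$, the resulting $\beta$ is again terminal with $\dom\beta=S$, but the tile type appearing at the rerouted positions differs from that in $\alpha$.

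The main obstacle is the word \emph{every}: I must rule out all ``flip-trivial'' directed designs, in which a clever tile set uses a single, sufficiently symmetric tile type whose reflection changes nothing. The key to defeating such designs --- and the reason the argument has to be about the \emph{position} of a type-sequence rather than the coloring of one junction --- is that \rtam/ cannot forbid a reflection, so the rerouting of the second step is available regardless of how $T$ is chosen; the designer controls only which tiles exist, not whether they may flip. The delicate point, which I expect to require the most care, is to choose $S$ so that it simultaneously (i) is flip-robust enough that strict assembly succeeds and (ii) admits no reflection--translation symmetry carrying the configuration of $\alpha$ to that of $\beta$. I would secure (ii) by attaching the reflection-sensitive feature asymmetrically relative to the single reflection axis of $S$, so that the global reflection realizing the feature's local flip would necessarily displace some other, rigidly placed, part of $S$; verifying that $\alpha$ and $\beta$ are genuinely inequivalent then reduces to comparing the finitely many candidate reflections and translations against the figure defining $S$.
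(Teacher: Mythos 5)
Your governing intuition---that the $\tau=1$ reflection freedom cannot be suppressed by any tile set, and that $S$ should be chosen asymmetric enough that the resulting choice is configurationally irreversible---is exactly the paper's, and your positive direction matches the paper's (which exhibits an explicit shape and tile set). The gaps are in the forcing direction, and there are two. First, you assume that in an arbitrary system strictly assembling $S$ you can locate a path through the gadget whose rerouting (by reflecting a single tile) stays inside $S$, completes to a terminal assembly $\beta$ with $\dom \beta = S$, and changes the configuration; none of these three facts is automatic, and the paper's proof is organized around exactly the dichotomy this forces. For its shape---a row of locations $L_1,\dots,L_5$ with two further cells west of $L_1$, supported by arms meeting the south glues of the tiles $t_1$ and $t_5$---the paper argues: if the row were spanned by consecutive strength-$\tau$ bonds, then rerouting/stretching as in Theorem~\ref{thm:square-lower} yields a producible assembly $11$ tiles wide, which fits inside no reflection or translation of $S$; this contradicts \emph{strict assembly}, not directedness, so such systems simply do not exist. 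The non-directedness argument is available only in the complementary case, where the interior cells $L_2,L_3,L_4$ are filled by growth converging from the two independently anchored arms: there, reflecting $t_1$ reroutes the two-tile branch ($a,b$) that fills the western cells into $L_2,L_3$, and this races against the three-tile branch ($a',b',c'$) growing west from $t_5$, so that the tile at $L_3$ is nondeterministically $b$ or $b'$. Your proposal implements only the first mechanism---the one whose conclusion is failure of strict assembly---and says nothing about the converging-growth case, which is where the non-directedness actually arises.

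Second, your defense against ``flip-trivial'' designs does not close that hole. If in $\alpha$ the two competing routes are filled by branches carrying \emph{identical} type sequences, then rerouting ``by position'' produces a $\beta$ whose configuration equals $\alpha$'s, and no contradiction with directedness arises; placing the gadget off the symmetry axis of $S$ is irrelevant to this failure mode, since then there is no configurational difference for a global symmetry to undo. What is needed is a proof that no tile set whatsoever can fill the two routes type-identically. The paper obtains this from the geometry of its shape: the contested tile $b'$ must carry two strength-$\tau$ glues (it has bound neighbors on both its east and west), whereas if the tile $b$ arriving at $L_3$ from the reflected side were of the same type, then in its original position---the westernmost cell of $S$---it would expose a strength-$\tau$ glue allowing a tile to attach outside every embedding of $S$, contradicting strict assembly; hence $b \neq b'$ and the two race outcomes are genuinely inequivalent configurations. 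Your proposal needs, and lacks, an analogue of this tile-type separation argument; without it, the quantifier ``every'' in the theorem statement is out of reach.
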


\later{
\subsection{Proof of Theorem~\ref{thm:no-directed}}

\begin{figure}[htp]
\centering
  \subfloat[][]{%
        \label{fig:no-directed1}%
        \makebox[2in][c]{ \includegraphics[width=1in]{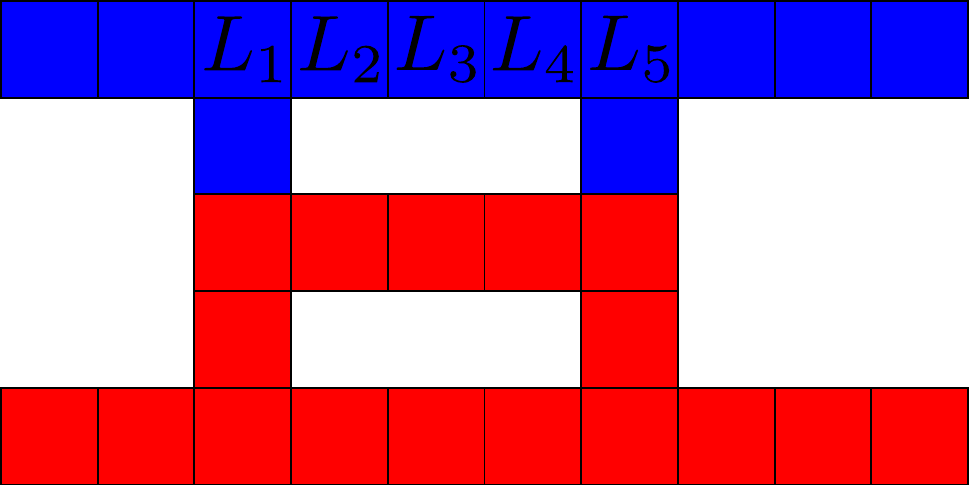}}
        }%
        \quad
  \subfloat[][]{%
        \label{fig:no-directed2}%
        \makebox[2in][c]{ \includegraphics[width=1.5in]{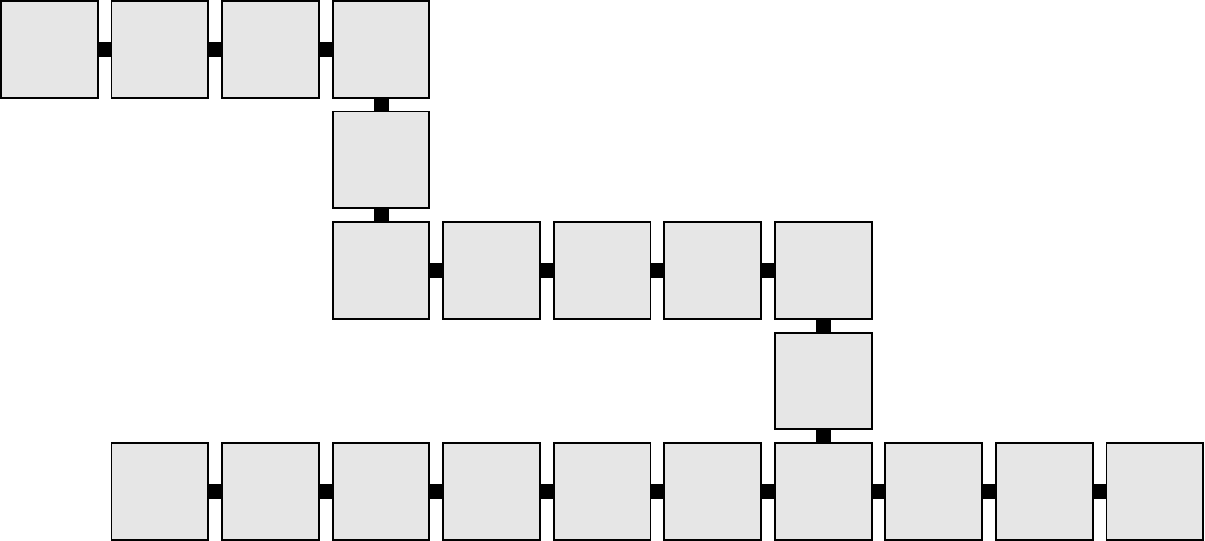}}
        }%
  \caption{(a) A shape $S$ that cannot be self-assembled in the \rtam/ by a directed system. }
  \label{fig:no-directed}
\end{figure}

Consider the shape $S$ depicted in Figure~\ref{fig:no-directed1}. First, to see that $S$ can be strictly self-assembled in the \rtam/ by a singly seeded system, consider the tile set shown in Figure~\ref{fig:no-directed-tileset}.

\begin{figure}[htp]
\centering
        \includegraphics[width=1.5in]{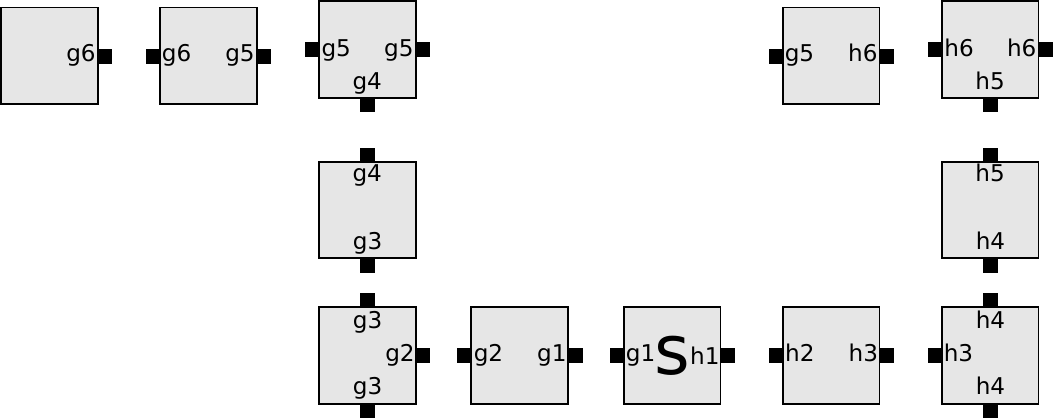}
  \caption{A tile set $T$ such that a system $\mathcal{T} = (T, \sigma, \tau)$, with $\sigma$ a single tile assembly consisting of the tile labeled $S$, strictly self-assembles the shape $S$ shown in Figure~\ref{fig:no-directed1}. Note that all of the glues depicted in this figure have strength $\tau$, and that assembly of the shape $S$ requires the reflection of these tiles.}
  \label{fig:no-directed-tileset}
\end{figure}

Then, to complete the proof, we show by contradiction that $S$ cannot be strictly self-assembled by a directed singly seeded system in the \rtam/. Suppose that $\mathcal{T} = (T,\sigma, \tau)$ is a directed singly seeded system that strictly assembles $S$, and let $\alpha$ in $\termasm{T}$ be a terminal assembly such that $\dom(\alpha) = S$. Without loss of generality, assume that the location of the seed in $\alpha$ is at a location shown in red in Figure~\ref{fig:no-directed1}.

Let $t_i$ denote the tile of $\alpha$ located at positions $L_i$.
Note that the south glue of either $t_1$ or $t_5$ is bound to $\alpha$ with strength $\tau$. Without loss of generality, assume that the south glue of $t_1$ is bound to $\alpha$ with strength $\tau$.
If $t_1$ is bound to $t_2$ with strength $\tau$, $t_2$ is bound to $t_3$ with strength $\tau$, and $t_3$ is bound to $t_4$ with strength $\tau$, then by modifying the assembly sequence for $\alpha$ and applying the appropriate reflections, we can can give an assembly that is producible in $\mathcal{T}$, and is a total of $11$ tiles wide. An example of such an assembly is shown in Figure~\ref{fig:no-directed2}. This contradicts the assumption that $\mathcal{T}$ strictly self assembles $S$.
On the other hand, if the strength of the bond between $t_i$ and $t_{i+1}$ is less than $\tau$ for $1\leq i \leq 4$, then the south glues of $t_1$ and $t_5$ must each be bound with strength $\tau$.
Therefore, the assembly $\alpha'$ depicted as gray tiles in Figure~\ref{fig:no-directed3} must be producible in $\mathcal{T}$.
\begin{figure}[htp]
\centering
        \includegraphics[width=1.5in]{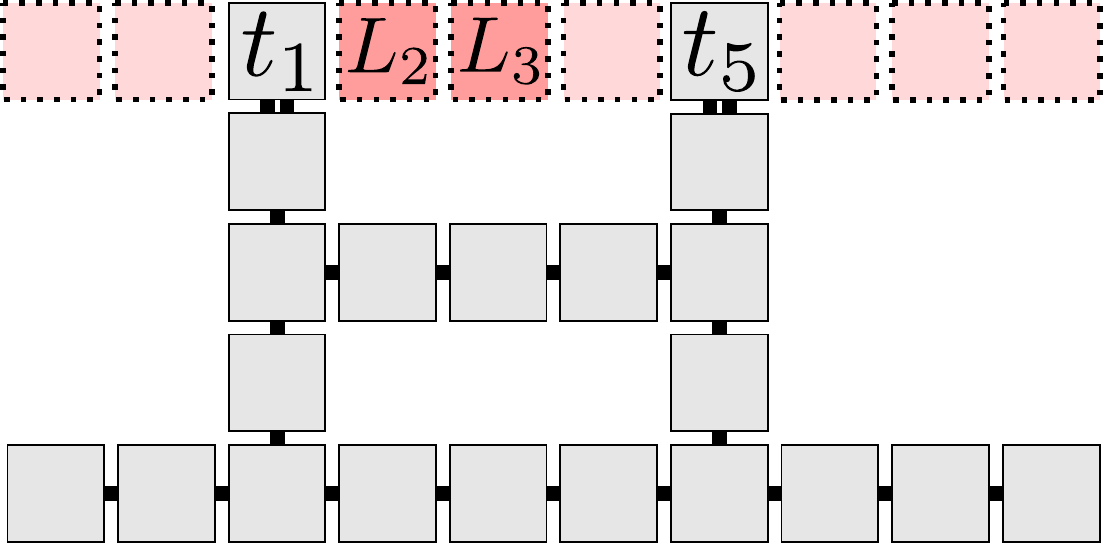}
  \caption{A shape $S$ that cannot be self-assembled in the \rtam/ by a directed system. Strength-$\tau$ glues are shown with two bars, the rest of the glues can be either strength-1 or strength-2 as long as the assembly is $\tau$-stable.}
  \label{fig:no-directed3}
\end{figure}
In the following argument, without loss of generality, we assume that for every glue, $g$, on a tile in $T$, there is a matching glue $g'$ on some (possibly the same) tile in $T$.
Then, since two tiles must be able to attach so that they occupy the two tile locations to the west of $t_1$, after applying a reflection of $t_1$, these same tiles, call them $a$ and $b$, must be able to attach with strength $\tau$ to the east to $t_1$ with the easternmost tile, $b$, located at $L_3$.  Similarly, (prior to $a$ and $b$ binding) three tiles, call them $a'$, $b'$, and $c'$, must be able to attach with strength $\tau$ to the west of $t_5$, with $a'$ located at $L_4$, $b'$ located at $L_3$, and $c'$ located at $L_2$. Notice that $b'$ must have two strength-$\tau$ glues.
Hence, nondeterministically, either $b$ or $b'$ will be at location $L_3$. Finally, note that $b$ and $b'$ cannot have the same tile type. This follows by contradiction; suppose that $b$ and $b'$ are of the same type. Then, after binding, $b$ must expose a strength-$\tau$ glue. However, when the westernmost tile that attaches to the west of $t_1$ is $b$, this strength-$\tau$ glue would allow for a tile to bind, giving a producible assembly with a domain that is not contained in $S$. This contradicts the assumption that $\mathcal{T}$ strictly self-assembles $S$. Finally, since in any terminal assembly of $\mathcal{T}$, either $b$ or $b'$ will be at location $L_3$, and $b$ and $b'$ do not have the same tile type, we conclude that
$\mathcal{T}$ is not directed.

}%

\begin{theorem}\label{thm:no-mismatch-free}
There exists a finite shape $S$ in $\Z^2$ such that every singly seeded \rtam/ system that strictly self-assembles $S$ is not mismatch-free.
\end{theorem}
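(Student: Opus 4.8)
The plan is to exhibit one explicit finite shape $S$ (drawn in a figure, in the spirit of the shapes used for Theorems~\ref{thm:no-strict} and~\ref{thm:no-directed}), to verify that $S$ \emph{is} strictly self-assembled by some singly seeded \rtam/ system (by giving a concrete tile set that uses a glue mismatch to seal one edge of the boundary, much as in the constructive half of the proof of Theorem~\ref{thm:no-directed}), and then to prove that \emph{any} system strictly self-assembling $S$ must do the same, i.e. must be non--mismatch-free. The impossibility direction is an argument by contradiction: assume $\mathcal{T}=(T,\sigma,\tau)$ is \emph{both} mismatch-free and strictly self-assembles $S$, and derive a producible assembly whose domain is not contained in $S$, contradicting strictness. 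Since $S$ is strictly self-assemblable, the witnessing system is then forced to be non--mismatch-free.

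The structural fact I would use throughout is the reflection invariance of glue placement in the \rtam/: for any tile type the two glues on its North/South sides can only ever appear on the North and South faces (exchanged by the reflections $V$ and $B$), and the two East/West glues stay on the East and West faces (exchanged by $H$ and $B$). Hence a tile type is essentially an unordered $NS$-pair together with an unordered $EW$-pair, and whenever connectivity forces a tile $t$ in the terminal assembly $\alpha$ (with $\dom\alpha=S$) to carry a load-bearing glue $g$ on, say, its South face, the same tile type can present $g$ on its North face in a reflected copy. I would design $S$ so that there is an empty position $q$ just outside $S$, adjacent to a tile whose glue toward $q$ is exactly one of these forced, load-bearing glues, and whose complement is therefore necessarily realized somewhere in $T$.

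The heart of the argument is that mismatch-freeness is precisely the hypothesis that removes the mechanism by which a correct system seals such an exposed edge. In a mismatch-free system every pair of abutting tiles in every producible assembly has matching glues, so no tile can ever be ``stopped'' by being forced into a mismatch; equivalently, the only way the boundary of $\alpha$ can refuse an attachment at $q$ is for the relevant glue to be absent or to have no complement available. Assuming mismatch-freeness, I would then trace the load-bearing bonds that hold the critical sub-shape of $\alpha$ together (locating the seed up to reflection, as in the earlier proofs), and exhibit a valid assembly sequence that builds the same path but reflects the critical tile so that the forced glue $g$ faces $q$; by the invariance above and the fact that $\bar g$ is realized in $T$, a tile then attaches at $q$ with strength $\ge\tau$, giving a producible assembly with domain containing $q\notin S$. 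This is the desired contradiction, and it pinpoints that a strict system must instead have used a mismatch to seal the edge at $q$.

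The main obstacle I anticipate is showing that the sealing really is unavoidable for \emph{every} mismatch-free tile set and routing, rather than for one natural one: a clever designer might try to hide the dangerous glue by routing the load-bearing bond through a different neighbor, by using null glues on one face of a reflected pair, or by choosing the seed location differently. Ruling these out requires a finite case analysis over which sides of the critical tile can carry the connectivity-bearing glue, combined with the $NS$/$EW$ invariance and a pigeonhole count over the at most four orientations available, to show that in each case some exposed boundary glue of $\alpha$ is forced and admits an attaching tile. Pinning down the geometry of $S$ so that this enumeration closes with no escape route --- while keeping $S$ small enough to present cleanly in a figure --- is the delicate part of the proof.
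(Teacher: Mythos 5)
Your impossibility argument rests on a mechanism that does not exist in this model, and as a consequence the mismatch-free hypothesis never actually does any work in your proof. In the \rtam/ (as in the aTAM), a mismatch never prevents a tile from attaching: attachment is governed solely by the total strength of matching glues, and non-matching abutting edges simply contribute nothing. So a correct system does not ``seal'' an exposed edge with a mismatch, and assuming mismatch-freeness does not ``remove'' any blocking mechanism --- mismatch-freeness is a property of the set of producible assemblies, not a constraint on which attachments are allowed. Concretely, look at your central step: you reflect the critical tile so that its forced load-bearing glue $g$ faces the outside location $q$, and then attach a tile carrying $\bar{g}$ at $q$. No step there appeals to mismatch-freeness. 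Hence, if that reflected placement is producible, it is producible in \emph{every} system containing those tiles, and your argument would show that \emph{no} singly seeded system strictly self-assembles $S$ at all --- contradicting your own first step of exhibiting such a system, and reducing the theorem to the vacuous case already witnessed by a shape as in Theorem~\ref{thm:no-strict}. Conversely, if $S$ is chosen so that it \emph{is} strictly self-assemblable, then every such reflected growth must stay inside $S$, and your escape argument fails outright. Your final paragraph senses this tension, but no finite case analysis over sides and orientations can resolve it, because the dilemma is structural: escape-by-reflection and strict assemblability are mutually exclusive independently of mismatches.

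The idea you are missing is the forced \emph{interior collision} that the paper exploits, reusing the shape of Theorem~\ref{thm:no-directed}. There, $t_1$'s anchoring bond is its south glue, which survives a horizontal flip, so the two tiles $a,b$ that the system must be able to grow west of $t_1$ can instead grow east of it, ending with $b$ at $L_3$; symmetrically, the tiles $a',b',c'$ must be able to grow west from $t_5$, putting $a'$ at $L_4$. Both fronts are producible, so some producible assembly contains $b$ at $L_3$ abutting $a'$ at $L_4$ (and likewise $a$ abutting $b'$ in the other race outcome). This is where mismatch-freeness enters, and only here: if the system were mismatch-free, the abutting edges of $b$ and $a'$ would have to carry matching glues. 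But the glue $b$ presents toward $a'$ is, by the reflection, exactly the glue $b$ presents toward the \emph{outside} of $S$ in the intended assembly, while $a'$'s west glue is a non-null load-bearing glue of strength $\tau$; a match would therefore let a suitably reflected copy of $a'$ attach just outside $S$ in the ordinary producible assembly, contradicting strict self-assembly. So the unavoidable mismatch sits at the interior interface where two reflection-forced growth fronts meet, not at the outer boundary, and the escape contradiction is supplied by strictness, with mismatch-freeness used to force the fatal glue complementarity. This is why the paper obtains the theorem as a two-line corollary of the Theorem~\ref{thm:no-directed} analysis: whichever of $b$ or $b'$ wins the race for $L_3$, its edge against the tile from the opposing front must be a mismatch.
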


\later{
\subsection{Proof of Theorem~\ref{thm:no-mismatch-free}}
Let $S$ denote the shape shown in Figure~\ref{fig:no-directed1}. Then, the proof follows from the proof of Theorem~\ref{fig:no-directed}. Note that in proof of Theorem~\ref{fig:no-directed}, if $b$ is placed at $L_3$, then the west edge of $a'$ and the east edge of $b$ must have different glues. Similarly, if $b'$ is placed at $L_3$, then the east edge of $a$ and the west edge of $b'$ must have different glues. Therefore, any singly seeded \rtam/ system that strictly self-assembles $S$ is not mismatch-free.
}%

\vspace{-20pt}
\subsection{Mismatch-free Assembly of Finite Shapes in the \rtam/}\label{sec:general-shapes}

Given a shape $S$, i.e. a finite connected subset of $\mathbb{Z}^2$, we say that a \emph{graph of $S$} is a graph $G_S = (V,E)$ with a vertex at the center of each point in $S$ and an edge between every pair of vertices at adjacent points of $S$. A \emph{tree of $S$}, $T_S$, is a graph of $S$ which is a tree.  (See Figure~\ref{fig:epsilon-symmetric-example} for examples of $S$, $G_S$, and $G_T$.)  Given a graph $G = (V,E)$, we say that an \emph{axis} of $G$ is a horizontal or vertical line of vertices such that there is an edge between each pair of adjacent points on that line. Notice that two distinct axes can be collinear. Given an axis $a$, an \emph{axial branch} of $T_S$ is a branch of $T_S$ which contains exactly one vertex $v$ on $a$ and all vertices and edges of $T_S$ which are connected to a vertex that does not lie on $a$ and is adjacent to $v$. %
We say that the branch \emph{begins from} $v$.  Intuitively, an axial branch is a connected component extending from an axis.  (See the pink highlighted portion of Figure~\ref{fig:epsilon-symmetric-shape-tree} for an example axial branch off of the axis shown in green.)

\begin{figure}[htp]
\centering
  \subfloat[][Example shape $S$]{%
        \label{fig:epsilon-symmetric-shape}%
        \makebox[1.3in][c]{\includegraphics[width=.75in]{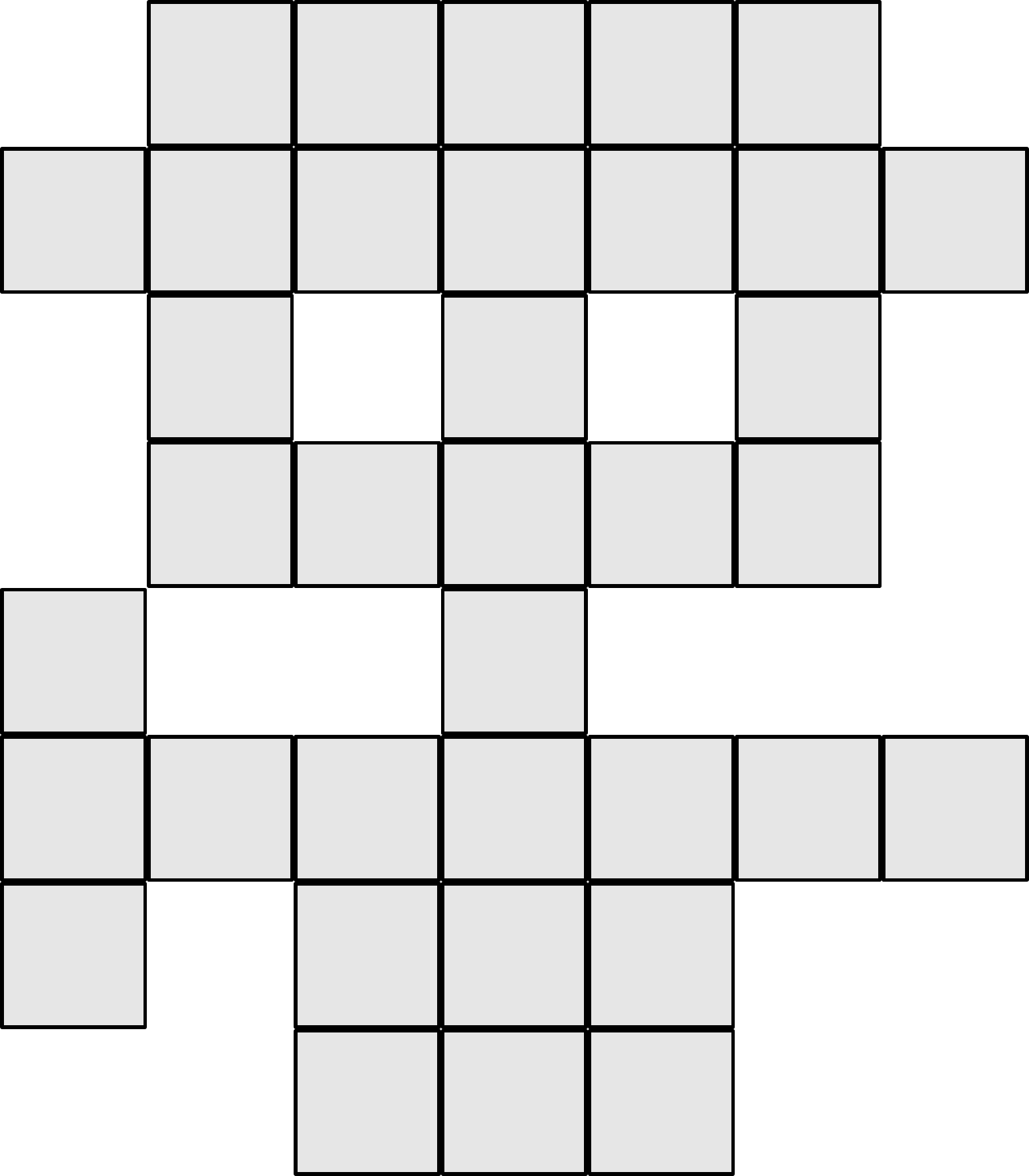}}
        }%
        \quad\quad
  \subfloat[][Graph of $S$, $G_S$]{%
        \label{fig:epsilon-symmetric-shape-graph}%
        \makebox[1.3in][c]{\includegraphics[width=.7in]{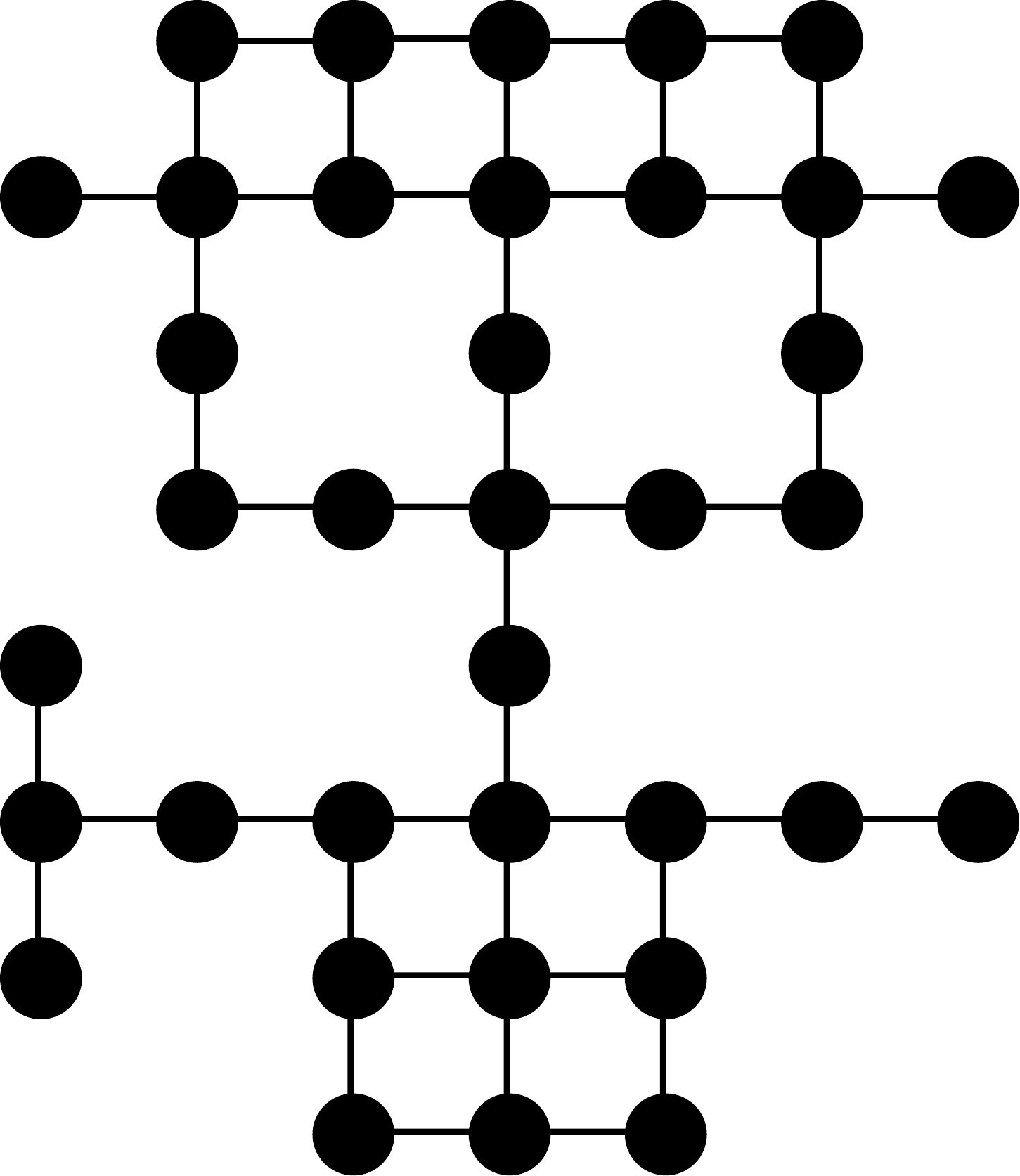}}
        }%
        \quad\quad
  \subfloat[][Tree of $S$, $T_S$]{%
        \label{fig:epsilon-symmetric-shape-tree}%
        \makebox[1.3in][c]{\includegraphics[width=.7in]{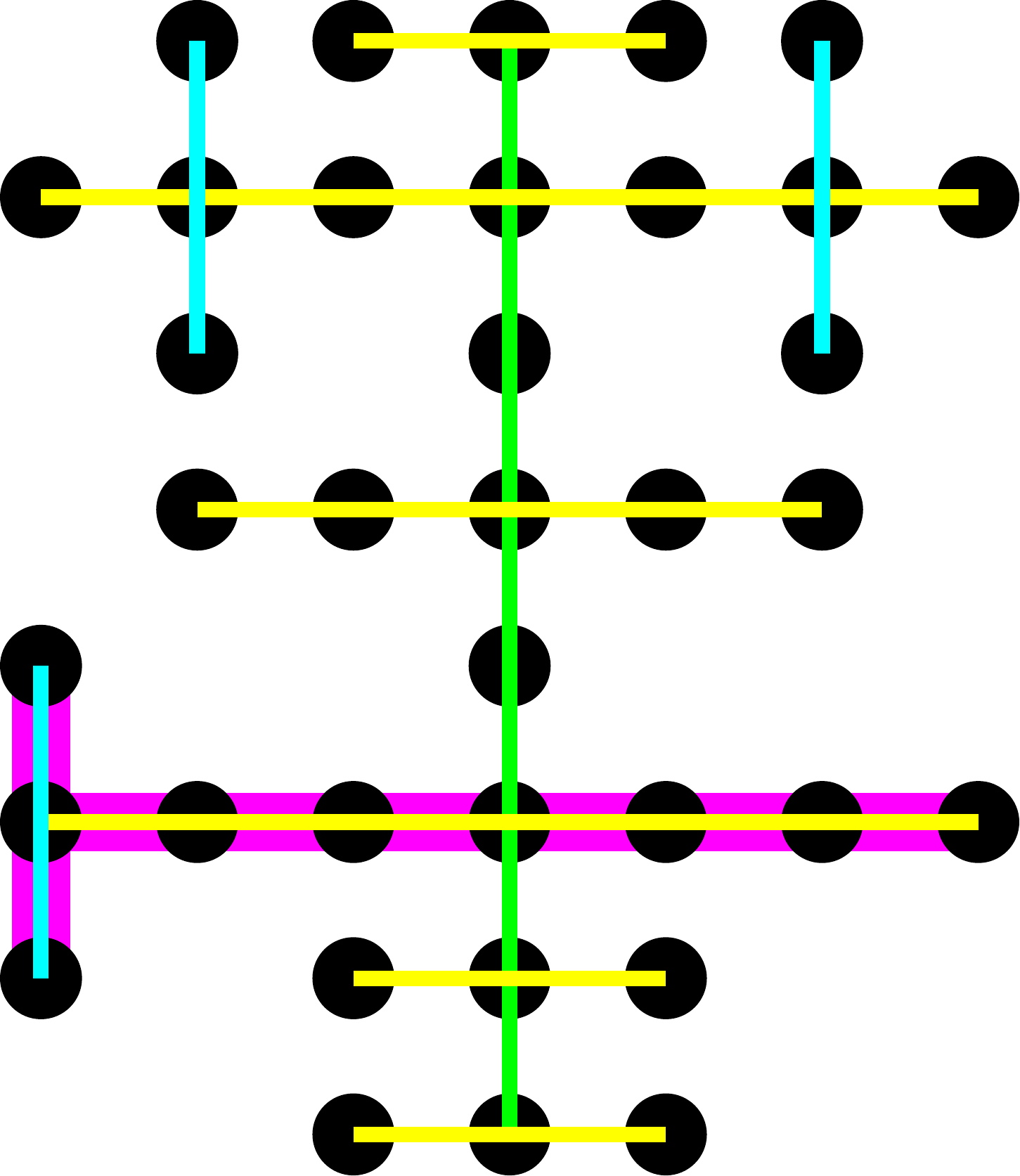}}
        }%
  \caption{An example $\epsilon$-symmetric shape.}
  \label{fig:epsilon-symmetric-example}
\end{figure}

A tree $T_S$ is symmetric across an axis $a$ if, for every vertex $v$ contained on $a$, the branches of $a$ which begin from $v$ are symmetric across $a$.
A tree $T_S$ is \emph{off-by-one symmetric} across an axis $a$ if, for every vertex $v$ except for at most 1, the branches of $a$ which begin from $v$ are symmetric across $a$. See Figure~\ref{fig:epsilon-symmetric-shape-tree} for an example of such a tree, with the axis $a$ shown in green.  %

\begin{definition}\label{def:eps-symm-tree}
A tree $T$ is \emph{$\epsilon$-symmetric} if and only if for any axis $a$ of $T$, $T$ is off-by-one symmetric across $a$.
\end{definition}

\begin{definition}\label{def:eps-symm}
Given a shape $S$ with graph $G_S$, we say that $S$ is \emph{$\epsilon$-symmetric} if and only if there exists a spanning tree, $T_S$, of $G_S$ such that $T_S$ is $\epsilon$-symmetric.
\end{definition}

For an example of an $\epsilon$-symmetric shape $S$, see Figure~\ref{fig:epsilon-symmetric-shape}.  The tree $T_S$ is off-by-one symmetric across the vertical green axis, the branches off of that axis are symmetric across the horizontal yellow axes, and the branches off of those axes are symmetric across the vertical blue axes. The following theorem gives a complete classification of finite connected shapes which can be assembled by temperature-1 singly seeded mismatch-free \rtam/ systems. The proof of this theorem is given in Section~\ref{sec:proof-temp1-shapes}. %

\begin{theorem}\label{thm:temp1-shapes}
Let $S\subset \Z^2$ be a finite connected shape. There exists a mismatch-free \rtam/ system $\mathcal{T} = (T,\sigma, 1)$ with $|\sigma| = 1$ that strictly assembles $S$ if and only if $S$ is $\epsilon$-symmetric.
\end{theorem}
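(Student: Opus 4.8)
The plan is to prove both directions of the biconditional, treating sufficiency (the construction) and necessity (the lower bound) separately. For the \emph{sufficiency} direction ($S$ $\epsilon$-symmetric $\Rightarrow$ $S$ strictly self-assembles mismatch-free), I would generalize the recursive construction behind Theorem~\ref{thm:square-upper}. Fix an $\epsilon$-symmetric spanning tree $T_S$ of $G_S$ together with the nested hierarchy of axes it determines: a main axis, the axial branches extending from it, the axes of those branches, and so on, alternating between horizontal and vertical. I would build the tile set by recursion on this hierarchy, realizing each axis as a hard-coded path of tiles whose glues on the two sides perpendicular to the axis are made identical, exactly as the central column and rows are in Figure~\ref{fig:odd-square-tiles}; this is the key device that lets the \rtam/'s flip-freedom work \emph{for} us rather than against us. Because a tile flipped across an axis presents the same glue on both sides, whenever a symmetric pair of branches hangs off a vertex $v$, a single tile type (grown in one orientation toward one side and in the reflected orientation toward the other) produces both mirror branches at once, and both landing sites lie in $S$ by symmetry. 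The at-most-one asymmetric branch permitted at each axis is placed at that axis's point of attachment to its parent (and, for the main axis, at the seed): there the orientation of the attaching tile is pinned by the parent structure (resp.\ by the fixed seed orientation), so the asymmetric branch can be grown deterministically without its flip producing out-of-shape tiles. Since every abutting pair of tiles uses matching glues and no glue is ever exposed pointing outside $S$, the system is mismatch-free, and one checks that every terminal assembly has domain exactly $S$.

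For the \emph{necessity} direction, suppose $\mathcal{T}=(T,\sigma,1)$ is a mismatch-free singly seeded system that strictly self-assembles $S$, and let $\alpha\in\termasm{\mathcal{T}}$ with $\dom\alpha = S$. Because $\mathcal{T}$ is mismatch-free and temperature-$1$, every pair of adjacent tiles of $\alpha$ either shares no glue or is bound, so I would first extract a spanning tree $T_S$ of the binding graph of $\alpha$; the goal is to show that (for a suitable choice) $T_S$ is $\epsilon$-symmetric. Fix any axis $a$ of $T_S$ and any axial branch beginning from a vertex $v$ on $a$. Using the same path-flipping/``stretching'' maneuver employed in the proofs of Theorems~\ref{thm:square-even} and~\ref{thm:square-lower}, I would reflect the tiles of this branch across $a$; mismatch-freeness guarantees that no mismatch can block the reflected growth, so by temperature-$1$ producibility the reflected branch is again producible in $\mathcal{T}$. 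Since $\mathcal{T}$ strictly self-assembles $S$, the reflected branch must also lie in $S$, forcing the branches beginning from $v$ to be symmetric across $a$. The single permitted exception per axis arises exactly at the vertex whose attaching tile's orientation is forced by the seed/parent connection, so the flip argument does not apply there, yielding off-by-one symmetry across $a$. As $a$ was arbitrary, $T_S$ is $\epsilon$-symmetric, hence so is $S$.

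I expect the necessity direction to be the main obstacle, for two reasons. First, making the flip argument produce precisely \emph{off-by-one} symmetry rather than full symmetry requires a careful analysis of which single vertex on each axis escapes the flip constraint, and an argument that at most one such vertex exists (the one pinned by the parent or the seed); this bookkeeping must be carried out consistently across the whole recursive family of axes. Second, one must verify that the reflected branches obtained for the various axes can be made to coexist within a single spanning tree witnessing Definition~\ref{def:eps-symm}, i.e.\ that the choice of spanning tree and the accumulated symmetry constraints are mutually compatible. The sufficiency direction, by contrast, is a fairly direct generalization of the square construction and should require only routine verification.
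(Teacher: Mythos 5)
Your necessity direction contains the central gap. You argue that, for any vertex $v$ on an axis $a$, reflecting the axial branch at $v$ yields a producible assembly, and that ``since $\mathcal{T}$ strictly self-assembles $S$, the reflected branch must also lie in $S$, forcing the branches beginning from $v$ to be symmetric across $a$.'' This inference is invalid: strict self-assembly in the \rtam/ is defined only up to reflection and translation of the \emph{whole} terminal assembly. If $v$ is the unique vertex on $a$ with asymmetric branches and the rest of the assembly is symmetric across the line of $a$, then the assembly with $v$'s branch reflected is exactly the reflection of $S$ across $a$ --- a perfectly legal terminal assembly --- so nothing is forced at $v$. This is precisely why the right notion is off-by-one symmetry, and the paper obtains the ``at most one'' bound by a different, two-vertex argument: if $T_\alpha$ is not off-by-one symmetric across $a$, there exist \emph{two} vertices $v_1,v_2$ on $a$ with asymmetric branches; reflecting the tile at $v_1$ only (by mismatch-freeness the reflected subassembly either produces a mismatch, a contradiction, or completes), one obtains an assembly in which the two asymmetric branches sit on inconsistent sides of $a$, so its domain is neither $S$ nor any reflection of $S$, contradicting strict assembly. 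Your substitute mechanism for the single exception --- the vertex ``pinned by the seed/parent'' --- is also not correct: the exceptional vertex of an axis need not be its attachment vertex. For instance, if the tree happens to be symmetric across a non-root axis $a$, then the attachment vertex's cross-axis branch mirrors the entire parent (seed) side, so the attachment vertex is \emph{not} exceptional, and a different vertex of $a$ may legally carry the asymmetric branch. This example simultaneously falsifies your claimed conclusion (that every non-pinned vertex is forced to be symmetric) and breaks the assumption built into your sufficiency construction, which places all asymmetric branches at attachment points or at the seed.

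Two further differences are worth noting. First, you ``extract a spanning tree of the binding graph,'' whereas the paper proves the binding graph of any terminal assembly must itself \emph{be} a tree: a cycle could be unrolled by repeating appropriately reflected copies of its tiles, giving infinitely many producible assemblies and contradicting strict assembly of the finite $S$. Acyclicity is not optional for your argument either, since a branch must meet the rest of the assembly at a single vertex for its reflection to be independently producible and for the shape comparison to be meaningful. Second, on sufficiency, your axis-recursive generalization of Theorem~\ref{thm:square-upper} (identical glues on the two sides perpendicular to each axis) is a genuinely different construction from the paper's, which hard-codes one unique tile type per point of $S$ so that the binding graph equals the chosen $\epsilon$-symmetric spanning tree $T_S$, and then argues that any reflection of a tile either reproduces $S$ or produces a global mirror image of $S$, both of which are permitted; the paper's route avoids your recursive bookkeeping and, unlike yours, handles the edge case described above.
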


\later{
\section{Proof of Theorem~\ref{thm:temp1-shapes}}\label{sec:proof-temp1-shapes}

First, we show that if $S$ is $\epsilon$-symmetric, then there is a singly seeded mismatch-free \rtam/ system that strictly assembles $S$.
Let $T_S$ be a tree for $S$ which satisfies Definition~\ref{def:eps-symm}. We use $T_S$ to define a tile set $T$ and give a seed tile $\sigma$ such that the system $\mathcal{T}= (T,\sigma, 1)$ strictly self-assembles $S$, and then we argue that for any terminal assembly $\alpha$ in $\mathcal{T}$, the domain of $\alpha$ is equal to $S$ (up to translation and reflection). $T$ and $\sigma$ are obtained as follows.

First, notice that we can give a temperature-1 aTAM system $\mathcal{T}' = (T', \sigma', 1)$ that assembles the shape $S$ with the properties
(1) $\mathcal{T}'$ is a singly seeded directed system,
(2) for $\alpha' \in \prodasm{\mathcal{T}'}$, the binding graph of $\alpha'$ is exactly the tree $T_S$, and
(3) when the seed tile $\sigma'$ is the single tile located at some point $v$, then for each point $p$ of $S$, there is a unique tile $t'$ in $T'$ such that $\alpha(p) = t'$.
More intuitively, when the seed tile of $\mathcal{T'}$ is placed at $v$, then assembly proceeds by the binding of unique tiles equipped with glues specific to each point of $S$.

Now, we define the tiles of $T$ to be copies of the tiles in $T'$ and define $\sigma'$ to be $\sigma$. Notice that in the \rtam/, if we assume that no tiles are reflected before binding as assembly proceeds, then the terminal assembly, $\alpha$, that results has a domain equal to $S$. Furthermore, since the binding graph of $\alpha$ is $T_S$, a reflected tile, $t$, of $\alpha$ lies on some axis, $a_t$ say, and corresponds to some vertex $v_t$ of that axis.
Then, since $T_S$ is $\epsilon$-symmetric and appropriately reflected tiles of $T$ can still bind to $t$ even when $t$ is reflected prior to binding, the domain of a terminal assembly of $\mathcal{T}$ will either (1) be equal to $S$ (This is the case that the axial branches beginning from $v_t$ are equivalent upto reflection about the axis $a_t$. See Figure~\ref{fig:epsilon-symmetric-assembly2}.), or (2) be equal to a reflection of $S$ about the line corresponding to the axis $a_0$ (This is the special case that the axial branches beginning from $v_t$ are not equivalent after a reflection about $a_t$. See Figure~\ref{fig:epsilon-symmetric-assembly3}.). In either case, it follows that $S$ strictly assembles in $\mathcal{T}$.

\begin{figure}[htp]
\centering
  \subfloat[][]{%
        \label{fig:epsilon-symmetric-assembly1}%
        \makebox[1.3in][c]{\includegraphics[width=1.15in]{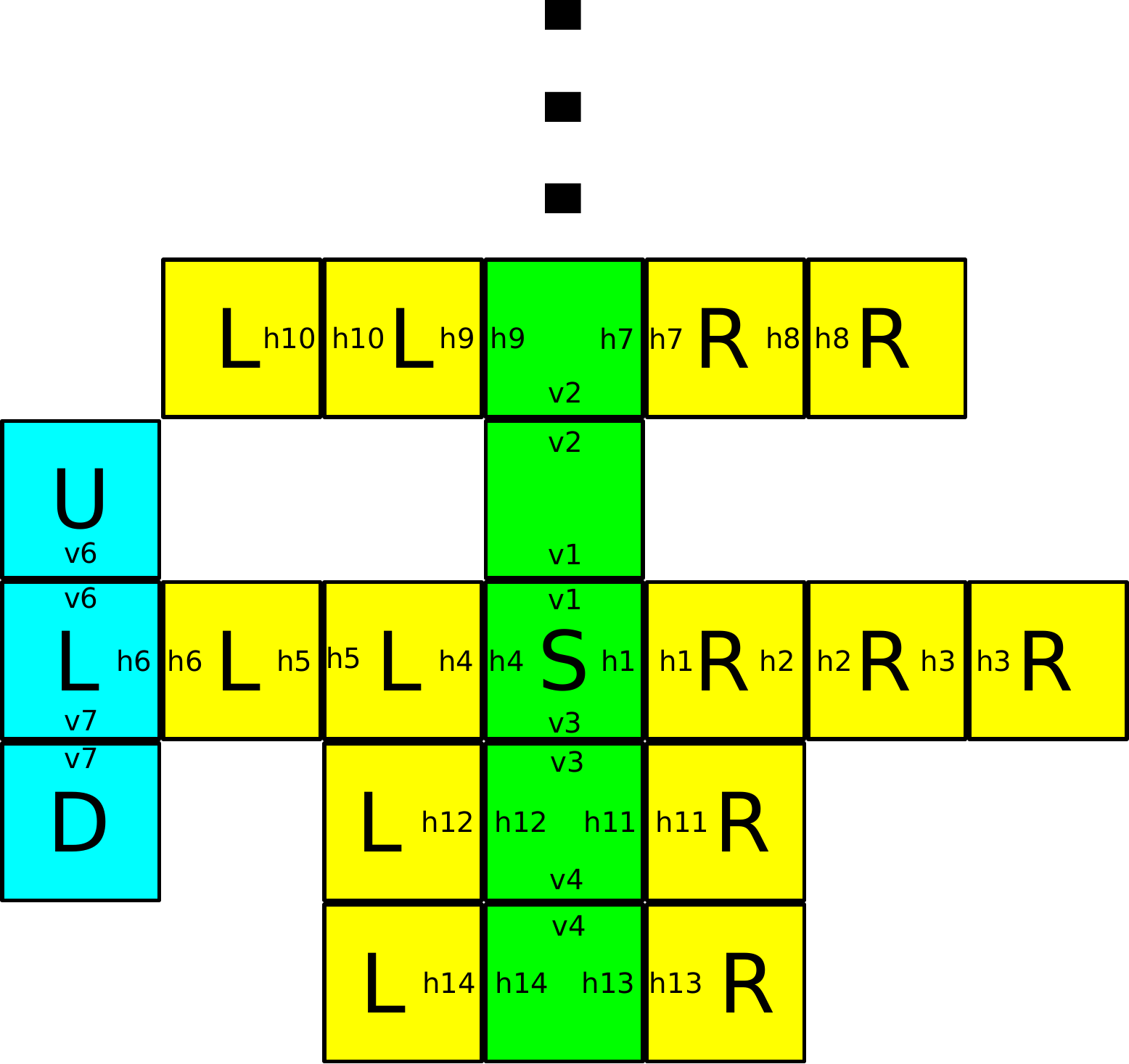}}
        }%
        \quad\quad
  \subfloat[][]{%
        \label{fig:epsilon-symmetric-assembly2}%
        \makebox[1.3in][c]{\includegraphics[width=1.1in]{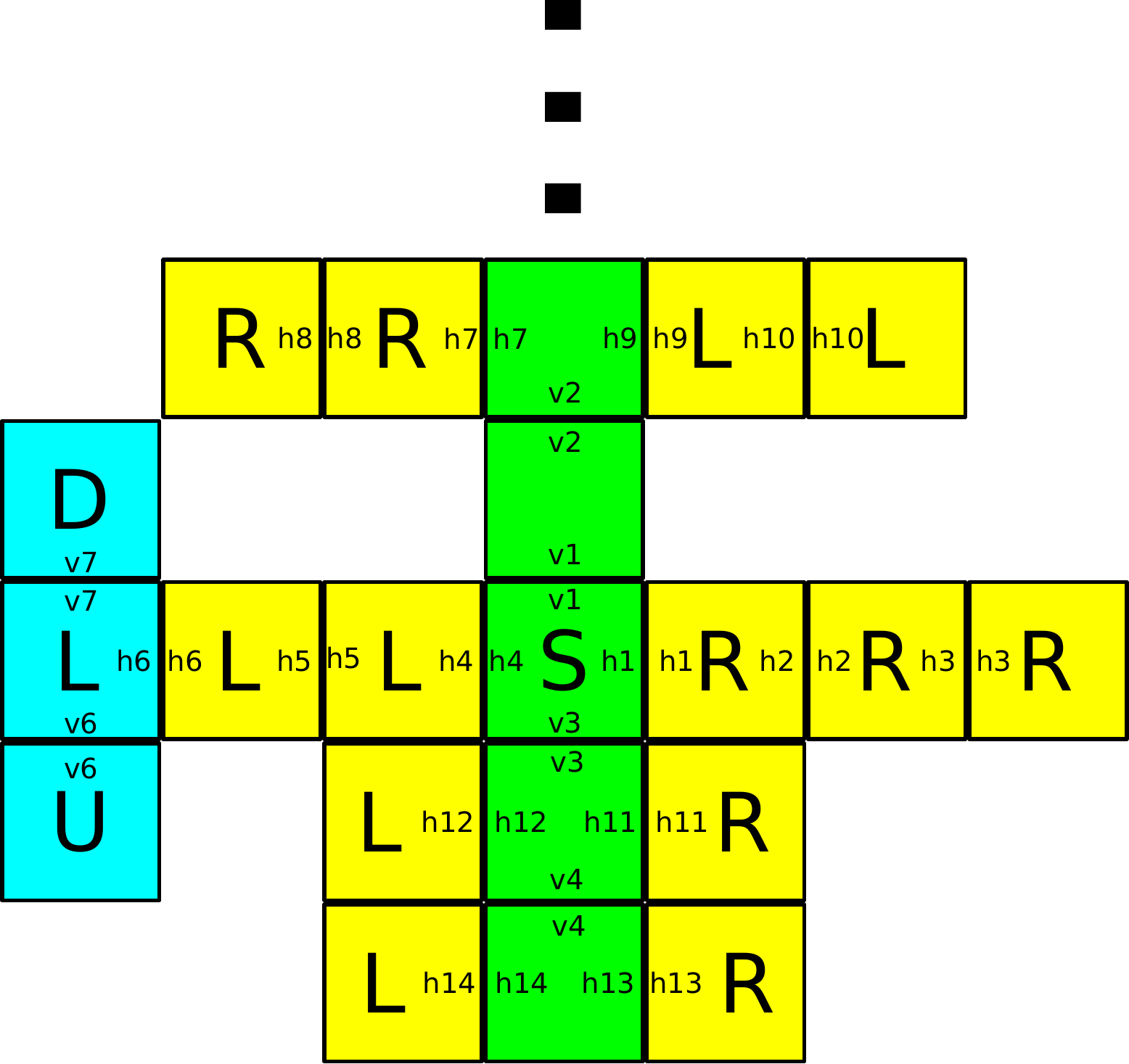}}
        }%
        \quad\quad
  \subfloat[][]{%
        \label{fig:epsilon-symmetric-assembly3}%
        \makebox[1.3in][c]{\includegraphics[width=1.1in]{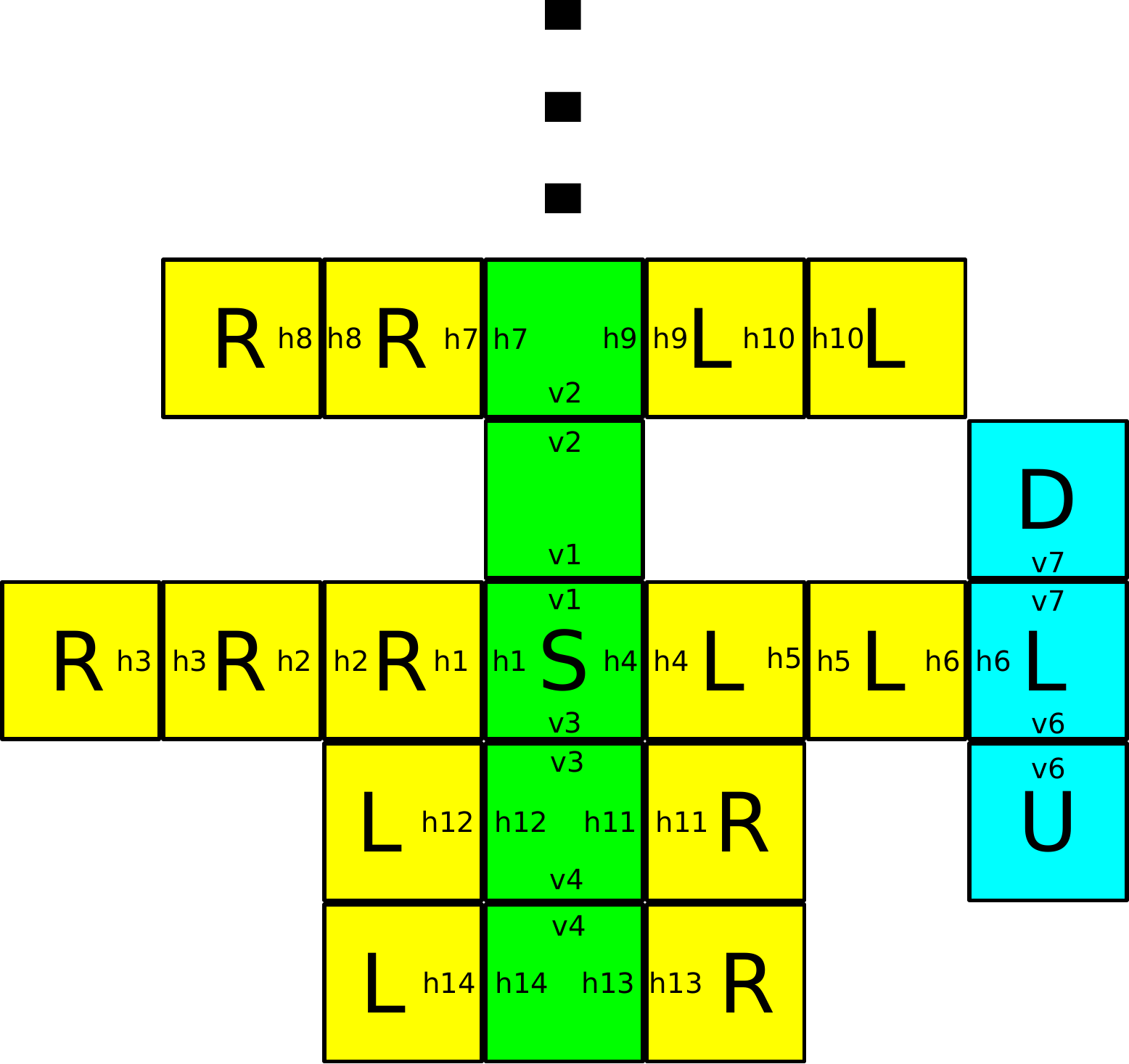}}
        }%
  \caption{An example of the assembly of a portion of an $\epsilon$-symmetric shape which assembles in $\mathcal{T}$. Relative to Figure (a), Figures (b) and (c) give two different producible assemblies of $\mathcal{T}$. }
  \label{fig:epsilon-symmetric-assembly}
\end{figure}

Now we show that if there is a singly seeded mismatch-free \rtam/ system that strictly assembles $S$, then $S$ is $\epsilon$-symmetric. Let $\mathcal{T} = (T, \sigma, 1)$ be a singly seeded mismatch-free \rtam/ system that strictly self-assembles $S$, and let $\alpha \in \termasm{T}$ be some terminal assembly of $\mathcal{T}$.

First, we note that the binding graph of $\alpha$ must be a tree, which we denote by $T_\alpha$.
This follows from the fact that if the binding graph contains a cycle, then there are infinitely many producible assemblies in $\mathcal{T}$. See Figure~\ref{fig:pump-cycle} for an example. Moreover,
$T_\alpha$ must be $\epsilon$-symmetric. This can be shown by contradiction as follows.

\begin{figure}[htp]
\centering
  \subfloat[][]{%
        \label{fig:pump-cycle1}%
        \makebox[1.3in][c]{\includegraphics[scale=.2]{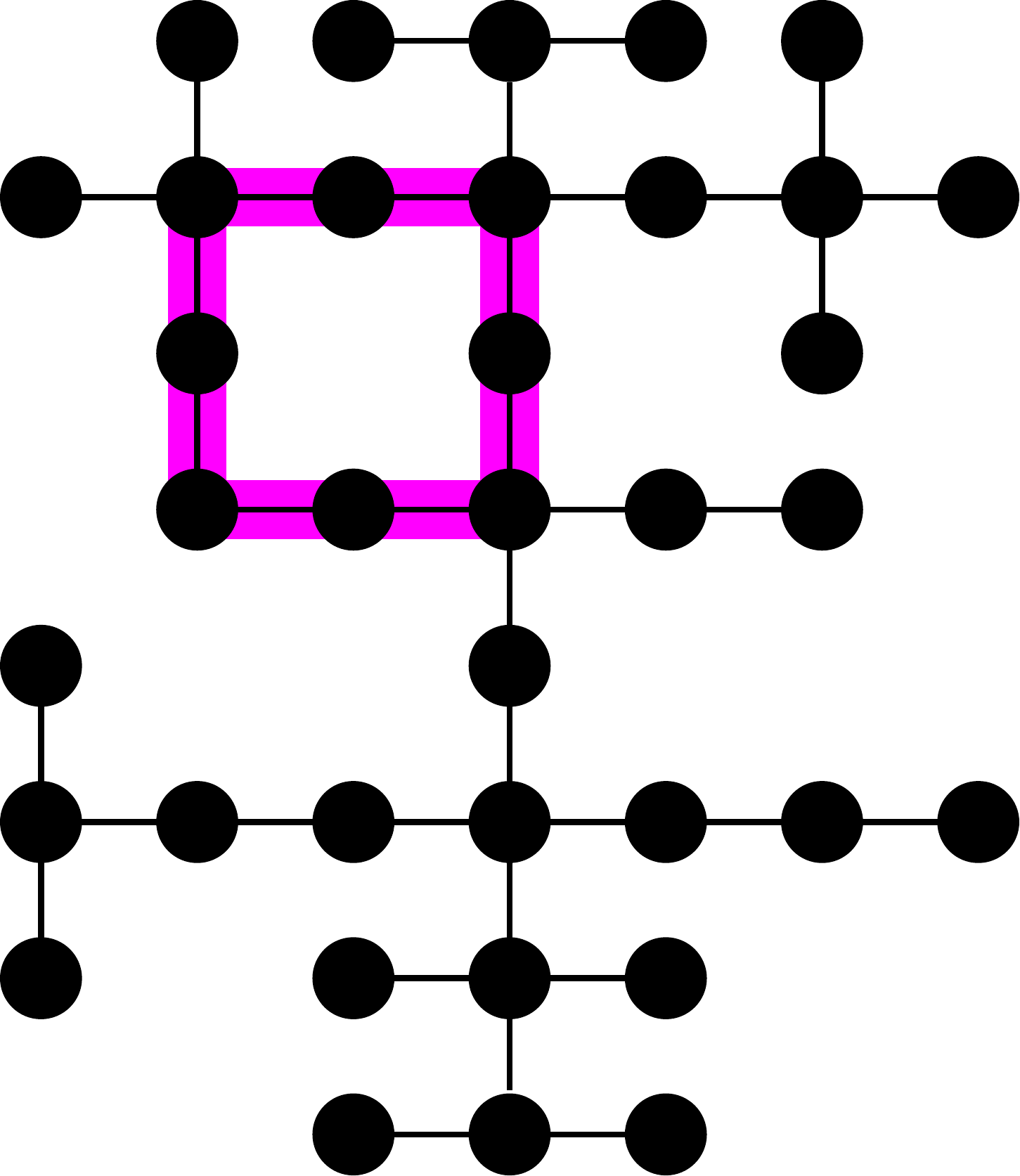}}
        }%
        \quad\quad
  \subfloat[][]{%
        \label{fig:pump-cycle2}%
        \makebox[1.3in][c]{\includegraphics[scale=.2]{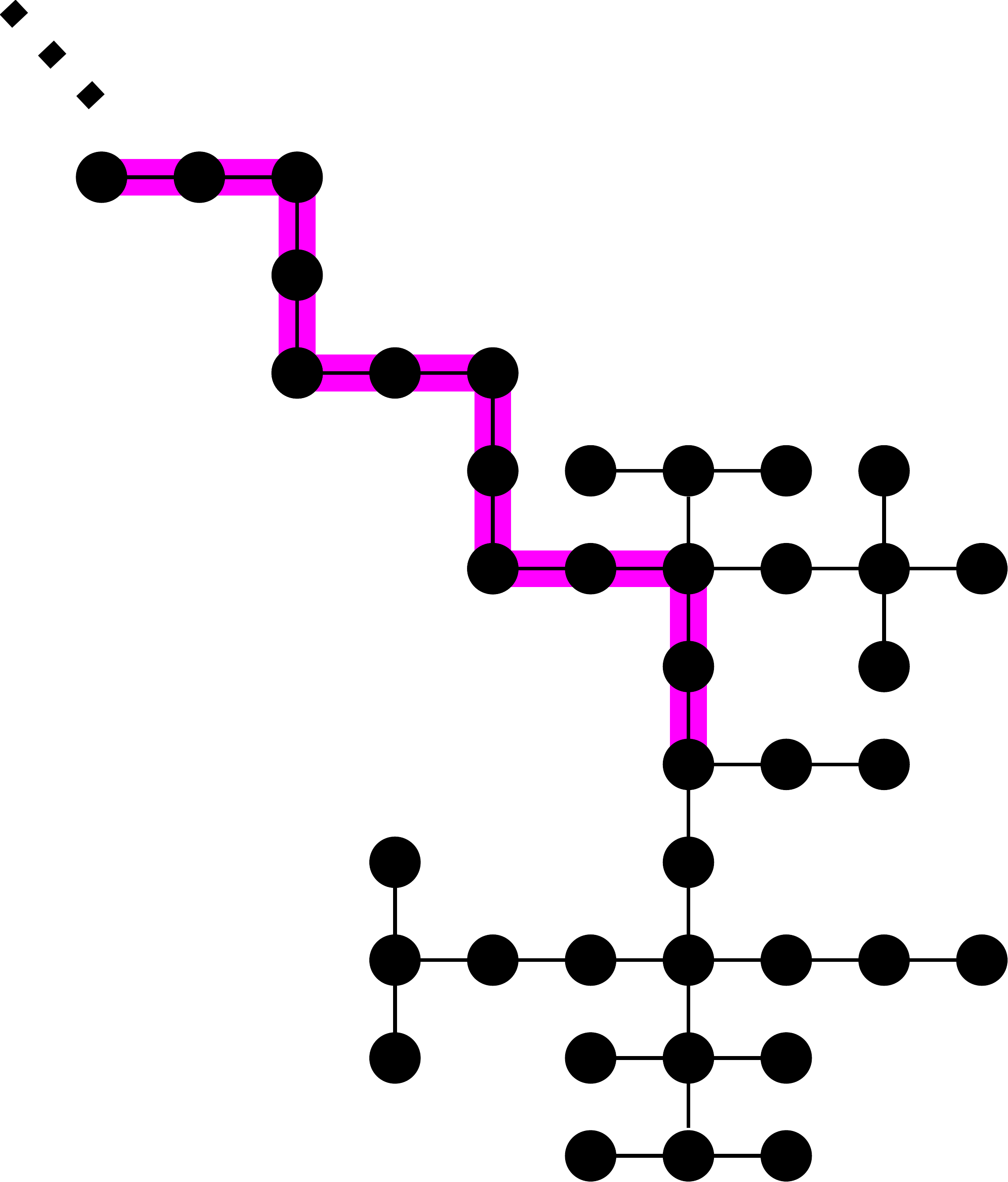}}
        }%
  \caption{(a) An example of a binding graph that contains a cycle. (b) An example of a binding graph of an assembly where tiles with type of those tiles belonging to the cycle shown in (a) are repeated an arbitrary number of times using appropriately reflected tiles.}
  \label{fig:pump-cycle}
\end{figure}

Suppose that $T_\alpha$ is not $\epsilon$-symmetric.
Then, there exists an axis $a$ of $T_\alpha$ such that $T_\alpha$ is not off-by-one symmetric across $a$.
In other words, there are at least two vertices, which we denote by $v_1$ and $v_2$, of $T_\alpha$ which lie on $a$ such that the branches of $a$ which begin from $v_1$ are not symmetric across $a$ and the branches of $a$ which begin from $v_2$ are not symmetric across $a$.
Now, we can modify the assembly sequence of $\alpha$ by reflecting the tile, $t$, located at $v_1$ prior to binding, allowing reflected tiles to assemble reflections of the subassemblies which are bound to $t$, and keeping the original assembly sequence otherwise. Now, as tiles of the reflected subassemblies originating from $t$ bind, either a mismatch occurs or the reflected subassemblies completely assemble.
If a mismatch occurs, we arrive at a contradiction since $\mathcal{T}$ is mismatch-free.
If the reflected subassemblies completely assemble, then we note that no translation and reflection of $\dom(\beta)$ is equal to $S$.  (This is because we essentially chose a sequence where the two asymmetric branches across $a$ adopted reflections across $a$ such that the sides of each which would be on the same side of $a$ in $S$ are now on different sides of $a$.) This contradicts the assumption that $\mathcal{T}$ strictly assemblies $S$.
Therefore, $T_\alpha$ is $\epsilon$-symmetric. Hence $S$ is $\epsilon$-symmetric.

\begin{figure}[htp]
\centering
  \subfloat[][]{%
        \label{fig:not-epsilon-symmetric-assembly1}%
        \makebox[1.3in][c]{\includegraphics[width=1.15in]{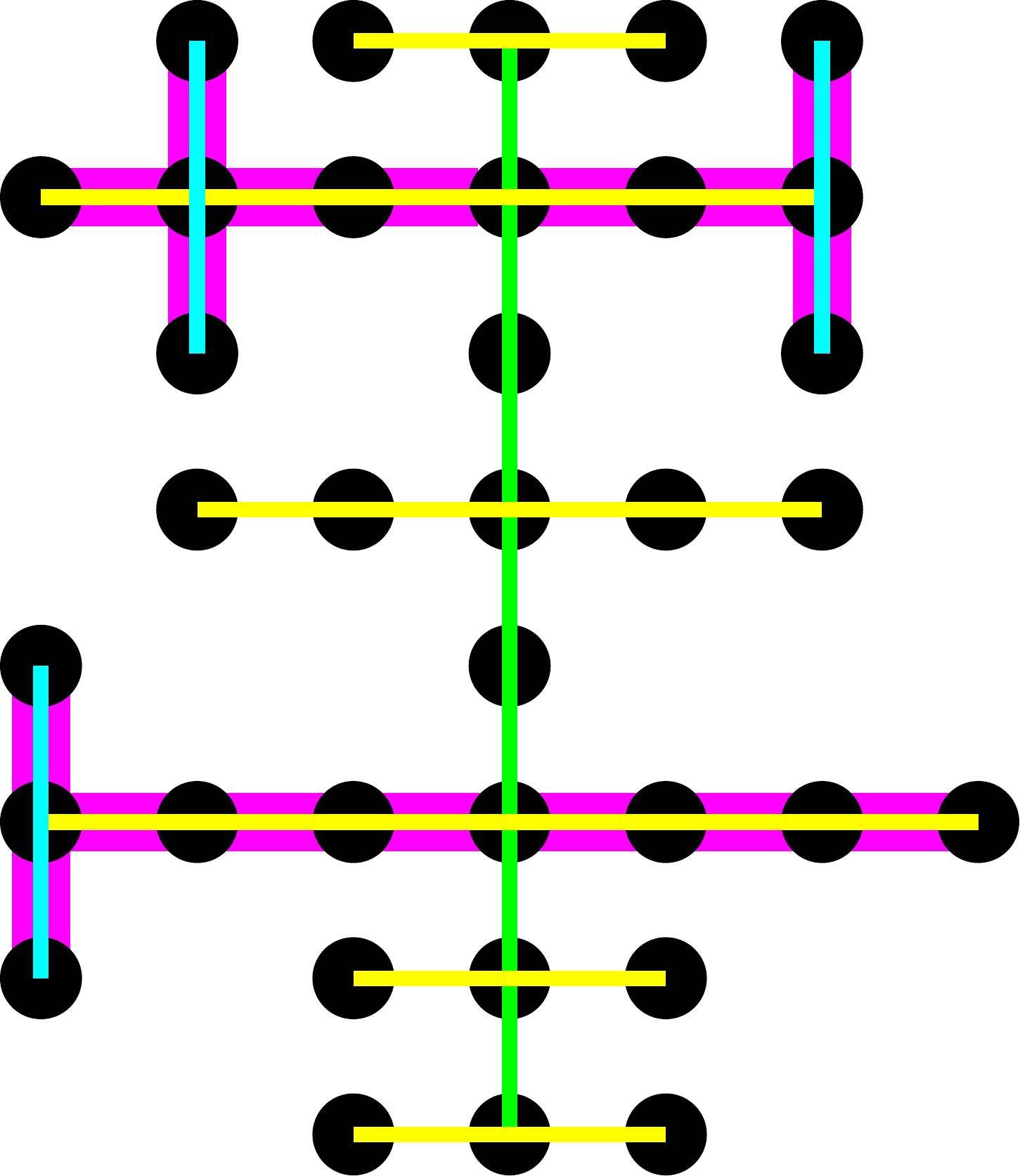}}
        }%
        \quad\quad
  \subfloat[][]{%
        \label{fig:not-epsilon-symmetric-assembly2}%
        \makebox[1.3in][c]{\includegraphics[width=1.1in]{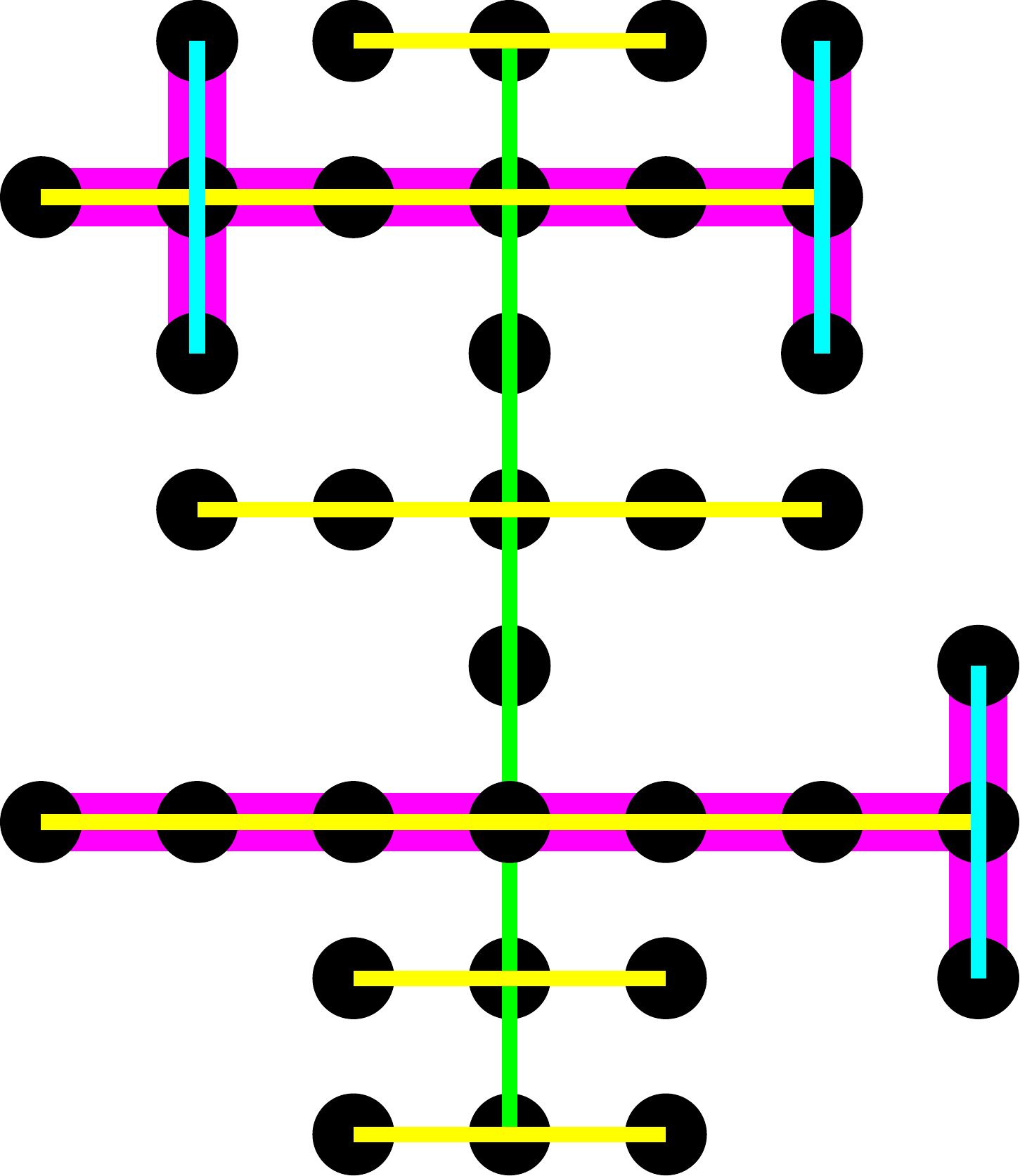}}
        }%
  \caption{An example of possible binding graphs of a shape that is not $\epsilon$-symmetric. (a) and (b) depict binding graphs of two distinct assemblies.}
  \label{fig:not-epsilon-symmetric-assembly}
\end{figure}

}%

While Theorem~\ref{thm:temp1-shapes} shows exactly which shapes can be assembled without cooperation or mismatches by singly seeded \rtam/ systems, the following theorem shows that with cooperation, \rtam/ systems can assemble arbitrary scale factor $2$ shapes. The proof of this theorem is in Section~\ref{sec:proof-temp2-shapes}.

\begin{theorem}\label{thm:temp2-shapes}
Let $S\subset \Z^2$ be a finite connected shape, and $S^2$ be $S$ at scale factor $2$. There exists a mismatch-free \rtam/ system $\mathcal{T} = (T,\sigma, 2)$ with $|\sigma| = 1$ that strictly self-assembles $S^2$.
\end{theorem}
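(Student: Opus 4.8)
The plan is to adapt the standard scale-$2$ shape-building construction from the temperature-$2$ aTAM and to make it robust against the flip-freedom of the \rtam/. First I would fix a spanning tree $T_S$ of the graph $G_S$ of $S$ and root it at an arbitrary vertex. At scale factor $2$ every point of $S$ becomes a $2\times 2$ block of four cells, and every tree edge becomes a width-$2$ interface between two adjacent blocks. This width-$2$ ``corridor'' is exactly what is needed to route a \emph{single} simple, space-filling path $P$ through all of $\dom(S^2)$: a depth-first traversal of $T_S$ visits each edge twice (down and back), and the two lanes of each interface let me realize the ``down'' and ``return'' portions of the traversal as disjoint segments of $P$, while the four cells of each block give enough room to turn the path around at leaves and to branch at internal vertices. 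Moreover, the freedom in choosing $P$ within the width-$2$ region lets me arrange that every cell beyond the initial block abuts at least one previously placed cell in addition to its path-predecessor. Thus $P$ is a Hamiltonian path on the grid graph of $S^2$ along which cooperation is always geometrically available.

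Next I would hardcode the path. Place the seed (a single tile) at the start cell of $P$, and create one tile type per cell of $S^2$ so that, in the default unreflected orientation, the tiles of $P$ bind in order from the seed. To exploit $\tau = 2$ and defeat spurious flipped attachments, I would design the glues so that every non-seed tile attaches only \emph{cooperatively}: each new tile must match two glues at once---its path predecessor and a perpendicular neighbor already present---and the four tiles of the initial seed block are made mutually cooperative so that this block too is pinned. Because a half-completed $2\times 2$ block of four distinct, glue-asymmetric tiles can be completed in only one way, the cooperative requirement forces both the position and the orientation of each newly added tile given the orientations of its neighbors. Consequently the whole assembly is \emph{rigid}: once the orientation of the first tile binding to the seed is fixed, every subsequent placement is forced, and the terminal assembly is $S^2$ up to a single global reflection and translation---which is exactly what strict self-assembly in the \rtam/ permits. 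Since no reflected copy of any tile has two matching glues at a location off $P$, nothing grows outside $\dom(S^2)$, so $\mathcal{T}$ strictly self-assembles $S^2$.

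Finally, mismatch-freeness I would obtain by construction: give every adjacency along $P$ (and within each block) its own complementary glue pair and leave all remaining cell edges glueless, so that at any abutment the two edges either both carry the unique matching pair or carry no glue at all---never two present-but-incompatible glues. Since every placement is forced, the only abutments that ever occur are the intended matching ones, and the system is mismatch-free. The hard part will be the rigidity argument: proving that the flip-freedom of the \rtam/ cannot break the construction. Concretely, I must show (i) that no flipped tile can attach cooperatively (strength $\geq 2$) at any unintended location, and (ii) that the global reflection choice stays \emph{consistent}, i.e. there is no producible assembly with a ``seam'' in which one region was built in one reflection and an adjacent region in the mirror reflection. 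Both facts should follow from the $2\times 2$ block structure, since each width-$2$ interface carries an \emph{ordered} pair of distinct glues whose order reverses under a flip; this ordering propagates a single coherent global orientation across the entire assembly and simultaneously prevents any flipped tile from presenting the matching ordered pair where it should not.
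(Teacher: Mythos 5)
Your overall architecture (spanning tree of $S$, width-$2$ corridors, $\tau=2$ cooperation to pin orientations) is close in spirit to the paper's, but your proof hinges on a geometric claim that is false: that the Hamiltonian path $P$ can be ordered so that \emph{every} non-seed tile attaches cooperatively, i.e.\ abuts two previously placed tiles at the moment it is placed. A simple edge count rules this out. If $S$ has $k$ points and $e$ adjacencies, then $S^2$ has $n=4k$ cells and exactly $4k+2e$ grid-graph edges ($4$ internal edges per block, $2$ per adjacency of $S$); an assembly order in which all cells beyond the initial block touch at least two already-placed cells requires at least $2n-O(1)=8k-O(1)$ edges, hence $e\ge 2k-O(1)$. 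This fails for every tree-like $S$ (where $e=k-1$), already for a domino: there $S^2$ is a $4\times 2$ rectangle with $10$ edges, while all-cooperative placement would need at least $11$. Concretely, in the contour/DFS walk you describe, every cell on the ``outbound'' lane of a corridor, and in particular the first cell entering each new block, has exactly one placed neighbor when it is placed; only the ``return'' lane enjoys cooperation. The same problem occurs at the very start: the second tile of the seed block can only touch the seed, so it cannot bind cooperatively either. Consequently some tiles \emph{must} attach via a single strength-$2$ glue, and any such tile has an unavoidable flip ambiguity in the \rtam/ (a glue on a west edge remains a west glue under a vertical flip, and glues carry no orientation along an edge). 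Your rigidity and mismatch-freeness arguments both presuppose that every placement is forced by two glues, so they collapse once these single-glue attachments are admitted; the two facts (i) and (ii) you defer are precisely the ones that now need a proof.

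This is exactly the difficulty the paper's proof is organized around, and it resolves it in the opposite way: it does \emph{not} attempt to make every attachment cooperative. Growth proceeds block-by-block along a spanning tree of $S$ (one tile type per cell, as in your plan), but the \emph{first} tile of each $2\times 2$ block attaches by a single strength-$2$ glue, with its flip ambiguity deliberately tolerated --- the block gadget is designed so that either orientation of that tile leads to the same completed $2\times 2$ block --- and only the remaining tiles of the block attach cooperatively, which is what fixes their orientations and the placement and orientation of the block's output glues toward the child blocks. To repair your argument you would have to identify which cells of $P$ attach with a single strength-$2$ glue and prove that their reflection ambiguity cannot change the footprint of the assembly or create a mismatch; carrying that out essentially reconstructs the paper's per-block gadget, so the Hamiltonian-path packaging buys you nothing over the block-by-block construction.
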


\later{
\section{Proof of Theorem~\ref{thm:temp2-shapes}}\label{sec:proof-temp2-shapes}
Let $S$ be a finite connected shape in $\Z^2$ and $S^2$ be $S$ at scale factor $2$. Note that we can give a temperature-1 aTAM system $\mathcal{T}' = (T', \sigma', 1)$ that assembles the shape $S$ with the properties
(1) $\mathcal{T}'$ is a singly seeded directed system,
(2) for $\alpha'$ in $\termasm{T}'$, the binding graph of $\alpha'$ is a tree $T_{\alpha'}$,
(3) the location of the seed tile $\sigma'$ corresponds to a leaf of $T_{\alpha'}$, and
(4) for each point $p$ of $S$, there is a unique tile $t'$ in $T'$ such that $\alpha(p) = t'$.
For an example of such an assembly $\alpha'$ see Figure~\ref{fig:temp2-shape-example1}.

\begin{figure}[htp]
\centering
        \includegraphics[scale=.55]{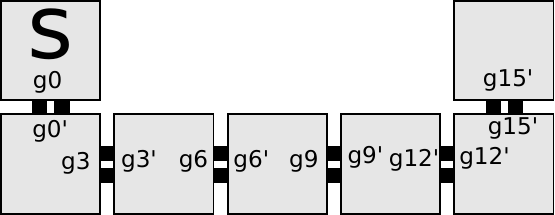}
  \caption{An example terminal assembly giving a shape assembled by an aTAM system $\mathcal{T}'$.}
  \label{fig:temp2-shape-example1}
\end{figure}

We now give an \rtam/ system $\mathcal{T}$ based on $\mathcal{T}'$ such that a terminal assembly $\alpha$ of $\mathcal{T}$ can be obtained from $\alpha'$ by replacing single tiles of $\alpha'$ by $2\times 2$ blocks of tiles, and thus assembles $S^2$. aTAM tiles and corresponding \rtam/ tiles that give rise to this block replacement scheme are described in Figure~\ref{fig:temp2-RTAM}.

\begin{figure}[htp]
\centering
  \subfloat[][]{%
        \label{fig:temp2-aTAM-seed}%
        \makebox[1.3in][c]{\includegraphics[scale=1.0]{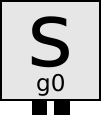}}
        }%
        \quad\quad
  \subfloat[][]{%
        \label{fig:temp2-RTAM-seed}%
        \makebox[1.3in][c]{\includegraphics[scale=1.0]{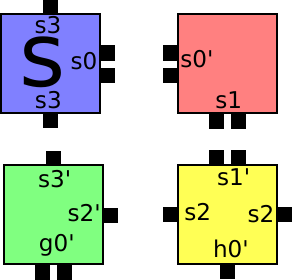}}
        }%
        \\\bigskip
  \subfloat[][]{%
        \label{fig:temp2-aTAM-tile}%
        \makebox[1.3in][c]{\includegraphics[scale=1.0]{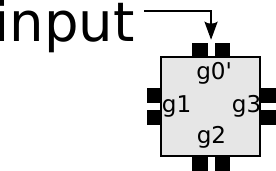}}
        }%
        \quad\quad
  \subfloat[][]{%
        \label{fig:temp2-RTAM-tile}%
        \makebox[1.3in][c]{\includegraphics[scale=1.0]{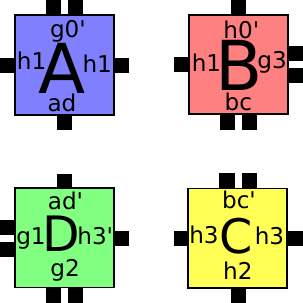}}
        }%
  \caption{(a) An example seed tile of the aTAM system $\mathcal{T}'$. (b) Tiles that represent the seed tile of (a) used to define the \rtam/ system $\mathcal{T}$. The seed of $\mathcal{T}$ consists of only the tile labeled $S$. (c) An example tile of the aTAM system $\mathcal{T}'$ that attaches via the glue $g0'$. Note that it may be the case that not all glues $g_1$, $g_2$, and $g_3$ shown here are exposed. (d) Tiles that represent the tile of (c) used to define the \rtam/ system $\mathcal{T}$. Notice that tiles bind in the order $A$, $B$, $C$ and $D$. The tile shown in (c) has ``output'' glues $g1$, $g2$, and $g3$ exposed. In the case where one or more of these glues is not exposed by a tile, we obtain tiles analogous to those in (d) that represent this single tile as follows. The glues $g1$, $g2$, and $g3$ of the tiles of (d) are exposed if and only if the respective glues of the tile in (c) are exposed. For example, if the tile in (c) exposes a $g1$ glue, then so will the tile $D$ of (d). Otherwise, $D$ will not expose a $g1$ glue. Note that the cooperative binding of $B$ fixes the orientation of $B$. Similarly, the cooperative binding of $D$ fixes the orientation of the tile $D$.}
  \label{fig:temp2-RTAM}
\end{figure}

As $2\times 2$ blocks assemble, cooperation is used to ensure that the orientations of the tiles of a $2\times 2$ block are fixed relative to the orientation of the seed. Figure~\ref{fig:temp2-shape-example2} depicts a terminal assembly of the \rtam/ system $\mathcal{T}$ based on the aTAM system $\mathcal{T}'$ whose terminal assembly is shown in Figure~\ref{fig:temp2-shape-example1}.

\begin{figure}[htp]
\centering
        \includegraphics[scale=1.0]{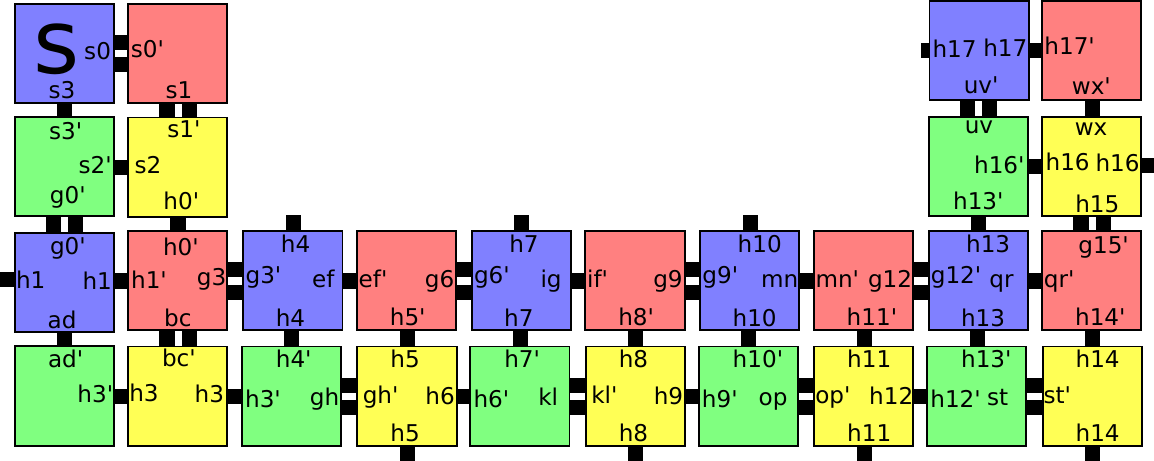}
  \caption{An example terminal assembly giving a shape assembled by an \rtam/ system $\mathcal{T}$ based on the aTAM system $\mathcal{T}'$ whose terminal assembly is shown in Figure~\ref{fig:temp2-shape-example1}.}
  \label{fig:temp2-shape-example2}
\end{figure}

}%

\ifabstract
\later{

}
\fi
\iffull

\fi
\renewcommand\refname{\vspace{-45pt}\section*{References\vspace{-20pt}}}
\bibliographystyle{plain}%
{\bibliography{tam,experimental_refs,ca}}

\begin{thebibliography}{10}

\bibitem{OrigamiSeed}
Robert~D. Barish, Rebecca Schulman, Paul W.~K. Rothemund, and Erik Winfree.
\newblock {An information-bearing seed for nucleating algorithmic
  self-assembly}.
\newblock {\em Proceedings of the National Academy of Sciences},
  106(15):6054--6059, April 2009.

\bibitem{CookFuSch11}
Matthew Cook, Yunhui Fu, and Robert~T. Schweller.
\newblock Temperature 1 self-assembly: Deterministic assembly in 3{D} and
  probabilistic assembly in 2{D}.
\newblock In {\em SODA 2011: Proceedings of the 22nd Annual ACM-SIAM Symposium
  on Discrete Algorithms}. SIAM, 2011.

\bibitem{C2CC37227D}
Cristina Costa~Santini, Jonathan Bath, Andy~M. Tyrrell, and Andrew~J.
  Turberfield.
\newblock A clocked finite state machine built from dna.
\newblock {\em Chem. Commun.}, 49:237--239, 2013.

\bibitem{OneTile}
E.~D. Demaine, M.~L. Demaine, S.~P. Fekete, M.~J. Patitz, R.~T. Schweller,
  A.~Winslow, and D.~Woods.
\newblock One tile to rule them all: Simulating any tile assembly system with a
  single universal tile.
\newblock In J.~Esparza, P.~Fraigniaud, T.~Husfeldt, and E.~Koutsoupias,
  editors, {\em Proceedings of the 41st International Colloquium on Automata,
  Languages, and Programming (ICALP 2014), {\rm IT University of Copenhagen,
  Denmark, July 8-11, 2014}}, volume 8572 of {\em LNCS}, pages 368--379.
  Springer Berlin Heidelberg, 2014.

\bibitem{DotCACM}
David Doty.
\newblock Theory of algorithmic self-assembly.
\newblock {\em Commun. ACM}, 55(12):78--88, December 2012.

\bibitem{DotKarMasNegativeJournal}
David Doty, Lila Kari, and Beno\^{\i}t Masson.
\newblock Negative interactions in irreversible self-assembly.
\newblock {\em Algorithmica}, 66(1):153--172, 2013.

\bibitem{jLSAT1}
David Doty, Matthew~J. Patitz, and Scott~M. Summers.
\newblock Limitations of self-assembly at temperature 1.
\newblock {\em Theoretical Computer Science}, 412:145--158, 2011.

\bibitem{Polyominoes}
S{\'a}ndor~P. Fekete, Jacob Hendricks, Matthew~J. Patitz, Trent~A. Rogers, and
  Robert~T. Schweller.
\newblock Universal computation with arbitrary polyomino tiles in
  non-cooperative self-assembly.
\newblock In {\em Proceedings of the Twenty-Sixth Annual ACM-SIAM Symposium on
  Discrete Algorithms (SODA 2015), San Diego, CA, USA {\rm January 4-6, 2015}},
  pages 148--167.

\bibitem{GeoTiles}
Bin Fu, Matthew~J. Patitz, Robert~T. Schweller, and Robert Sheline.
\newblock Self-assembly with geometric tiles.
\newblock In {\em Proceedings of the 39th International Colloquium on Automata,
  Languages and Programming}, ICALP, pages 714--725, 2012.

\bibitem{han2013dna}
Dongran Han, Suchetan Pal, Yang Yang, Shuoxing Jiang, Jeanette Nangreave, Yan
  Liu, and Hao Yan.
\newblock Dna gridiron nanostructures based on four-arm junctions.
\newblock {\em Science}, 339(6126):1412--1415, 2013.

\bibitem{ke2012three}
Yonggang Ke, Luvena~L Ong, William~M Shih, and Peng Yin.
\newblock Three-dimensional structures self-assembled from dna bricks.
\newblock {\em Science}, 338(6111):1177--1183, 2012.

\bibitem{NBlocks}
Jin-Woo Kim, Jeong-Hwan Kim, and Russell Deaton.
\newblock Dna-linked nanoparticle building blocks for programmable matter.
\newblock {\em Angewandte Chemie International Edition}, 50(39):9185--9190,
  2011.

\bibitem{jSSADST}
James~I. Lathrop, Jack~H. Lutz, and Scott~M. Summers.
\newblock Strict self-assembly of discrete {S}ierpinski triangles.
\newblock {\em Theoretical Computer Science}, 410:384--405, 2009.

\bibitem{MaoLabReiSee00}
Chengde Mao, Thomas~H. LaBean, John~H. Relf, and Nadrian~C. Seeman.
\newblock Logical computation using algorithmic self-assembly of {D}{N}{A}
  triple-crossover molecules.
\newblock {\em Nature}, 407(6803):493--6, 2000.

\bibitem{PatitzSurvey}
Matthew~J. Patitz.
\newblock An introduction to tile-based self-assembly and a survey of recent
  results.
\newblock {\em Natural Computing}, 13(2):195--224, 2014.

\bibitem{SingleNegative}
Matthew~J. Patitz, Robert~T. Schweller, and Scott~M. Summers.
\newblock Exact shapes and turing universality at temperature 1 with a single
  negative glue.
\newblock In {\em Proceedings of the 17th international conference on DNA
  computing and molecular programming}, DNA'11, pages 175--189, Berlin,
  Heidelberg, 2011. Springer-Verlag.

\bibitem{pinheiro2011challenges}
Andre~V Pinheiro, Dongran Han, William~M Shih, and Hao Yan.
\newblock Challenges and opportunities for structural dna nanotechnology.
\newblock {\em Nature nanotechnology}, 6(12):763--772, 2011.

\bibitem{Roth01}
Paul W.~K. Rothemund.
\newblock {\em Theory and Experiments in Algorithmic Self-Assembly}.
\newblock PhD thesis, University of Southern California, December 2001.

\bibitem{RothTriangles}
Paul W.~K Rothemund, Nick Papadakis, and Erik Winfree.
\newblock Algorithmic self-assembly of dna sierpinski triangles.
\newblock {\em PLoS Biol}, 2(12):e424, 12 2004.

\bibitem{RotWin00}
Paul W.~K. Rothemund and Erik Winfree.
\newblock The program-size complexity of self-assembled squares (extended
  abstract).
\newblock In {\em STOC '00: Proceedings of the thirty-second annual ACM
  Symposium on Theory of Computing}, pages 459--468, Portland, Oregon, United
  States, 2000. ACM.

\bibitem{SchWin07}
Rebecca Schulman and Erik Winfree.
\newblock Synthesis of crystals with a programmable kinetic barrier to
  nucleation.
\newblock {\em Proceedings of the National Academy of Sciences},
  104(39):15236--15241, 2007.

\bibitem{Winf98}
Erik Winfree.
\newblock {\em Algorithmic Self-Assembly of {D}{N}{A}}.
\newblock PhD thesis, California Institute of Technology, June 1998.

\bibitem{WinLiuWenSee98}
Erik Winfree, Furong Liu, Lisa~A. Wenzler, and Nadrian~C. Seeman.
\newblock Design and self-assembly of two-dimensional {D}{N}{A} crystals.
\newblock {\em Nature}, 394(6693):539--44, 1998.

\end{thebibliography}

\ifabstract
\newpage
\appendix
\magicappendix
\fi

\end{document}
